\documentclass[letterpaper]{article} % DO NOT CHANGE THIS
\usepackage[preprint]{aaai2027}  
\usepackage[hyphens]{url}  % DO NOT CHANGE THIS
\usepackage{graphicx} % DO NOT CHANGE THIS
\usepackage{natbib}  % DO NOT CHANGE THIS AND DO NOT ADD ANY OPTIONS TO IT
\usepackage{caption} % DO NOT CHANGE THIS AND DO NOT ADD ANY OPTIONS TO IT
\usepackage{algorithm}
\usepackage{algorithmic}

\usepackage{newfloat}
\usepackage{listings}
\DeclareCaptionStyle{ruled}{labelfont=normalfont,labelsep=colon,strut=off} % DO NOT CHANGE THIS
\floatstyle{ruled}
\newfloat{listing}{tb}{lst}{}
\floatname{listing}{Listing}

\usepackage{booktabs}

\usepackage{amssymb} 
\usepackage{xcolor}
\usepackage{enumitem}  
\usepackage{tikz}  
\usepackage{tikz-cd}  
\usetikzlibrary{matrix,fit,decorations.pathmorphing,intersections,arrows,positioning,shapes.misc}  
\usepackage{mathtools} 
\usepackage{textcomp} 
\usepackage{stmaryrd}    

\newcommand{\MV}{{\small \textsf{MV}}}
\newcommand{\MVArg}{{\small \textsf{MVArg}}}
\newcommand{\residual}{\mathsf{residual}}

\newcommand{\eligible}[2]{\mathsf{eligible}(#1, #2)}
\newcommand{\hide}[1]{}
\newtheorem{definition}{Definition}    
\newtheorem{fact}{Fact}
\newtheorem{observation}{Observation} 
\newtheorem{theorem}{Theorem} 
\newtheorem{proposition}{Proposition} 
\newenvironment{proof}{%
  \par\noindent\textbf{Proof.}\quad% 行頭の文字（太字のProof.）
}{%
  \hfill$\square$\par% 末尾に証明終了マーク（白抜き四角）を置いて改行
}
\newtheorem{lemma}{Lemma} 
\newtheorem{example}{Example}

\title{Multi-Winner Voting with Argumentative Ballots}
\author {
    Ryuta Arisaka\textsuperscript{\rm 1}\corresponding 
    and
    Hirotaka Ono\textsuperscript{\rm 2}
}
\affiliations {
    \textsuperscript{\rm 1} Department of Informatics, Kyoto University, Kyoto, Japan\\
    \textsuperscript{\rm 2} Department of Mathematical Informatics, Nagoya University, Aichi, Japan\\
    ryutaarisaka@gmail.com, ono@i.nagoya-u.ac.jp}
\begin{document}

\maketitle

\begin{abstract} 
	We introduce {\it multi-winner voting with argumentative ballots} (MVArg) 
	and investigate theoretical properties.  
	As our {\it conceptual contribution}, we generalise 
	approval ballots to argumentative ballots, thereby  
	allowing voters 
	to express defeasible preferences over candidates.  
	We accordingly generalise voter cohesion and 
	justified representation axioms $\mathsf{JR}, \mathsf{PJR}$
	and $\mathsf{EJR}$. 
	As our {\it theoretical contribution}, we establish 
	several key results. {\it First}, MVArg   
	is strictly more expressive than multi-winner voting with approval ballots (MV). 
	{\it Second}, our notions of cohesion and justified representation 
	are conservative generalisations of their counterparts in MV. 
	{\it Third}, the MVArg counterpart of $\mathsf{JR}$
	can always be satisfied, whereas the counterparts of $\mathsf{PJR}$
	and $\mathsf{EJR}$ cannot always be. {\it Fourth}, although verifying 
	whether a winner set satisfies the 
	MVArg counterpart of $\mathsf{JR}$ is already $\mathsf{coNP}$-hard, such a winner set 
	can be constructed in polynomial time.   
	All definitions, propositions, auxiliary lemmas and theorems have been formalised and mechanically checked 
	in Lean 4. 
\end{abstract}

% Uncomment the following to link to your code, datasets, an extended version or similar.
% You must keep this block between (not within) the abstract and the main body of the paper.
% Make sure that you do not de-anonymize yourself with these links.
% \begin{links}
%     \link{Code}{https://aaai.org/example/code}
%     \link{Datasets}{https://aaai.org/example/datasets}
%     \link{Extended version}{https://aaai.org/example/extended-version}
% \end{links}

\section{1 Introduction}   \label{sec_introduction}  
From parliamentary elections and participatory budgeting to recommendation formation, collective decision making requires selecting 
multiple winners. 
A prominent model for such tasks is {\it approval-based 
multi-winner voting}, where each voter 
records a set of acceptable candidates and a voting rule aggregates the resulting 
approval profile into 
a fixed-size set of winners.   

A central concern in approval-based multi-winner voting is {\it proportional 
representation}. % is a key to 
Intuitively, a sufficiently large set of voters that agrees on 
sufficiently many candidates should see its preferred candidates represented 
in the winning set, in proportion to 
the group's degree of cohesion. This idea has led both to methods 
for selecting winner sets \cite{Lackner23} and to 
{\it justified representation axioms}, such as 
$\mathsf{JR}, \mathsf{PJR}$ and $\mathsf{EJR}$ \cite{Aziz17,Sanchez-Fernandez26}, 
which provide qualitative fairness criteria for evaluating such methods.  
These ideas have also influenced real participatory budgeting processes 
\cite{Peters20}. 

Notwithstanding this success, a recent social 
experiment suggests that voters perceive more expressive 
input formats as better reflecting their preferences than 
approval voting \cite{Yang24}. % than 
%approval voting. 
Full rankings are, of course, unrealistic both computationally and operationally 
in many settings \cite{Lang16}. 
Nevertheless, finding an alternative format that allows voters to express their 
preferences  
in more detail remains desirable. %\cite{Jain20}.  

With this motivation, we propose {\it argumentative ballots}  
for multi-winner voting. The underlying principle is 
that of 
{\it abstract argumentation frameworks} \cite{Dung95},  
originally introduced for reasoning about 
argumentative dialogues. % as a directed graph. % and for reasoning about 
%acceptability.  
An abstract argumentation framework models argumentative information 
as a directed graph of entities. It encodes {\it attack} 
and {\it defence} relationships among entities. 
As a very simple example, 
$c_3 \rightarrow c_2 \rightarrow c_1$ expresses that $c_3$ attacks $c_2$ which 
attacks $c_1$. Normally \cite{Dung95}, $c_3$ then defends $c_1$ and 
both $c_3$ and $c_1$ (but not $c_2$) are {\it acceptable}. 
In multi-winner voting with argumentative ballots, every voter receives a ballot 
on which the candidates are written. 
Instead of marking each candidate as approved or not approved 
as in an ordinary approval ballot, the voter 
can draw arrows between candidates to express  
attack claims. Approval by a set of voters 
is determined with respect to the combined directed graph 
obtained from their attack claims. 
%The protocol of multi-winner 
%voting with argumentative voting is 
%that candidate 

\[
\begin{tikzpicture}[
  baseline,
  box/.style={
    draw,
    rounded corners,
    inner sep=8pt
  }
]

\node[box] (B1) {
\begin{tikzcd}[row sep=small,column sep=small]
	c_3   & c_2 \arrow[r] & c_1  
\end{tikzcd}
};

\node[above left=2pt and 2pt of B1.north west, anchor=south west] 
	{\textbf{Alice's ballot}};

\node[box, right=20mm of B1] (B2) {
\begin{tikzcd}[row sep=small,column sep=small] 
	c_3 \arrow[r] & c_2 & c_1
\end{tikzcd}
};

\node[above left=2pt and 2pt of B2.north west, anchor=south west] 
	{\textbf{Bob's ballot}};

\end{tikzpicture}
\] 

With these two argumentative ballots, Alice approves both $c_3$ and $c_2$ 
but not $c_1$. Bob approves both $c_3$ and $c_1$ but not $c_2$.   
Collectively, Alice and Bob approve $c_3$ and $c_1$ but not $c_2$. 
The corresponding ordinary approval ballots 
would yield collective approval 
only of $c_3$ (from the individual 
approvals $\{c_3, c_2\}$ and $\{c_3,c_1\}$). 
So, argumentative ballots allow 
collective approvals to be more expressive: 
the same individual approval sets may give rise to different collective 
approvals depending on the underlying attack and defence relations. This additional 
expressiveness raises the central questions of this paper: how should voter cohesion 
and justified representation be defined when approvals are induced argumentatively, 
and to what extent are the computational advantages of approval-based multi-winner voting 
preserved in this setting? 

After technical preliminaries in Section 2, we make 
the following contributions. % in particular. 
\begin{enumerate}  
	\item We introduce {\it multi-winner voting with 
		argumentative ballots (\MVArg)} 
		(\textbf{Section 3.1}). 
	\item We show that 
		{\MVArg} is a conservative generalisation of   
		the standard approval-based multi-winner voting 
		({\MV}) \cite{Lackner23}. At the same time, {\MVArg} is strictly 
		more expressive: for fixed individual approval sets, 
		{\MV} determines a unique collective approval outcome, whereas 
		{\MVArg} may induce different 
		collective approvals (\textbf{Section 3.2}). 
	\item We define notions of voter cohesion for {\MVArg} 
		and establish several 
		of their theoretical properties (\textbf{Section 3.3}). 
	\item We formulate the {\MVArg} counterparts of 
		$\mathsf{JR}, \mathsf{PJR}$ and $\mathsf{EJR}$ 
		(\textbf{Section 4.1}).
	\item We show 
		that they are conservative 
		generalisations of $\mathsf{JR}, \mathsf{PJR}$ and $\mathsf{EJR}$ 
  (\textbf{Section 4.2}).   
		A winner set providing (the fairness defined in) the 
		$\MVArg$'s 
		counterpart of ${\mathsf{JR}}$ always exists.  
		For the $\MVArg$ counterparts of 
		$\mathsf{PJR}$ and $\mathsf{EJR}$,  
		we identify additional cohesion constraints that 
		restore existence (\textbf{Section 4.3}).   
		Verifying whether a given winner set provides 
		(the fairness defined in) the $\MVArg$ counterpart of 
		$\mathsf{JR}$ is already $\mathsf{coNP}$-hard, even though 
		such a winner set can be constructed 
		in polynomial time (\textbf{Section 4.4}).  
\end{enumerate}  
Technically, we introduce a direct inductive covering argument 
for the existence theorems, which to the best of our knowledge 
is a novel proof strategy in the study of justified representation.

Together, these results show what remains valid, what changes, and what becomes 
computationally harder when approval ballots are replaced by argumentative ballots.  

A Lean 4 formalisation of all definitions, propositions (Propositions \ref{prop_approval_preserving_transformation} 
through \ref{prop_lifting}), auxiliary lemmas (Lemmas \ref{lem_collective_approval_gadgets} through 
\ref{lem_relationship_between_cohesion_argumentative_cohesion}) and theorems (Theorems \ref{thm_preservation_m_m_representations} 
through \ref{thm_polynomial_time_construction}), together with the accompanying Java implementation of 
$\mathsf{GreedyGrounded}$ for Theorem \ref{thm_polynomial_time_construction}, 
is provided as supporting material with the arXiv submission. 

\subsection{Related work}
{\it Approval-based multi-winner voting.}
\cite{Aziz17} introduced $\mathsf{JR}$ and $\mathsf{EJR}$ for approval-based 
multi-winner voting, and \cite{Sanchez-Fernandez26} introduced $\mathsf{PJR}$, which lies 
between them. These axioms formalise group fairness: sufficiently large groups of voters
that agree on sufficiently many candidates should be represented in the winning set.
In this classical setting, approvals are primitive and collective approvals are monotonic.
$\MVArg$ differs in that approvals are induced by voters' argumentative ballots and may
become non-monotonic when ballots are combined. Justified representation therefore needs 
to be reformulated for $\MVArg$. 

{\it Collective choice and participatory institutions.} 
In institutional settings 
such as participatory budgeting, 
a limited portfolio should provide representation to cohesive groups of 
citizens \cite{Peters20,Peters21,Brill23}. 
Approval-based models generally treat reported 
approvals as primitive. In policy decisions, however, alternatives   
may be evaluated relationally: one project may substitute 
for another \cite{Jain20}, undermine its rationale, or answer an objection to it. 
$\MVArg$ allows for retaining such relations. % during aggregation.  
While our contribution remains axiomatic and computational, 
$\MVArg$ offers a foundation for studying representation when 
collective evaluations depend on explicitly reported reasons.

{\it Structured and conditional preferences.}
$\MVArg$ is also related to voting in combinatorial domains \cite{Lang16} and conditional
preference languages such as CP-nets \cite{Boutilier04}, where outcomes are structured 
and preferences may depend on relations among issues or variables. The source of these dependences is 
different in $\MVArg$: dependencies arise from attack and defence relations
inside argumentative ballots, and candidate approvals are induced by acceptability
semantics. Specifically, instead of representing preferences over bundles 
of candidates, $\MVArg$ represents justificatory 
relations among individually selectable candidates. 
Accepting one candidate may undermine or defend another. 
Our concern is therefore not preference optimisation 
over structured outcomes, but proportional representation 
when the acceptability of candidates is reason-dependent.

{\it Argumentation and voting.}
Abstract argumentation frameworks were introduced by \cite{Dung95} as directed graphs of 
arguments and attacks, together with semantics determining acceptability. Several works 
combine argumentation and voting. \cite{Leite11} incorporate positive and negative votes
into argumentation frameworks; \cite{Awad17} study judgment aggregation over argument
labels in a fixed framework; and \cite{Bernreiter24} 
combine approval voting and abstract
argumentation to select representative acceptable sets of arguments. These works largely assume 
a fixed underlying argumentation framework and primitive votes or approvals over its
arguments. 
More recently, \cite{Susami26} have explicitly assumed 
local abstract argumentation frameworks and one global 
abstract argumentation framework. Still, their work 
does not materialise collective approvals and 
its objective remains to filter acceptability semantics. 
In contrast, $\MVArg$ uses 
an argumentation framework as each voter's ballot: approvals are derived from those ballots, and can therefore become 
non-monotonic under ballot combination. Rather than  
aggregating evaluations of arguments within a fixed framework, 
we generalise approval-based multi-winner voting by allowing 
argumentation frameworks as ballots.

 \begin{figure*}[t] 	\includegraphics[width=\textwidth]{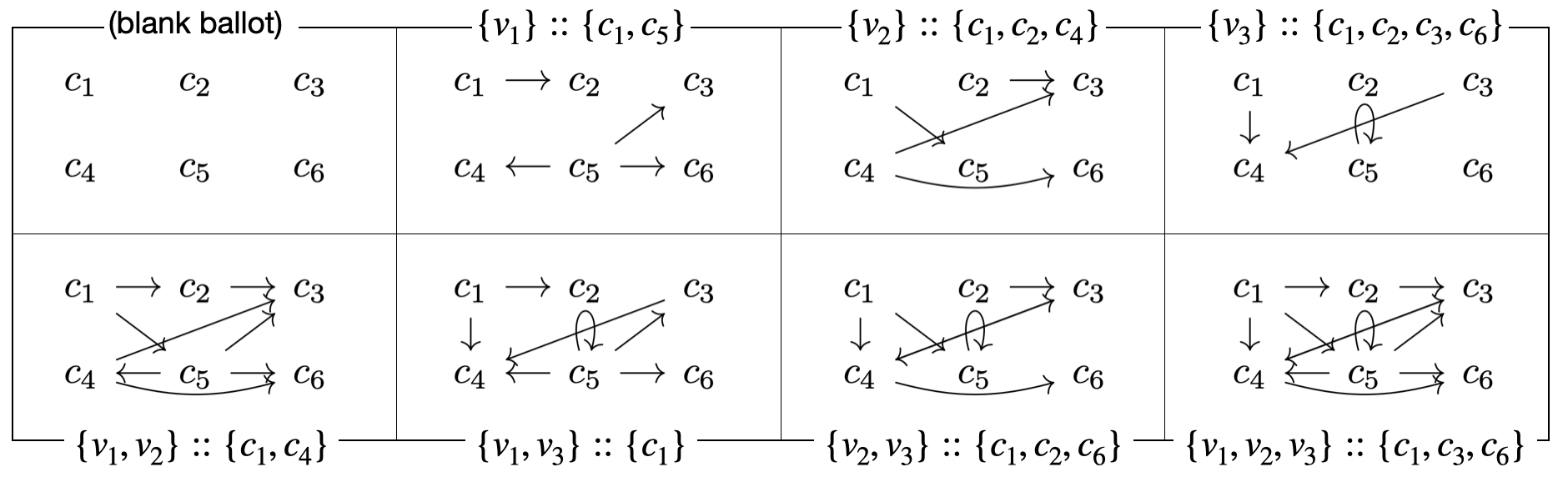} 
	 \caption{\textbf{The first row from left to right}: a blank ballot; 
	 $v_1$'s ballot approving $c_1$ and $c_5$; 
	 $v_2$'s ballot approving $c_1,c_2$ and $c_4$; and 
	 $v_3$'s ballot approving $c_1,c_2,c_3$ and $c_6$. 
	  \textbf{The second row from left to right}:    
	  $\{v_1,v_2\}$'s collective approval of 
	  $c_1$ and $c_4$; 
	  $\{v_1,v_3\}$'s collective approval of 
	  $c_1$; $\{v_2,v_3\}$'s collective approval 
	  of $c_1,c_2$ and $c_6$; 
	  and $\{v_1,v_2,v_3\}$'s collective approval of 
	  $c_1,c_3$ and $c_6$.} 
		 \label{fig_first}  
\end{figure*}

\section{2 Technical Preliminaries}  \label{sec_technical_preliminaries} 
\textbf{Multi-winner voting and justified representation.} 
Let $V \equiv \{v_1, \ldots, v_n\}$ be a set of voters, 
let $C \equiv \{c_1, \ldots, c_m\}$ be a set of candidates, 
let $A \equiv (A_{v_1}, \ldots, A_{v_n})$---where 
$A_{v_i}$ is a subset of $C$---be the voters' approval profile, 
and let $k$ be the number of winners not greater than $|C|$. 
Then, $(V, C, A, k)$ is a {\it multi-winner voting profile}. 
In this paper, $\mathfrak{M}^{\mathfrak{M}}$ denotes the set of all multi-winner voting profiles.

Let $A^{\forall}_{V'}$ denote $\bigcap_{v \in V'} A_v$. 
For any subset $V'$ of $V$ and any candidate $c \in C$, $V'$ {\it approves} $c$ iff (if and only if) 
$c \in A^{\forall}_{V'}$. So, $A^{\forall}_{V'}$ 
is the set of all candidates $V'$ approves. 
For any subset $V'$ of $V$ 
and any positive integer $l$, $V'$ is {\it $l$-cohesive} iff    
(1) 
$|V'| \geq l \cdot (|V|/k)$, and (2) $|A^{\forall}_{V'}| \geq l$.\footnote{
Here, $(|V|/k)$ is the {\it Hare Quota} for a subset of $V$ to rightly demand 1 winner 
allocation to it.} 

For any winner set $W$, which is a subset of $C$ with $|W| = k$, $W$ 
satisfies, or, following the terminology in \cite{Aziz17}, {\it provides}:
\begin{itemize} 
	\item $\mathsf{JR}$ iff, for any subset $V'$ of $V$, 
		if $V'$ is $1$-cohesive, then 
		there is some $v \in V'$ such that $|W \cap A^{\forall}_{\{v\}}| 
		\geq 1$.  
	\item $\mathsf{PJR}$ 
		iff, for any subset $V'$ of $V$, 
		if $V'$ is $l$-cohesive, then $|W \cap \bigcup_{v \in V'} A^{\forall}_{\{v\}}| \geq l$.  
	\item $\mathsf{EJR}$ 
		iff, for any subset $V'$ of $V$, 
		if $V'$ is $l$-cohesive, then there is some $v \in V'$ 
		such that $|W \cap A_{\{v\}}^{\forall}| \geq l$.  
\end{itemize} 
These are justified representation axioms.  
For $x \in \mathfrak{M}^{\mathfrak{M}}$, 
let $\nu$ be a member of $\{JR, PJR, EJR\}$, 
and let $\nu(x)$ be the set of all subsets $W$ of $C$ such that $W$ provides: 
$\mathsf{JR}$ if $\nu$ is $JR$; $\mathsf{PJR}$ if $\nu$ is $PJR$; and $\mathsf{EJR}$ if $\nu$ is $EJR$. Then: 
\begin{itemize} 
	\item $\emptyset \subset EJR(x) \subseteq 
PJR(x) \subseteq JR(x)$. 
	\item Some member of $\nu(x)$ is polynomial-time computable.  
	\item Verification of membership of $W$ in $EJR(x)$, 
as well as that in $PJR(x)$, is coNP-complete.  
	\item Verification of membership of $W$ in $JR(x)$ 
		is polynomial-time decidable. 
\end{itemize}   

\noindent \textbf{Grounded acceptance in abstract argumentation.} 
Let $(Arg, R)$ be an abstract argumentation framework \cite{Dung95} where 
$Arg$ is a finite set of entities (called arguments) and $R$ is a subset of $Arg \times Arg$. 
$a$ {\it attacks} $a'$ iff $(a, a') \in R$. 
Let $Arg'$ be a subset of $Arg$. $Arg'$ is {\it conflict-free} 
iff, for any $a_1 \in Arg'$ and $a_2 \in Arg'$, 
		$(a_1, a_2) \not\in R$. % holds. 
$Arg'$ {\it defends} an argument $a \in Arg$ iff,  
for any $a' \in  Arg$, if $(a', a) \in R$, then 
there is some $a'' \in Arg'$ such that $(a'', a') \in R$. 
$Arg'$ is {\it admissible} iff $Arg'$ is conflict-free and, 
for any $a \in Arg'$, $Arg'$ defends $a$.   
The 
{\it grounded extension} is then defined as a set of arguments 
$Arg'$ satisfying the following conditions. %strongly acceptable. 
\begin{itemize} 
	\item $Arg'$ is admissible and, 
		for any argument $a \in Arg$, if $Arg'$ defends $a$, then $a$ is in $Arg'$. 
	\item $Arg'$ is the least such set of arguments. 
\end{itemize} 
The following facts hold for every $(Arg, R)$ \cite{Dung95}. 
\begin{itemize}  
		\item Let $\mathcal{P}(...)$ denote the power set of $...$, the grounded extension is polynomial-time computable as the least 
		fixpoint of $F: \mathcal{P}(Arg) \rightarrow \mathcal{P}(Arg)$ 
		defined as: $F(Arg') = \{a \in Arg \mid Arg' \text{ defends } a\}$. 
	\item Every $(Arg, R)$ has a unique grounded extension. 
\end{itemize}  
Other types of extensions exist \cite{Baroni07}, but in 
general they encode more disputable acceptance, 
are not unique and can be 
computationally more demanding. 

\section{3 Multi-Winner Voting with Argumentative Ballots}  \label{sec_argumentative_voting} 
\subsection{3.1 Conceptualisation} 
As briefly described in Section 1, 
argumentative ballots allow voters to 
report their attack claims.  
\begin{example}[City projects]. \label{ex_city_planning}\rm
A city receives daily complaints about congestion on the main road between Districts A and B. It has proposed six projects to mitigate congestion:
\[
\begin{array}{ll}
c_1: & \text{build a new subway line},\\
c_2: & \text{widen the main road},\\
c_3: & \text{build a new tram line},\\
c_4: & \text{build high-frequency bus lanes},\\
c_5: & \text{introduce congestion pricing},\\
c_6: & \text{build a large parking garage}.
\end{array}
\]
Three projects will be selected. Voters $v_1,v_2,v_3$ cast argumentative ballots over 
these candidates. A blank ballot, shown in the first column of the first row of 
	Figure \ref{fig_first}, contains all candidates and no attacks; by default, they are all approved.  
	By drawing arrows, voters supply their attack claims. 
	$c_i \rightarrow c_j$ on a ballot 
	specifically means that acceptance of 
	$c_i$ undermines the case for $c_j$. 
	The ballots by $v_1$, $v_2$ and $v_3$ 
	are in the second, third and fourth columns of the first row. 
	\begin{description} 
		\item[$v_1$: ] Approves $c_1$ and $c_5$ with the 
			attack claims:  {\ }
			\begin{itemize} 
				\item ${c_1 \rightarrow c_2}$:   
					``A new subway line 
					undermines the case 
					for widening main road.'' 
		\item ${c_5 \rightarrow \{c_3, c_4, c_6\}}$: 
			``Congestion pricing undermines 
					the case for 
					a new tram 
					line, 
					high-frequency bus lanes and a large 
					parking garage.'' 
			\end{itemize} 
		\item[$v_2$: ] Approves $c_1, c_2$ and $c_4$ with 
			the attack claims: 
			{\ }
			\begin{itemize} 
				\item $c_1 \rightarrow c_5$, $c_2 \rightarrow c_3$ 
					and 
					$c_4 \rightarrow \{c_3, c_6\}$: 
					(All similarly) 
			\end{itemize}  
		\item[$v_3$: ] Approves $c_1, c_2, c_3, c_6$ with 
			the attack claims: {\ } 
			\begin{itemize} 
				\item $c_1 \rightarrow c_4$, $c_3 \rightarrow c_4$: (Similarly) 
				\item $c_5 \rightarrow c_5$: 
					``The case for congestion 
					pricing 
					is unconditionally 
					undermined.''
			\end{itemize}  
	       	\end{description}  
	Each voter individually approves the candidates in the grounded extension of the
	corresponding argumentation framework. For a set of voters, 
	the attack claims in their ballots are pooled together, %their 
%	defeasible preferences are
%	combined, 
	and the group 
	collectively approves the candidates in the grounded extension of the combined   
	framework, as shown in the second row.

		This produces non-monotonic collective approvals. For example, voter 
		$v_1$ alone does not approve $c_4$, because $v_1$ approves $c_5$ and
		$c_5$ attacks $c_4$. However, in the combined ballot of $\{v_1,v_2\}$, 
		$c_1$ is approved and $c_1$ attacks $c_5$. 
		The attack claim $c_5 
		\rightarrow c_4$ is therefore neutralised, 
		and $c_4$ becomes approved by $\{v_1,v_2\}$. 
		In other words, adding a voter's ballot can 
		turn a previously non-approved 
		candidate into an approved one. This is impossible in standard approval 
		ballots, where approvals are primitive and do not change when ballots 
		are combined.
		\hfill$\spadesuit$ 
\end{example} 
We now formalise this intuition  
by defining {\it argumentative voting profiles} and {\it approvals}. 
\begin{definition}[Argumentative voting profile] \rm    
	Let $V \equiv \{v_1, \ldots, v_n\}$ be a set of voters, 
	let $C \equiv \{c_1, \ldots, c_m\}$ be a set of candidates, 
	and let $B \equiv (B_{\{v_1\}}, \ldots, B_{\{v_n\}})$---where $B_{\{v_i\}}$ 
	is an abstract argumentation framework $(C, R_{v_i})$---be 
	voters' 
	argumentative ballots, and let $k$ be the number of winners not greater than 
	$|C|$. Then, $(V, C, B, k)$ is an {\it argumentative voting profile}. 
	$\mathfrak{M}^{\mathfrak{A}}$ denotes the set of all argumentative voting profiles.
	\hfill$\clubsuit$ 
\end{definition}   
{\it Example \ref{ex_city_planning}  
	gives rise to an argumentative voting profile 
	$(\{v_1, v_2,v_3\}, \{c_1, \ldots, c_6\}, B, 3)$ where $B$ is as given in the second, third and fourth columns of the first row 
	in Figure \ref{fig_first}.} 

Unless stated otherwise, we fix an arbitrary 
argumentative voting profile 
$(V, C, B, k)$. When we write $V$, it is the first component 
of the argumentative voting profile, and similarly for all other symbols. 
\begin{definition}[Approvals in argumentative voting profile]\rm 
	For any subset $V'$ of $V$ and any candidate $c \in C$,   
	$V'$ {\it approves} $c$ iff $c$  
	is in the grounded extension of $(C, \bigcup_{v \in V'}R_{v})$. 
	$B^{\forall}_{V'}$ denotes the set of all candidates approved by $V'$.  
	\hfill$\clubsuit$ 
\end{definition} 
{\it In Example 1, $B^{\forall}_{\{v_1\}}$ is $\{c_1, c_5\}$, $B^{\forall}_{\{v_2\}}$ 
is $\{c_1, c_2, c_4\}$, 
$B^{\forall}_{\{v_1, v_2\}}$ is $\{c_1, c_4\}$, $B^{\forall}_{\{v_1, v_2, v_3\}}$ is 
$\{c_1, c_3, c_6\}$ with the remaining collective approvals 
shown in Figure 1.} 

\subsection{3.2 Relationship between $\mathfrak{M}^{\mathfrak{M}}$ and $\mathfrak{M}^{\mathfrak{A}}$} 
A multi-winner voting profile is encoded  
into $\mathfrak{M}^{\mathfrak{A}}$ preserving approvals. 

\begin{proposition}[Approval-preserving transformation] \label{prop_approval_preserving_transformation} 
	There is some function $\tau: \mathfrak{M}^{\mathfrak{M}} \rightarrow \mathfrak{M}^{\mathfrak{A}}$ 
	such that, for any $(V, C, A, k) \in \mathfrak{M}^{\mathfrak{M}}$, $\tau(V, C, A, k) = 
	(V, C, B, k)$ (for some $B$)
	and that $A^{\forall}_{V'} = B^{\forall}_{V'}$ for any non-empty subset $V'$ of $V$. 
\end{proposition}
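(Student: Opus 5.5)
Set $R_v = \{(c,c) \mid c \in C \setminus A_v\}$ for each voter $v$, i.e.\ each voter puts a self-attack on every candidate she does not approve and draws no other arrows. Let $\tau(V,C,A,k) = (V,C,B,k)$ with $B_{\{v\}} = (C, R_v)$; this is clearly a function into $\mathfrak{M}^{\mathfrak{A}}$ preserving $V$, $C$ and $k$. It then remains to show that $B^{\forall}_{V'} = A^{\forall}_{V'}$ for every non-empty $V' \subseteq V$.

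Fix a non-empty $V'$. The pooled relation is $R := \bigcup_{v\in V'} R_v = \{(c,c) \mid c \in C\setminus A_v \text{ for some } v \in V'\} = \{(c,c) \mid c \notin A^{\forall}_{V'}\}$. The last equality uses non-emptiness only trivially: $c \notin \bigcap_{v\in V'} A_v$ iff some $v \in V'$ has $c \notin A_v$. (For $V' = \emptyset$ the intersection would be ill-defined or equal to $C$, which is why the statement excludes it.) So in $(C,R)$ the only attacks are self-loops, placed exactly on the candidates outside $A^{\forall}_{V'}$.

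Next compute the grounded extension as the least fixpoint of $F$, using the fact from the preliminaries. A candidate $c \in A^{\forall}_{V'}$ has no attackers, so every set defends it; hence $A^{\forall}_{V'} \subseteq F(\emptyset)$. A candidate $c \notin A^{\forall}_{V'}$ is attacked only by itself. A set $S$ defends $c$ iff $S$ contains an attacker of $c$, i.e.\ iff $c \in S$. Thus $F(S) = A^{\forall}_{V'} \cup (S \setminus A^{\forall}_{V'})$. Starting from $\emptyset$, we get $F(\emptyset) = A^{\forall}_{V'}$ and $F(A^{\forall}_{V'}) = A^{\forall}_{V'}$, so the least fixpoint is $A^{\forall}_{V'}$.

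Alternatively, one can check the paper's definition directly. $A^{\forall}_{V'}$ is conflict-free, since it contains no self-attackers, and it defends all its members vacuously. It contains every argument it defends, because it contains no self-attacked argument. Any set $S$ satisfying the completeness condition must defend the unattacked arguments and hence contain $A^{\forall}_{V'}$, which gives leastness. Therefore $B^{\forall}_{V'} = A^{\forall}_{V'}$.

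There is no real obstacle here. The only care needed is the fixpoint/leastness bookkeeping and making sure the self-loop encoding, not some inter-candidate attack, is used: inter-candidate attacks would interact non-monotonically when ballots are pooled, whereas self-loops do not.
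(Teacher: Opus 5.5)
Your proof is correct and uses the same construction as the paper: the self-loop instantiation $R_v = \{(c,c) \mid c \notin A_v\}$, which the paper gives after stating a general sufficient condition on $\kappa$. The paper only asserts that this instance works, whereas you verify it, both through the fixpoint identity $F(S) = A^{\forall}_{V'} \cup S$ and directly against the paper's admissibility/least-set definition, so your version fills in the check the paper omits.
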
   
$\tau$ can be instantiated in many ways, but  $(B_{\{v_1\}}, \ldots, B_{\{v_{|V|}\}})$ where 
$B_{\{v_i\}}$ is $(C, \{(c, c) \mid c \not\in A_{v_i}\})$ is one of the simplest. 
In this instantiation, an unconditional non-approval on the standard approval ballot 
is a self-loop around the candidate. 

{\it Continuing Example 1, let $(\{v_1, v_2, v_3\},\linebreak \{c_1, \ldots, c_6\}, A, 3)$ be such that, 
for any $v \in \{v_1, v_2, v_3\}$, $A_{\{v\}}^{\forall} = B_{\{v\}}^{\forall}$. 
The above concrete $\tau$ returns $(\{v_1, v_2, v_3\}, \{c_1, \ldots, c_6\}, B', 3)$ 
where $B'$ comprises the following argumentative ballots.  \\  
\[
\begin{tikzpicture}[
  baseline,
  box/.style={
    draw,
    rounded corners,
    inner sep=8pt
  }
]

\node[box] (B1) {
\begin{tikzcd}[row sep=small,column sep=small]
	c_1  & c_{2} \arrow[loop right]  \\
	c_3 \arrow[loop right] & c_4 \arrow[loop right] \\ 
	c_5  & c_6 \arrow[loop right] 
\end{tikzcd}
};

\node[above left=2pt and 2pt of B1.north west, anchor=south west] 
	{\(\boldsymbol{B'_{\{v_1\}}}\)};

\node[box, right=2mm of B1] (B2) {
\begin{tikzcd}[row sep=small,column sep=small]
  c_1  & c_{2} \\ 
	c_{3} \arrow[loop right] & c_{4} \\
	c_5 \arrow[loop right] & c_6 \arrow[loop right]
\end{tikzcd}
};

\node[above left=2pt and 2pt of B2.north west, anchor=south west] 
	{\(\boldsymbol{B'_{\{v_2\}}}\)};

\node[box, right=2mm of B2] (B12) {
\begin{tikzcd}[row sep=small,column sep=small]
  c_1  & c_{2}  \\
  c_{3} & c_{4} \arrow[loop right] \\
	c_5 \arrow[loop right] & c_6 
\end{tikzcd}
};

\node[above left=2pt and 2pt of B12.north west, anchor=south west] 
	{\(\boldsymbol{B'_{\{v_3\}}}\)};

\end{tikzpicture}
\]

{\ }\\ 
It holds that $B'^{\forall}_{\{v\}} = A^{\forall}_{\{v\}}$. 
} 

This approval-preserving transformation offers 
an argumentative interpretation of 
ballot combination in $\mathfrak{M}^{\mathfrak{M}}$: 
collective approval by a set of voters is the result of pooling 
the attack claims in their ballots and 
obtaining approved candidates in the resulting framework. 
Aggregation of attacks by union reproduces exactly the collective 
approvals in $\mathfrak{M}^{\mathfrak{M}}$.  

As for monotonicity of collective approvals in 
$\mathfrak{M}^{\mathfrak{M}}$: 
\begin{fact}[Approval monotonicity in $\mathfrak{M}^{\mathfrak{M}}$]\label{fact_approval_monotonicity} 
	Let $(V, C, A, k)$ be a multi-winner voting profile 
	in $\mathfrak{M}^{\mathfrak{M}}$. 
	If $V'_1$ and $V'_2$ are non-empty subsets of $V$ and if $V'_1 \subseteq V'_2$, 
		then $A^{\forall}_{V'_2} \subseteq A^{\forall}_{V'_1}$. 
\end{fact}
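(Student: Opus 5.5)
The statement is Fact~\ref{fact_approval_monotonicity}, and it follows directly from the definition $A^{\forall}_{V'} = \bigcap_{v \in V'} A_v$. Fix non-empty $V'_1 \subseteq V'_2 \subseteq V$. The plan is an element-wise inclusion argument: take an arbitrary $c \in A^{\forall}_{V'_2}$ and show $c \in A^{\forall}_{V'_1}$.

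The steps are as follows. First, unfold $c \in A^{\forall}_{V'_2}$ to obtain $c \in A_v$ for every $v \in V'_2$. Next, take any $v \in V'_1$. Since $V'_1 \subseteq V'_2$, we also have $v \in V'_2$, and therefore $c \in A_v$. As $v$ was arbitrary in $V'_1$, it follows that $c \in \bigcap_{v \in V'_1} A_v = A^{\forall}_{V'_1}$. Equivalently, an intersection taken over a larger index family can only shrink.

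There is no real obstacle. The only subtle point is the convention for the empty intersection, which is why the statement requires $V'_1$ and $V'_2$ to be non-empty. Under that hypothesis the intersections are ordinary intersections of subsets of $C$, and the argument above needs no further work. In the Lean formalisation, this would be a one-line application of the monotonicity of indexed intersections with respect to the index set (for example, \texttt{Set.biInter\_subset\_biInter\_left}).

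This fact is stated as the contrast point for the argumentative setting. Example~\ref{ex_city_planning} already shows that the analogous inclusion $B^{\forall}_{V'_2} \subseteq B^{\forall}_{V'_1}$ fails in $\mathfrak{M}^{\mathfrak{A}}$, since $c_4 \in B^{\forall}_{\{v_1,v_2\}}$ but $c_4 \notin B^{\forall}_{\{v_1\}}$. No analogue of the present argument is therefore expected to hold there.
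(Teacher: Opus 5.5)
Your proof is correct and matches the paper's approach. The paper states this Fact without proof, treating it as immediate from $A^{\forall}_{V'} = \bigcap_{v \in V'} A_v$: intersecting over a larger index set can only shrink the result, which is exactly your element-wise argument.
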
       
$\tau$ preserves it in $\mathfrak{M}^{\mathfrak{A}}$, 
but, as evidenced in Figure \ref{fig_first}, 
it is not the universal property of 
$\mathfrak{M}^{\mathfrak{A}}$. In fact, for some argumentative voting profile, 
approval sets can strictly grow in size {\it e.g. $B^{\forall}_{\{v_1,v_3\}} 
\subset B^{\forall}_{\{v_1,v_2,v_3\}}$ in Figure 1}, for some, 
they can shrink in size, and for some other, 
the size of approval sets stays the same but their members change {\it e.g. 
$B^{\forall}_{\{v_1\}} \not= B^{\forall}_{\{v_1,v_2\}}$ but 
$|B^{\forall}_{\{v_1\}}| = |B^{\forall}_{\{v_1,v_2\}}| = 2$}.  

\hide{ 
Concrete instances for each of them are constructible through 
gadgets for collective approval and those for non-approval: 
\begin{observation}[Approval and non-approval gadgets]  \label{ex_collective_approval_and_disproval_gadgets} \rm 
	For collective approval, 
	let $V'_x$ be a subset of $V$.  
	Suppose  $V'$ is $\{v_1, \ldots, v_{j}\}$.  
	To express approval of 
	a candidate $c_{V'}$ only if $V' \subseteq V'_x$, 
	we can have the following gadget.  
	\begin{itemize} 
		\item $c_0$ is approved by all members of $V'$. 
		\item The $j$ candidates $c_{v_1}, \ldots, c_{v_j}$ 
	are self-looping nodes in all $B_{\{v_i\}}$ ($1 \leq i \leq j$).  Each of them attacks $c_{V'}$.
		\item in each $B_{\{v_i\}}$, there is an attack from  
	$c_0$ to $c_{v_i}$; there is no attack from $c_0$ to $c_{v_o}$ 
	for $v_o \not= v_i$ and $1 \leq o \leq j$. 
	\end{itemize} 
		%per voter. 
	$B_{\{v_1\}}$ is shown below. 
	\[
\begin{tikzpicture}[
  baseline,
  box/.style={
    draw,
    rounded corners,
    inner sep=8pt
  }
]

\node[box] (B1) {
\begin{tikzcd}[row sep=small,column sep=small]
	& c_{V'} \\ 
	c_{v_1} \arrow[loop left] \arrow[ur] & \cdots \arrow[u] & c_{v_j} 
	\arrow[loop right] \arrow[ul] \\ 
	& 	c_0 \arrow[ul]
\end{tikzcd}
};

\node[above left=2pt and 2pt of B1.north west, anchor=south west] 
	{\(\boldsymbol{B_{\{v_1\}}}\)};

\end{tikzpicture}
\] 
	In this gadget, if $c_{V'}$ is approved by $V'$, it has to be 
	that $V' \subseteq V'_x$. 
	 
	Let us then consider a gadget for non-approval. 
	To express that a candidate $c_1$ is not approved 
	by $V'_x$ only if $V' \cap V'_x \not= \emptyset$,  
	we can 
have the following gadget. 

\[
\begin{tikzpicture}[
  baseline,
  box/.style={
    draw,
    rounded corners,
    inner sep=8pt
  }
]

\node[box] (B1) {
\begin{tikzcd}[row sep=small,column sep=small]
	 c_1 \\  
	 c_0 \arrow[loop left] \arrow[u]
\end{tikzcd}
};

\node[above left=2pt and 2pt of B1.north west, anchor=south west] 
	{\(\boldsymbol{B_{\{v'\}}}\)};

\node[box, right=20mm of B1] (B2) {
\begin{tikzcd}[row sep=small,column sep=small]
	 c_{1} \\ 
	c_{0} \arrow[loop left] %& \cdots \arrow[u] & c_{v_j} 
%	\arrow[loop right] \arrow[ul] \\ 
%	& 	c_0 \arrow[ul]
\end{tikzcd}
};

\node[above left=2pt and 2pt of B2.north west, anchor=south west] 
	{\(\boldsymbol{B_{\{v''\}}}\)};

\end{tikzpicture}
\] 
In this gadget, for any $v \in V$, $B_{\{v\}}$ includes 
the self looping edge around $c_0$. 
For any $v' \in V'$, $B_{\{v'\}}$ contains the edge from $c_0$ to $c_1$, 
and for any $v'' \not\in V'$, $B_{\{v''\}}$ does not contain the edge. 
Hence, $c_1 \not\in B_{V'_x}^{\forall}$ only if $V'_x$ does not contain 
any member of $V'$. \hfill$\spadesuit$ 

\end{observation} 

}

\begin{proposition}[No bijection]  \label{prop_2}  
There is no function  
$\tau': \mathfrak{M}^{\mathfrak{A}} \rightarrow \mathfrak{M}^{\mathfrak{M}}$ 
	such that, for any $(V, C, B, k) \in \mathfrak{M}^{\mathfrak{A}}$, 
	$\tau'(V, C, B, k)$ is $(V, C, A, k)$ (for some $A$) 
	and that $B^{\forall}_{V'} = A^{\forall}_{V'}$ for any non-empty subset $V'$ of $V$. 
\end{proposition}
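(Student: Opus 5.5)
The plan is to argue by contradiction from Fact \ref{fact_approval_monotonicity}. Suppose such a $\tau'$ exists. The image $\tau'(V, C, B, k) = (V, C, A, k)$ is a genuine multi-winner voting profile in $\mathfrak{M}^{\mathfrak{M}}$. By Fact \ref{fact_approval_monotonicity}, for any non-empty $V'_1 \subseteq V'_2 \subseteq V$ it therefore satisfies $A^{\forall}_{V'_2} \subseteq A^{\forall}_{V'_1}$. Since $\tau'$ is required to preserve all collective approvals, we would get $B^{\forall}_{V'_2} \subseteq B^{\forall}_{V'_1}$. Hence it suffices to exhibit a single argumentative voting profile in which collective approval is \emph{not} antitone in the voter set.

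For this witness I will use the smallest example from the introduction. Take $V = \{v_1, v_2\}$, $C = \{c_1, c_2, c_3\}$ and $k = 1$, with ballots $R_{v_1} = \{(c_2, c_1)\}$ and $R_{v_2} = \{(c_3, c_2)\}$.

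The grounded extensions are then computed directly via the least fixpoint of $F$, as recalled in Section 2.
\begin{itemize}
\item For $(C, R_{v_1})$: the unattacked arguments are $c_2$ and $c_3$, and $c_1$ is attacked by $c_2$, which is undefeated. So $B^{\forall}_{\{v_1\}} = \{c_2, c_3\}$ and $c_1 \notin B^{\forall}_{\{v_1\}}$.
\item For the union $(C, \{(c_2,c_1),(c_3,c_2)\})$: $c_3$ is unattacked and defends $c_1$ against $c_2$. So $B^{\forall}_{\{v_1,v_2\}} = \{c_1, c_3\}$.
\end{itemize}
Thus $c_1 \in B^{\forall}_{\{v_1,v_2\}}$ while $c_1 \notin B^{\forall}_{\{v_1\}}$, although $\{v_1\} \subseteq \{v_1,v_2\}$. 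This contradicts the inclusion derived above, so no such $\tau'$ exists. Alternatively, the pair $\{v_1\}$, $\{v_1,v_2\}$ of Example \ref{ex_city_planning} with candidate $c_4$ serves equally well.

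No step is really hard. The only point needing care is the grounded-extension computation, in particular checking that $c_1$ is not defended in $v_1$'s framework alone. There the attacker $c_2$ is itself unattacked, so no set defends $c_1$, and $c_1$ never enters the fixpoint iteration $\emptyset, F(\emptyset), F^2(\emptyset), \ldots$.
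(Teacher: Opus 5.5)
Your proof is correct and follows the same route as the paper. The image of $\tau'$ is a profile in $\mathfrak{M}^{\mathfrak{M}}$ and so satisfies approval monotonicity (Fact 1), which fails for some argumentative profile; the paper only points to the collective approval gadgets of Lemma 1 for the witness, whereas your explicit three-candidate profile (the Alice/Bob example of the introduction) and its grounded-extension computation make the argument concrete and self-contained.
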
  
Proposition \ref{prop_2}
clarifies the sense in which $\mathfrak{M}^{\mathfrak{A}}$ 
is more expressive than $\mathfrak{M}^{\mathfrak{M}}$. 
In $\mathfrak{M}^{\mathfrak{M}}$, once the individual approval 
sets are fixed, the collective approval of every voter set is 
fixed as their intersection. In $\mathfrak{M}^{\mathfrak{A}}$, 
by contrast, two voting profiles may induce the same 
individual approvals but different collective approvals, 
because attacks from different ballots can interact through 
defence after combination. 
Argumentative ballots therefore retain relational information 
that is lost in ordinary ballots. Propositions 
\ref{prop_approval_preserving_transformation} and \ref{prop_2} 
together show 
that $\mathfrak{M}^{\mathfrak{A}}$ reproduces every 
multi-winner voting profile while also admitting 
collective-approval patterns that are not realised in 
$\mathfrak{M}^{\mathfrak{M}}$. 

\subsection{3.3 Argumentative cohesion and $(l, m, n)$-cohesion} 
We now formulate the notions of cohesion in 
$\mathcal{M}^{\mathfrak{A}}$.  
In ordinary approval voting, cohesion is hereditary: if a 
set of voters collectively approves at least $l$ candidates, 
then every non-empty subset also does so. 
Argumentative approvals are non-monotonic, so this hereditary 
property may fail. 
We therefore measure cohesion 
by how widely approval of $l$ candidates persists 
across the set's subsets. 
We make the following observation about $l$-cohesion 
in $(V,C,A,k) \in \mathfrak{M}^{\mathfrak{M}}$: 
if $V' \subseteq V$ is $l$-cohesive, there are at least 
$2^{l \cdot (|V|/k)} - 1$ distinct subsets  
of $V'$ such that, for each $V''$ of them, $|A^{\forall}_{V''}| \geq l$.  
So, the key idea is to shift our focus to sufficiently agreeing subsets. 
With this insight, we define $l$-eligibility. %first. 

\begin{definition}[$l$-eligibility] \rm   
     Let $V'$ be a subset of $V$. $V'$ is {\it $l$-eligible} 
	iff $V'$ is non-empty and $|B_{V'}^{\forall}| \geq l$.  
	By $\eligible{l}{V'}$, we denote the set of all $l$-eligible  
	subsets of $V$', {\it i.e.} 
	$\{V'' \subseteq V' \mid V'' \text{ is $l$-eligible}\}$. 
	 \hfill$\clubsuit$ 
\end{definition} 
{\it In Example 1, $\{v_1\}$ is 1- and 2- eligible---or up to 2-eligible. 
$\{v_1,v_3\}$ is 1-eligible. $\{v_1,v_2,v_3\}$ 
is up to 3-eligible. Generally, $V'$ 
is up to $|B^{\forall}_{V'}|$-eligible. 
}

We now formulate {\it argumentative $l$-cohesion} based on the counts 
of eligible sets: 
\begin{definition}[Argumentative $l$-cohesion]\label{def_argumentative_l_cohesion} \rm 
	Let $V'$ be a subset of $V$, let $\mathcal{P}(V')$ 
	be the set of all subsets of $V'$, and 
	let $l$ be a positive integer.  
	$V'$ is {\it argumentative $l$-cohesive} iff 
	there exists a subset $V''$ of $V'$ such that 
	$V''$ is $l$-eligible and 
	$|\eligible{l}{V''}| \geq 2^{l \cdot |V|/k} - 1$. 
	\hfill$\clubsuit$ 
\end{definition}  
{\it  
  The number of winners ($=k$) is 3 in Example 1. 
  \begin{itemize} 
	  \item Any non-empty subset of $\{v_1,v_2,v_3\}$ is 
		  argumentative 1-cohesive.  
	  \item $\{v_1,v_2\}$, $\{v_2,v_3\}$ and $\{v_1,v_2,v_3\}$ are 
		  argumentative 2-cohesive. The others are not.  
  \end{itemize} 
}
The size requirement that a $l$-cohesive set 
contains at least $l \cdot |V|/k$ voters 
is implicit in Definition \ref{def_argumentative_l_cohesion}. 
\begin{proposition}[Implicit size requirement] \label{prop_implicit_size_requirement} 
	Let $V'$ be a non-empty subset of $V$ and let $l$ be a positive integer. 
	If $|\eligible{l}{V'}| \geq 2^{l \cdot |V|/k}-1$, then necessarily $|V'| \geq l \cdot |V|/k$. 
\end{proposition}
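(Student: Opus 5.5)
We argue by counting subsets. By definition, $\eligible{l}{V'}$ is a subset of the power set $\mathcal{P}(V')$. It contains only non-empty sets, since $l$-eligibility requires non-emptiness. Hence
\[
|\eligible{l}{V'}| \leq |\mathcal{P}(V') \setminus \{\emptyset\}| = 2^{|V'|} - 1 .
\]
Combining this with the hypothesis $|\eligible{l}{V'}| \geq 2^{l \cdot |V|/k} - 1$ gives
\[
2^{l \cdot |V|/k} - 1 \leq 2^{|V'|} - 1 ,
\]
that is, $2^{l \cdot |V|/k} \leq 2^{|V'|}$.

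Because $x \mapsto 2^x$ is strictly increasing on the reals, this yields $l \cdot |V|/k \leq |V'|$, which is the claim. Note that $l \cdot |V|/k$ need not be an integer. This causes no difficulty: strict monotonicity of the real exponential applies to any real exponent, and $|V'|$ is a natural number.

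The only point needing care is the exclusion of $\emptyset$, and it is exactly what makes the $-1$ on both sides match. Without it the bound would read $2^{|V'|}$ rather than $2^{|V'|}-1$, and we would only get $2^{l|V|/k} \leq 2^{|V'|}+1$, which is too weak to conclude. The definition of $l$-eligibility explicitly demands non-emptiness, so this exclusion is available.

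I expect the main obstacle, if any, to be in the Lean formalisation rather than in the mathematics. There one must handle the real-valued exponent $l \cdot |V|/k$ (with $k \geq 1$ implicitly) and cast the natural-number cardinalities to the reals before applying monotonicity of $2^x$.
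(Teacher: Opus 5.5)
Your proof is correct and is essentially the paper's argument: both rest on the fact that $l$-eligible sets are non-empty subsets of $V'$, so their number is at most $2^{|V'|}-1$. The only difference is framing, since the paper argues by contrapositive (if $|V'| < l\cdot|V|/k$, the count falls strictly below $2^{l\cdot|V|/k}-1$) while you compare exponents directly.
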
 

Monotonicity holds for argumentative $l$-cohesion. 
\begin{proposition}[Monotonicity of argumentative cohesion] \label{prop_monotonicity_argumentative_cohesion} 
    Let $V'$ be a non-empty subset of $V$ and let $l$ be a positive integer. 
	If $V'$ is argumentative $l$-cohesive, 
	then for any superset $V_{super}$ of $V'$, 
	$V_{super}$ is argumentative $l$-cohesive. 
\end{proposition}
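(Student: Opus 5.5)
The plan is to unfold Definition~\ref{def_argumentative_l_cohesion} and reuse its existential witness unchanged. Assume $V'$ is argumentative $l$-cohesive. Then there is some $V'' \subseteq V'$ such that $V''$ is $l$-eligible and
\[
|\eligible{l}{V''}| \geq 2^{l \cdot |V|/k} - 1 .
\]
Now take any superset $V_{super}$ of $V'$ (with $V_{super} \subseteq V$, since all voter sets under consideration are subsets of $V$). By transitivity of inclusion we have $V'' \subseteq V' \subseteq V_{super}$, so $V''$ is itself a subset of $V_{super}$. It is still $l$-eligible and still satisfies the counting bound.

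The key point to stress is that both conditions in Definition~\ref{def_argumentative_l_cohesion} refer only to the witness $V''$ and never to the ambient set $V'$. First, $l$-eligibility of $V''$ means $V'' \neq \emptyset$ and $|B^{\forall}_{V''}| \geq l$. This depends only on $V''$ and on the ballots $R_v$ for $v \in V''$, through the grounded extension of $(C, \bigcup_{v \in V''} R_v)$. Second, $\eligible{l}{V''}$ is the set of $l$-eligible subsets of $V''$, which is again intrinsic to $V''$. The profile parameters $|V|$ and $k$ in the threshold $2^{l\cdot|V|/k}-1$ are fixed by the profile $(V,C,B,k)$ and do not change when we pass from $V'$ to $V_{super}$. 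Hence $V''$ witnesses that $V_{super}$ is argumentative $l$-cohesive.

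No genuine obstacle is expected. The only step needing care is confirming this independence from the ambient set, so that the non-monotonicity of $B^{\forall}$ (Fact~\ref{fact_approval_monotonicity} fails in $\mathfrak{M}^{\mathfrak{A}}$) never enters: we never evaluate $B^{\forall}_{V_{super}}$ or $B^{\forall}_{V'}$. The hypothesis that $V'$ is non-empty is not actually needed for the argument, because $V''$ being $l$-eligible already forces non-emptiness.
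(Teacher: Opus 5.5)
Your proof is correct and follows essentially the same approach as the paper. The paper's one-line proof (that $V_{super}$ contains all $l$-eligible subsets of $V'$) amounts to your observation that the witness $V'' \subseteq V' \subseteq V_{super}$ carries over unchanged, since both defining conditions depend only on $V''$ and the fixed profile parameters.
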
 
This relaxation (even 
if we apply the same relaxation for $\mathfrak{M}^{\mathfrak{M}}$ and make 
the standard $l$-cohesion similarly monotonic) will turn out to be completely 
harmless for winner selection (Theorem \ref{thm_preservation_m_m_representations}
in Section 4). 

At times, though, stronger notions of cohesion are desirable. 
To this end, we introduce two conditions on cohesion. % each addressing 
The first one requires that at least $m$ ($m \leq l$) 
common candidates 
are approved by all eligible subsets. The second one requires 
that every non-empty subset approves at least $n$ ($n \leq l$) 
candidates. 
\begin{definition}[$(l, m, n)$-cohesion] \rm 
   	Let $V'$ be a subset of $V$, let $l$ be a positive integer 
	and let $m, n$ be non-negative integers 
	not greater than $l$. 
	$V'$ is {\it $(l, m, n)$-cohesive} iff 
	\begin{enumerate} 
		\item   $V'$ is argumentative $l$-cohesive.  
		\item   $m \leq |\bigcap_{V'' \in \eligible{l}{V'}}B^{\forall}_{V''}|$.  
			(\textbf{Common} \textbf{candidates}) 
		\item	$n \leq min_{\emptyset \not= V'' \subseteq V'} |B^{\forall}_{V''}|$.    
			(\textbf{Minimum approval count})  \hfill$\clubsuit$ 
	\end{enumerate} 
\end{definition}  
{\it 
   In Example 1, for instance, 
   \begin{itemize} 
	   \item $\{v_1\}$ is $(1, m_1, n_1)$-cohesive 
   	for all $0 \leq m_1 \leq 1$ and all $0 \leq n_1 \leq 1$.  
\item $\{v_2\}$ is  $1.m_2.n_2$-cohesive 
   for all $0 \leq m_2 \leq 1$ and all $0 \leq n_2 \leq 1$. 
\item $\{v_1,v_2\}$ is $(l_{12}, m_{12}, n_{12})$-cohesive for all 
		   $1 \leq l_{12} \leq 2$, all $0 \leq m_{12} \leq 1$ and all 
		   $0 \leq n_{12} \leq l_{12}$.   
	   \item $\{v_1,v_3\}$ is $(1, m_{13}, n_{13})$-cohesive for all 
		   $0 \leq m_{13} \leq 1$ and 
		   all $0 \leq n_{13} \leq 1$. 
	   \item $\{v_1,v_2,v_3\}$ is 
		   $(l_{123}, m_{123}, n_{123})$-cohesive 
		   for all $1 \leq l_{123} \leq 2$, all 
		   $0 \leq m_{123} \leq 1$ and all $0 \leq n_{123} \leq 1$. 
   \end{itemize} 
}

\begin{proposition}[Conservation] \label{prop_conservation} 
	Let $V'$ be a subset of $V$ and let $l$ be a positive integer. 
	$V'$ is argumentative $l$-cohesive iff $V'$ is $(l, 0, 0)$-cohesive.  
\end{proposition}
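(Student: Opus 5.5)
This is essentially an unfolding of definitions. By the definition of $(l, m, n)$-cohesion with $m = n = 0$, $V'$ is $(l,0,0)$-cohesive iff three conditions hold:
\begin{enumerate}
\item $V'$ is argumentative $l$-cohesive;
\item $0 \leq |\bigcap_{V'' \in \eligible{l}{V'}}B^{\forall}_{V''}|$;
\item $0 \leq \min_{\emptyset \not= V'' \subseteq V'} |B^{\forall}_{V''}|$.
\end{enumerate}
Condition 1 is literally the left-hand side of the biconditional. So it suffices to show that conditions 2 and 3 hold vacuously for every $V'$ and every positive $l$. Both directions then follow at once: ($\Rightarrow$) combine the hypothesis with the two trivial conditions; ($\Leftarrow$) project out condition 1.

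For condition 2, the quantity on the right is the cardinality of a finite set of candidates, hence a natural number, and so it is $\geq 0$.

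The only subtle point is the degenerate case $\eligible{l}{V'} = \emptyset$. Then the intersection is empty-indexed, and by convention it is either all of $C$ or left undefined. I would fix the convention that an empty-indexed intersection is $C$, as in the Lean formalisation, where an intersection over a finset of sets defaults to the universe. Under that convention the cardinality is still a natural number and the inequality holds. In fact, whenever condition 1 holds, $\eligible{l}{V'}$ is nonempty, because it contains the witness $V''$ from Definition \ref{def_argumentative_l_cohesion}. So the issue only arises in the ($\Leftarrow$) direction, where condition 2 is not needed anyway.

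For condition 3, every $|B^{\forall}_{V''}|$ is a natural number, so the minimum is $\geq 0$. The minimum ranges over nonempty subsets of $V'$, so for $V' = \emptyset$ it is empty-indexed. I would adopt the convention that an empty minimum is $+\infty$ (or vacuous as a universally quantified statement $\forall V''\neq\emptyset.\ n \le |B^\forall_{V''}|$), which holds trivially. Either reading makes the inequality true.

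I expect no real obstacle. The only care needed is to state explicitly the conventions for empty-indexed intersections and minima, so that conditions 2 and 3 are well-defined and trivially satisfied at $m = n = 0$.
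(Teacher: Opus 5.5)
Your proposal is correct and follows the paper's own argument. The converse direction is immediate because condition 1 is the hypothesis itself. For the forward direction, the paper likewise takes condition 1 from the assumption and discharges conditions 2 and 3 by observing that a cardinality, and a minimum of cardinalities, is at least $0$. Your remarks on empty-indexed intersections and minima add care that the paper leaves implicit, and they are not needed in the forward direction. There, argumentative $l$-cohesion forces both $V'$ and $\mathsf{eligible}(l, V')$ to be non-empty.
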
 

\begin{proposition}[Cohesion monotonicity] \label{prop_cohesion_monotonicity} 
     Let $V'$ be a subset of $V$, let $l$ be a positive integer, 
	and let $m, n$ be 
	non-negative 
	integers. Let $m'$ be a non-negative integer not 
	greater than $m$, 
	and let $n'$ be a non-negative integer not greater than $n$. 
	If $V'$ is $(l, m, n)$-cohesive, 
	then $V'$ is $(l, m', n')$-cohesive. 	
\end{proposition}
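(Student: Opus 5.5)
The proof unfolds the definition of $(l, m, n)$-cohesion and checks its three clauses one at a time. Suppose $V'$ is $(l, m, n)$-cohesive, and fix $m' \leq m$ and $n' \leq n$ with $m', n'$ non-negative. The definition also requires $m'$ and $n'$ to be at most $l$. This is automatic: $m \leq l$ and $n \leq l$ hold because $V'$ is $(l,m,n)$-cohesive, and then $m' \leq m \leq l$ and $n' \leq n \leq l$.

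\textbf{Clause 1 (argumentative $l$-cohesion).} This clause mentions neither $m$ nor $n$. It is therefore inherited verbatim from the hypothesis.

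\textbf{Clause 2 (common candidates).} Write $I = \bigcap_{V'' \in \eligible{l}{V'}} B^{\forall}_{V''}$. The hypothesis gives $m \leq |I|$, so transitivity gives $m' \leq m \leq |I|$. The collection $\eligible{l}{V'}$ depends only on $V'$ and $l$, so $I$ is the same set in both statements and nothing further needs checking.

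\textbf{Clause 3 (minimum approval count).} Write $M = \min_{\emptyset \not= V'' \subseteq V'} |B^{\forall}_{V''}|$. The hypothesis gives $n \leq M$, hence $n' \leq n \leq M$.

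The only point needing care is the degenerate one where the set under the intersection in Clause 2, or under the minimum in Clause 3, is empty. We never evaluate these quantities, however. We only transport an inequality $x \leq Q$ to $x' \leq x \leq Q$ for one fixed quantity $Q$, whatever convention defines $Q$. Hence, in both clauses, transitivity of $\leq$ suffices (as it would in the Lean formalisation). I expect no real obstacle beyond recording these three inheritances.
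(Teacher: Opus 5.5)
Your proof is correct and takes the same approach as the paper's: unfold the definition and carry the common-candidate bound and the minimum-approval bound down by transitivity ($m' \leq m \leq |\cdot|$ and $n' \leq n \leq \min$). You also note that clause 1 is inherited unchanged and that $m', n' \leq l$ follows from $m, n \leq l$; the paper leaves both points implicit.
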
 

In a limited case, $(l, m, n)$-cohesion can be lifted to 
$(l, (m+1), n)$-cohesion or $(l, m, (n+1))$-cohesion. 

\begin{proposition}[Lifting]  \label{prop_lifting} 
       Let $V'$ be a subset of $V$, let $l$ be  
	a positive integer and let $n$ be a non-negative integer. 
	If $V'$ is $(l, 0, n)$-cohesive and $l$-eligible, then 
	$V'$ is $(l, 1, n)$-cohesive.   
\end{proposition}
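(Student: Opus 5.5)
The plan is to keep conditions 1 and 3 of $(l, m, n)$-cohesion as they are and establish condition 2 with $m = 1$ by exhibiting one candidate approved by \emph{every} non-empty subset of $V'$. By hypothesis, $V'$ is $(l, 0, n)$-cohesive. So $V'$ is argumentative $l$-cohesive (condition 1), and $n \leq \min_{\emptyset \not= V'' \subseteq V'} |B^{\forall}_{V''}|$ (condition 3). Neither condition mentions $m$, so both carry over unchanged. What remains is to show $1 \leq |\bigcap_{V'' \in \eligible{l}{V'}} B^{\forall}_{V''}|$.

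\textbf{Step 1: unattacked arguments in the pooled framework.} We use the characterisation of the grounded extension as the least fixpoint of $F$ recalled in Section 2. Iterating from $\emptyset$, the first step gives $F(\emptyset)$, which is exactly the set of arguments with no attacker. If $F(\emptyset) = \emptyset$, then $\emptyset$ is already a fixpoint, so the grounded extension is empty. Contrapositively, a non-empty grounded extension contains some unattacked argument.

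Now $V'$ is $l$-eligible with $l \geq 1$, so $B^{\forall}_{V'}$ is non-empty. Applying the observation above to $(C, \bigcup_{v \in V'} R_v)$ yields a candidate $c^*$ that has no attacker in $\bigcup_{v \in V'} R_v$.

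\textbf{Step 2: $c^*$ survives in every subset.} Take any non-empty $V'' \subseteq V'$. Then $\bigcup_{v \in V''} R_v \subseteq \bigcup_{v \in V'} R_v$, so $c^*$ is still unattacked in $(C, \bigcup_{v \in V''} R_v)$. An unattacked argument is defended by every set, in particular by $\emptyset$. Hence $c^* \in F(\emptyset)$, which is contained in the least fixpoint, so $c^* \in B^{\forall}_{V''}$.

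In particular, $c^*$ lies in $B^{\forall}_{V''}$ for every $V'' \in \eligible{l}{V'}$. The collection $\eligible{l}{V'}$ is non-empty because it contains $V'$ itself, so the intersection is a genuine intersection rather than a vacuous one. It contains $c^*$, which gives condition 2 with $m = 1$.

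\textbf{Main point of care.} The only step that needs care is Step 1, the claim that a non-empty grounded extension must contain an unattacked argument. One cannot argue by monotonicity of approvals here, since approvals are not monotone in $\mathfrak{M}^{\mathfrak{A}}$. The argument must instead go through the least-fixpoint iteration starting at $\emptyset$. Once $c^*$ is chosen to be unattacked in the largest framework, the downward persistence in Step 2 is immediate, because taking a subset of voters only removes attacks.
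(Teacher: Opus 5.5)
Your proof is correct and follows the paper's own argument: since $V'$ is $l$-eligible, $B^{\forall}_{V'}$ is non-empty, so the pooled framework of $V'$ has an unattacked candidate. That candidate remains unattacked, and hence approved, for every non-empty subset of $V'$, which gives a common candidate across $\eligible{l}{V'}$. Your least-fixpoint justification of the first step, and your remark that $\eligible{l}{V'}$ is non-empty because it contains $V'$, simply make explicit what the paper's shorter proof leaves implicit.
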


\section{4 Justified Representation} 
\label{sec_justified_representation_axioms_and_consequences}    
\subsection{4.1 Justified representation axioms} 
We now introduce justified representation axioms for $\mathfrak{M}^{\mathfrak{A}}$. 
The main difficulty is as follows.
On the one hand, for a multi-winner voting profile $(V,C,A,k)$, 
$A^{\forall}_{V'} = \bigcap_{v \in V'} A^{\forall}_{\{v\}}$, 
so any candidate approved by a set of voters is also approved individually; 
on the other hand, for an argumentative voting profile 
$(V,C,B,k)$, it can be that $B^{\forall}_{V'} \not= \bigcap_{v \in V'} 
B^{\forall}_{\{v\}}$. Example 1 is a concrete such case. 
Some candidate approved by a set of voters may only be 
approved by the particular set and not by its strict \mbox{super-/sub-sets.}% or strict subsets.  

Recall $\mathsf{JR}$, $\mathsf{PJR}$ and $\mathsf{EJR}$ from Section 2, 
even though they are meant to satisfy the demands of sets of voters, 
any winner set $W$ providing $\mathsf{JR}/\mathsf{PJR}/\mathsf{EJR}$ is necessarily a 
subset of $\bigcup_{v \in V} A_{v}$;
hence $W$ is identifiable by comparing it against individuals' approvals alone.  
The above-described difference suggests that, with an argumentative voting 
profile, we may genuinely have to 
select winners from among collectively approved candidates.

To account for this distinction, we anchor $W$'s provision 
to approvals of eligible sets, 
which gives us: 

\begin{definition}[Justified representation axioms] \rm    
	Let $W$ be a size-$k$ subset of $C$. 
	Let $h: \mathbb{N} \rightarrow \mathbb{N} \times \mathbb{N}$ 
	be such that: $h(l) = (h_1(l), h_2(l))$; 
	$0 < l$; 
	$0 \leq h_1(l) \leq l$; and $0 \leq h_2(l) \leq l$. 
	{\it With respect to $h$, $W$ provides} 
	\begin{itemize} 
		\item $\mathsf{ArgJR}$ iff, for 
			any $(1, h_1(1), h_2(1))$-cohesive 
			subset $V'$ of $V$, there is some 
			 $V''\in \eligible{1}{V'}$ 
			such that 
			$|W \cap B^{\forall}_{V''}| \geq 1$. 
		\item $\mathsf{ArgPJR}$ iff, for 
			any $(l, h_1(l), h_2(l))$-cohesive 
			subset $V'$ of $V$,    
			it holds that $ 
			|W \cap (\bigcup_{V'' \in \eligible{l}{V'}}B_{V''}^{\forall})|
			\geq l$. 
		\item $\mathsf{ArgEJR}$ iff, for any 
			$(l, h_1(l), h_2(l))$-cohesive 
			subset $V'$ of $V$,  
			there is some $v \in V'$ such that 
			$|W \cap (\bigcup_{V'' \in \eligible{l}{V'}, v \in V''}B_{V''}^{\forall}) | \geq l$. 
		\item $\mathsf{ArgEJR}\text{-}\mathsf{Spot}$ iff, for any 
			$(l, h_1(l), h_2(l))$-cohesive 
			subset $V'$ of $V$,  
			there is some $V'' \in \eligible{l}{V'}$ 
			such that 
			$|W \cap B_{V''}^{\forall} 
			| \geq l$.  \hfill$\clubsuit$ 
	\end{itemize} 
\end{definition} 
If $V'$ is $(l, h_1(l), h_2(l))$-cohesive, 
\begin{itemize} 
	\item For $\mathsf{ArgPJR}$: for each of $l$ winners $c_1, \ldots, c_l \in W$, 
		it has to be approved by some $l$-eligible subset of $V'$. 
	\item For $\mathsf{ArgEJR}$: 
		there has to be some individual $v \in V'$ such that, 
		for each of $l$ winners $c_1, \ldots, c_l \in W$, 
		it is approved by some $l$-eligible subset of $V'$ that contains
		$v$. 
\end{itemize}

Clearly, they do not just compare $W$ against approvals at individual level. 
It is, however, possible to require that $W$ be compared against 
a single eligible set. Hence, we have a stronger axiom 
$\mathsf{ArgEJR}\text{-}\mathsf{Spot}$ of $\mathsf{ArgEJR}$. 
$\mathsf{ArgJR}$ and $\mathsf{ArgPJR}$ can be equally strengthened 
into $\mathsf{ArgJR}\text{-}\mathsf{Spot}$ and $\mathsf{ArgPJR}\text{-}\mathsf{Spot}$, 
but $\mathsf{ArgJR}\text{-}\mathsf{Spot}$ is $\mathsf{ArgJR}$ itself, 
and $\mathsf{ArgPJR}\text{-}\mathsf{Spot}$ collapses onto 
$\mathsf{ArgEJR}\text{-}\mathsf{Spot}$. 

{\it In Example 1, $k=3$, so there are twenty possibilities for $W$. 
With respect to $h$ where each of $h(1)$ and $h(2)$ is either 
$(0,0)$ or $(1,1)$, 
    \begin{itemize} 
	    \item Ten of them provide $\mathsf{ArgEJR}\text{-}\mathsf{Spot}$. 
		    \begin{itemize} 
			    \item Any $W$ with $c_1 \in W$ and 
				    $W \not= \{c_1,c_3,c_6\}$, 
				   plus $\{c_2,c_4,c_5\}$ provide it. 
		    \end{itemize} 
	    \item Eleven of them provide $\mathsf{ArgEJR}$. 
		    \begin{itemize} 
			    \item $\{c_4,c_5,c_6\}$ provides 
				    $\mathsf{ArgEJR}$ but not 
				    $\mathsf{ArgEJR}\text{-}\mathsf{Spot}$. 
		    \end{itemize} 
	    \item Fourteen of them provide $\mathsf{ArgPJR}$. 
		    \begin{itemize} 
			    \item $\{c_2,c_3,c_5\}$, 
				    $\{c_2,c_5,c_6\}$ 
				    and $\{c_3,c_4,c_5\}$ provide 
				    $\mathsf{ArgPJR}$ but not 
				    $\mathsf{ArgEJR}$. 
		    \end{itemize} 
	    \item Fifteen of them provide $\mathsf{ArgJR}$. 
		    \begin{itemize} 
			    \item $\{c_1,c_3,c_6\}$ provides 
				    $\mathsf{ArgJR}$ but not 
				    $\mathsf{ArgPJR}$. 
		    \end{itemize} 
    \end{itemize} 
With respect to $h$ where $h(l) = (l,l)$ ($l \in \{1,2\}$), 
$\{v_1,v_2\}$ 
ceases to be 
$(2, h_1(2), h_2(2))$-cohesive. 
Provision 
of $\mathsf{ArgJR}$ remains unchanged. For the others:  
    \begin{itemize} 
	    \item Thirteen of them provide $\mathsf{ArgEJR}\text{-}\mathsf{Spot}$.
		    \begin{itemize} 
			    \item $\{c_1, c_3, c_6\}$, 
				    $\{c_2,c_3,c_5\}$ and $\{c_2,c_5,c_6\}$ 
				    additionally provide 
				    $\mathsf{ArgEJR}\text{-}\mathsf{Spot}$. 
		    \end{itemize} 
	    \item Fourteen of them provide $\mathsf{ArgEJR}$. 
		    \begin{itemize} 
			    \item $\{c_4,c_5,c_6\}$ provides 
				    $\mathsf{ArgEJR}$ but not 
				    $\mathsf{ArgEJR}\text{-}\mathsf{Spot}$. 
		    \end{itemize}  
	    \item Fifteen of them provide $\mathsf{ArgPJR}$. 
		    \begin{itemize} 
			    \item $\{c_3, c_4, c_5\}$ provides 
				    $\mathsf{ArgPJR}$ but not 
				    $\mathsf{ArgEJR}$. 
		    \end{itemize} 
    \end{itemize} 

}
As this illustration shows, greater values for $h_1(l)$ and $h_2(l)$  
lead to fewer, or at best the same number of, requirements 
for a size-$k$ subset of $C$ to satisfy in order to 
provide 
the axioms. As such, two particular $h$ 
make canonical cases: the {\it permissive} $h$ 
is such that 
$h(l) = (0,0)$ for all $l$; and the {\it robust} 
$h$ is such that $h(l) = (l,l)$ for all $l$.

In the remainder, we establish: correspondence with 
$\mathsf{JR}$, $\mathsf{PJR}$ and $\mathsf{EJR}$ (Section 4.2); 
unconditional provision of $\mathsf{ArgJR}$ and provision under 
robust $h$ for the stronger axioms (Section 4.3); 
and polynomial-time construction but $\mathsf{coNP}$-hard verification for $\mathsf{ArgJR}$ (Section 4.4).

\subsection{4.2 Representation correspondence results}
Proposition \ref{prop_approval_preserving_transformation} gave 
an approval-preserving embedding of multi-winner voting profiles 
into argumentative ones. 

Let $\nu$ be a member of $\{ArgJR^h, ArgPJR^h, ArgEJR^h,\linebreak ArgEJRS^h\}$, 
and let $\nu(V, C, B, k)$ be the set of all size-$k$ subsets $W$ of $C$ such that $W$ provides: 
$\mathsf{ArgJR}$ with respect to $h$ if $\nu$ is $ArgJR^h$; 
$\mathsf{ArgPJR}$ with respect to $h$ if $\nu$ is $ArgPJR^h$; 
$\mathsf{ArgEJR}$ with respect to $h$ if $\nu$ is $ArgEJR^h$; and 
$\mathsf{ArgEJR}\text{-}\mathsf{Spot}$ with respect to $h$ if $\nu$ is 
$ArgEJRS^h$. Then, the following correspondence results hold. 
\begin{theorem}[Preservation of $\mathfrak{M}^{\mathfrak{M}}$ representations]\label{thm_preservation_m_m_representations} 
	Let $x \equiv (V, C, A, k)$ be a member of 
	$\mathfrak{M}^{\mathfrak{M}}$. 
	For any size-$k$ subset $W$ of $C$  and 
	any $h$, 
	\begin{itemize} 
		\item $W \in JR(x)$ iff $W \in ArgJR^h(\tau(x))$. 
		\item  $W \in PJR(x)$ iff $W \in ArgPJR^h(\tau(x))$. 
		\item  $W \in EJR(x)$ iff $W \in ArgEJR^h(\tau(x))$\\ iff 
		$W \in ArgEJRS^h(\tau(x))$. 
	\end{itemize}  
\end{theorem}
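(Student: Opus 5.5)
The plan is to reduce everything to one structural lemma about $\tau(x)$. The only property of $\tau$ I will use is the one from Proposition \ref{prop_approval_preserving_transformation}: $B^{\forall}_{V'} = A^{\forall}_{V'}$ for every non-empty $V' \subseteq V$. Combined with Fact \ref{fact_approval_monotonicity}, this makes approvals in $\tau(x)$ hereditary. Whenever $V'' \subseteq V'$ are non-empty, we have $B^{\forall}_{V'} \subseteq B^{\forall}_{V''}$, and in particular $B^{\forall}_{V'} \subseteq A_v$ for each $v \in V'$.

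\textbf{Step 1 (cohesion correspondence).} I will show that, in $\tau(x)$ and for any $h$, two things hold.
\begin{enumerate}
\item[(a)] Every $l$-cohesive $V'$ in the sense of $x$ is $(l,h_1(l),h_2(l))$-cohesive.
\item[(b)] Every $(l,h_1(l),h_2(l))$-cohesive $V'$ contains an $l$-cohesive subset.
\end{enumerate}

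For (a), let $V'$ be $l$-cohesive. By heredity every non-empty subset of $V'$ approves $A^{\forall}_{V'}$ and is therefore $l$-eligible. Hence $|\eligible{l}{V'}| = 2^{|V'|}-1 \geq 2^{l\cdot|V|/k}-1$. Here monotonicity of $2^{(\cdot)}$ is used over the reals, since $l\cdot |V|/k$ need not be an integer. Taking $V'' = V'$ in Definition \ref{def_argumentative_l_cohesion} gives argumentative $l$-cohesion. The set $V'$ is itself eligible and approvals are hereditary, so the intersection in the common-candidates condition equals $A^{\forall}_{V'}$, which has size $\geq l \geq h_1(l)$. Similarly, every non-empty subset approves at least $|A^{\forall}_{V'}| \geq l \geq h_2(l)$ candidates.

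For (b), argumentative $l$-cohesion yields an $l$-eligible $V'' \subseteq V'$ with $|\eligible{l}{V''}| \geq 2^{l|V|/k}-1$. Proposition \ref{prop_implicit_size_requirement} then gives $|V''| \geq l|V|/k$, and $|A^{\forall}_{V''}| = |B^{\forall}_{V''}| \geq l$. So $V''$ is $l$-cohesive.

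I expect this step to be the main obstacle, mainly in getting the non-integral Hare quota and the ``implicit size'' argument right. It is also where I must make sure that the $h$-conditions never obstruct. They do not, because $h_1(l), h_2(l) \leq l$.

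\textbf{Step 2 (two inclusions about unions).} Let $V'$ be a set that is either $(l,h_1(l),h_2(l))$-cohesive or $l$-cohesive.
\begin{itemize}
\item For every $l$-eligible $V'' \subseteq V'$ and every $v \in V''$, we have $B^{\forall}_{V''} \subseteq A_v$. Consequently $\bigcup_{V''\in\eligible{l}{V'},\,v\in V''}B^{\forall}_{V''} \subseteq A_v$ and $\bigcup_{V''\in\eligible{l}{V'}} B^{\forall}_{V''} \subseteq \bigcup_{v\in V'}A_v$.
\item Conversely, if $V''' \subseteq V'$ is $l$-cohesive, then each $v \in V'''$ satisfies $|A_v| \geq |A^{\forall}_{V'''}| \geq l$. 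So $\{v\} \in \eligible{l}{V'}$, with $B^{\forall}_{\{v\}} = A_v$.
\end{itemize}

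\textbf{Step 3 (the axioms).} I treat the forward direction ($\Rightarrow$) and the backward direction ($\Leftarrow$) separately for each axiom.

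For ($\Rightarrow$), take an $(l,h_1(l),h_2(l))$-cohesive $V'$ and, by Step 1(b), an $l$-cohesive $V''' \subseteq V'$.
\begin{itemize}
\item $\mathsf{JR}$ (with $l=1$) gives some $v \in V'''$ with $W \cap A_v \neq \emptyset$. Then $\{v\}$ is the required $1$-eligible witness.
\item $\mathsf{PJR}$ gives $|W \cap \bigcup_{v\in V'''}A_v| \geq l$. Each such $\{v\}$ is in $\eligible{l}{V'}$, so the union in $\mathsf{ArgPJR}$ contains $\bigcup_{v\in V'''}A_v$.
\item $\mathsf{EJR}$ gives $v \in V'''$ with $|W\cap A_v| \geq l$. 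The singleton $\{v\}$ then witnesses both $\mathsf{ArgEJR}$ (for this $v$) and $\mathsf{ArgEJR}\text{-}\mathsf{Spot}$.
\end{itemize}

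For ($\Leftarrow$), take an $l$-cohesive $V'$, which is $(l,h_1(l),h_2(l))$-cohesive by Step 1(a), and apply the first inclusion of Step 2.
\begin{itemize}
\item An $\mathsf{ArgJR}$ or $\mathsf{ArgEJR}\text{-}\mathsf{Spot}$ witness $V''$ gives, for any $v\in V''$, $|W\cap A_v| \geq |W \cap B^{\forall}_{V''}|$. This yields $\mathsf{JR}$, respectively $\mathsf{EJR}$.
\item An $\mathsf{ArgPJR}$ bound transfers to $\bigcup_{v\in V'}A_v$, which yields $\mathsf{PJR}$.
\item An $\mathsf{ArgEJR}$ witness $v$ transfers to $A_v$, which yields $\mathsf{EJR}$.
\end{itemize}

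Together these give the three-way equivalence for $\mathsf{EJR}$, $\mathsf{ArgEJR}$ and $\mathsf{ArgEJR}\text{-}\mathsf{Spot}$, and the two-way equivalences for $\mathsf{JR}$ and $\mathsf{PJR}$.
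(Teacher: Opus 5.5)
Your proposal is correct and follows essentially the paper's route. Both use the cohesion correspondence of Lemma~\ref{lem_relationship_between_cohesion_argumentative_cohesion}, which is your Step~1; your proof of part~(b), taking the witness $V''$ and applying Proposition~\ref{prop_implicit_size_requirement}, is more direct than the paper's argument. Both also rely on heredity of approvals from Fact~\ref{fact_approval_monotonicity} and Proposition~\ref{prop_approval_preserving_transformation}, on singleton eligible witnesses, and on the union inclusions. The only difference is organisational: you pass directly from each cohesive set to a cohesive subset in the other profile, which removes the need for the paper's minimal cohesive sets $\Gamma(l)$, $\Delta(h,l)$ and its superset argument.
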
 

The set inclusions  
among justified representation axioms in $\mathfrak{M}^{\mathfrak{A}}$ are as expected. 
\begin{theorem}[Inclusion hierarchy] \label{thm_set_inclusion_representations} 
	Let $x \equiv (V,C,B,k)$ be a member of $\mathfrak{M}^{\mathfrak{A}}$, 
	then, with respect to every $h$, $ArgEJRS^h(x) \subseteq ArgEJR^h(x) \subseteq 
	ArgPJR^h(x) \subseteq ArgJR^h(x)$. 
\end{theorem}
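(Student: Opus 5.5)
The plan is to prove the three inclusions one at a time by fixing $h$, a size-$k$ subset $W$ of $C$, and a cohesive subset $V'$ of $V$. In each case the cohesion hypothesis that triggers an axiom is literally the same condition, $(l, h_1(l), h_2(l))$-cohesion of $V'$ for the same $h$. So every inclusion reduces to comparing the conclusions the axioms demand for that fixed $V'$. No appeal to the structure of grounded extensions should be needed; the argument uses only set-theoretic monotonicity of unions and intersections, plus the non-emptiness built into $l$-eligibility.

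\emph{Step 1: $ArgEJRS^h(x) \subseteq ArgEJR^h(x)$.} Suppose $W$ provides $\mathsf{ArgEJR}\text{-}\mathsf{Spot}$ and $V'$ is $(l, h_1(l), h_2(l))$-cohesive. Then some $V'' \in \eligible{l}{V'}$ satisfies $|W \cap B^{\forall}_{V''}| \geq l$. By the definition of $l$-eligibility, $V''$ is non-empty, so I pick any $v \in V''$; this $v$ lies in $V'$. Because $V''$ is one of the eligible sets containing $v$, $B^{\forall}_{V''}$ is contained in $\bigcup_{V''' \in \eligible{l}{V'}, v \in V'''} B^{\forall}_{V'''}$. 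Intersecting with $W$ preserves this containment, so the $\mathsf{ArgEJR}$ bound $\geq l$ holds for this $v$.

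\emph{Step 2: $ArgEJR^h(x) \subseteq ArgPJR^h(x)$.} For the witness $v$ supplied by $\mathsf{ArgEJR}$, the union over eligible sets containing $v$ is a subset of the union over all of $\eligible{l}{V'}$. Cardinality of the intersection with $W$ is monotone, so the $\mathsf{ArgPJR}$ bound follows immediately.

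\emph{Step 3: $ArgPJR^h(x) \subseteq ArgJR^h(x)$.} I instantiate $\mathsf{ArgPJR}$ at $l = 1$. For any $(1, h_1(1), h_2(1))$-cohesive $V'$, this gives $|W \cap \bigcup_{V'' \in \eligible{1}{V'}} B^{\forall}_{V''}| \geq 1$. Hence some candidate $c \in W$ lies in $B^{\forall}_{V''}$ for some $V'' \in \eligible{1}{V'}$, so $|W \cap B^{\forall}_{V''}| \geq 1$, which is exactly the $\mathsf{ArgJR}$ requirement.

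I do not expect a genuine obstacle here. The only points needing care are bookkeeping ones. First, in Step 1 one must use non-emptiness of eligible sets to extract the individual $v$. Second, in Step 3 one must note that $\mathsf{ArgJR}$ quantifies over exactly the $l=1$ case of the cohesion notion used in $\mathsf{ArgPJR}$, with the same $h$. Third, one should check that the definition of $h$ requires $0<l$, so that $l=1$ is a legitimate instance.
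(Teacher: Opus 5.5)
Your proposal is correct and follows essentially the same route as the paper. For $ArgEJRS^h(x) \subseteq ArgEJR^h(x)$ you take any voter $v$ of the non-empty eligible witness $V''$ and use $B^{\forall}_{V''} \subseteq \bigcup_{V_x \in \eligible{l}{V'}, v \in V_x} B^{\forall}_{V_x}$; for the other two inclusions you use monotonicity of unions and the $l=1$ instance, which the paper only calls straightforward and you spell out.
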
  

Also, they are separable.  
\begin{theorem}[Separations] \label{thm_separations} 
	For each of the following statements, 
	there exist an argumentative voting profile 
	$x \equiv (V, C, B, k)$, 
	 a size-$k$ subset $W$ of  $C$ and an $h$  
	 for which the statement holds. 
	\begin{itemize} 
		\item $W \in ArgEJR^h(x)$ and $W \not\in ArgEJRS^h(x)$. 
		\item $W \in ArgPJR^h(x)$ and $W \not\in ArgEJR^h(x)$. 
		\item $W \in ArgJR^h(x)$ and $W \not\in ArgPJR^h(x)$. 
	\end{itemize} 
\end{theorem}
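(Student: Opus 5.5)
The plan is to give one concrete witness for all three statements: the argumentative voting profile $x$ of Example \ref{ex_city_planning}, with $k=3$ and the permissive $h$, $h(l)=(0,0)$ for all $l$. The illustration after the definition of the axioms already names the relevant winner sets: $\{c_4,c_5,c_6\}$, $\{c_2,c_3,c_5\}$ and $\{c_1,c_3,c_6\}$. What remains is to check each membership and non-membership directly. Under the permissive $h$, Proposition \ref{prop_conservation} reduces $(l,0,0)$-cohesion to argumentative $l$-cohesion.

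\textbf{Step 1: the data.} I first record all collective approvals:
\[
\begin{array}{ll}
B^{\forall}_{\{v_1\}}=\{c_1,c_5\}, & B^{\forall}_{\{v_2\}}=\{c_1,c_2,c_4\},\\
B^{\forall}_{\{v_3\}}=\{c_1,c_2,c_3,c_6\}, & B^{\forall}_{\{v_1,v_2\}}=\{c_1,c_4\},\\
B^{\forall}_{\{v_1,v_3\}}=\{c_1\}, & B^{\forall}_{\{v_2,v_3\}}=\{c_1,c_2,c_6\},\\
B^{\forall}_{\{v_1,v_2,v_3\}}=\{c_1,c_3,c_6\}. &
\end{array}
\]
Since $|V|/k=1$, these give the cohesive sets:
\begin{itemize}
\item every non-empty set is argumentative $1$-cohesive;
\item the argumentative $2$-cohesive sets are exactly $\{v_1,v_2\}$, $\{v_2,v_3\}$ and $\{v_1,v_2,v_3\}$;
\item no set is argumentative $l$-cohesive for $l\ge 3$, since that would need $7$ subsets with at least three approvals, and $\{v_1\}$ approves only two candidates.
\end{itemize}
So only $l\in\{1,2\}$ needs checking. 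The $l=1$ case is trivial for all three winner sets: some singleton always approves a winner ($v_1$ approves $c_5$; $v_2$ approves $c_4$ or $c_2$; $v_3$ approves $c_3$ or $c_6$; and every singleton approves $c_1$).

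\textbf{Step 2: the three separations.}
\begin{itemize}
\item \emph{$W=\{c_4,c_5,c_6\}$ is in $ArgEJR^h(x)$ but not $ArgEJRS^h(x)$.} For ArgEJR, take:
  \begin{itemize}
  \item for $\{v_1,v_2\}$, voter $v_1$, since $\{v_1\}\cup\{v_1,v_2\}$ approvals contain $c_4,c_5$;
  \item for $\{v_2,v_3\}$, voter $v_2$, via $\{v_2\}$ and $\{v_2,v_3\}$, giving $c_4,c_6$;
  \item for $\{v_1,v_2,v_3\}$, voter $v_1$, via $\{v_1\}$, $\{v_1,v_2\}$ and $\{v_1,v_2,v_3\}$, giving all three.
  \end{itemize}
  Spot fails at $\{v_1,v_2\}$: its $2$-eligible subsets $\{v_1\}$, $\{v_2\}$, $\{v_1,v_2\}$ each meet $W$ in one candidate only.
\item \emph{$W=\{c_2,c_3,c_5\}$ is in $ArgPJR^h(x)$ but not $ArgEJR^h(x)$.} The unions over $2$-eligible subsets meet $W$ in at least two candidates:
  \begin{itemize}
  \item $\{c_2,c_5\}$ for $\{v_1,v_2\}$;
  \item $\{c_2,c_3\}$ for $\{v_2,v_3\}$;
  \item all of $W$ for $\{v_1,v_2,v_3\}$.
  \end{itemize}
  ArgEJR fails at $\{v_1,v_2\}$: via $v_1$ only $c_5$ is reached, and via $v_2$ only $c_2$.
\item \emph{$W=\{c_1,c_3,c_6\}$ is in $ArgJR^h(x)$ but not $ArgPJR^h(x)$.} ArgJR holds because $c_1$ is approved by every singleton. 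ArgPJR fails at $\{v_1,v_2\}$, whose $2$-eligible union is $\{c_1,c_2,c_4,c_5\}$ and meets $W$ only in $c_1$.
\end{itemize}

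\textbf{Main obstacle.} The only real care point is Step 1's exhaustive determination of which sets are $2$-cohesive, and the exclusion of $l\ge 3$. Every positive check in Step 2 quantifies over all cohesive sets, so missing one would invalidate a membership claim. I would double-check that enumeration by recomputing each grounded extension from the pooled attacks in Figure \ref{fig_first} (mechanically, as in the Lean formalisation). The negative checks only need a single violating set, namely $\{v_1,v_2\}$ with $l=2$, and are therefore straightforward.
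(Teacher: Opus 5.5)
Your proof is correct and follows the paper's primary route: the paper also takes Example~\ref{ex_city_planning} with the permissive $h$ as the witness for all three separations, via the winner sets $\{c_4,c_5,c_6\}$, $\{c_2,c_3,c_5\}$ and $\{c_1,c_3,c_6\}$ from Section 4.1. Your table of grounded extensions, your enumeration of the argumentative $2$-cohesive sets (with none for $l\ge 3$) and your checks at $\{v_1,v_2\}$ are all accurate. The paper additionally notes that the second and third separations follow from Theorem~\ref{thm_preservation_m_m_representations} together with the classical separations, and it gives a separate two-voter, eight-candidate profile for the first; your explicit verification of Example~\ref{ex_city_planning} makes both of these unnecessary.
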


\subsection{4.3 Existence results}    
Provision of only $\mathsf{ArgJR}$ is unconditionally guaranteed. 
\begin{theorem}[Existence for $\mathsf{ArgJR}$] \label{thm_existence_ajr} 
	For any $(V, C, B, k) \in \mathfrak{M}^{\mathfrak{A}}$,  with respect to any $h$, 
      there is some size-$k$ subset $W$ of $C$ such that 
	$W$ provides $\mathsf{ArgJR}$. 
\end{theorem}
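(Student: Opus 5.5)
The plan is to reduce to the permissive $h$ and then build $W$ by a greedy, inductive covering over voters, in the spirit of the classical greedy argument for $\mathsf{JR}$.

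\textbf{Reduction to the permissive $h$.} By definition, every $(1, h_1(1), h_2(1))$-cohesive set is in particular argumentative $1$-cohesive, which by Proposition \ref{prop_conservation} is the same as $(1,0,0)$-cohesive. So a $W$ providing $\mathsf{ArgJR}$ with respect to the permissive $h$ provides it with respect to every $h$. It therefore suffices to find a size-$k$ $W$ such that every argumentative $1$-cohesive $V'$ has some $V''\in\eligible{1}{V'}$ with $W\cap B^{\forall}_{V''}\neq\emptyset$.

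\textbf{Greedy construction.} Put $q = |V|/k$, and call $S\subseteq V$ a \emph{witness} if $S$ is $1$-eligible and $|\eligible{1}{S}|\geq 2^{q}-1$. This property depends only on $S$ itself. By Proposition \ref{prop_implicit_size_requirement}, every witness has $|S|\geq q$. I maintain a pool $U$ of unremoved voters, initially $V$, and a set $W$, initially empty, and repeat the following step. While some witness $S\subseteq U$ exists, I choose one that is inclusion-minimal, pick any $c\in B^{\forall}_S$ (non-empty since $S$ is $1$-eligible), add $c$ to $W$, and set $U := U\setminus S$. The witnesses chosen are pairwise disjoint and each has at least $q$ voters, so there are at most $k$ iterations and $|W|\leq k$. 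If $|W|<k$, I pad $W$ with arbitrary further candidates, which is possible since $k\leq |C|$. Adding candidates never destroys $\mathsf{ArgJR}$, because the condition is existential in $W\cap B^{\forall}_{V''}$.

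\textbf{Correctness.} Let $V'$ be argumentative $1$-cohesive and let $S'\subseteq V'$ be its witness. If $S'$ contains one of the chosen witnesses $S_i$, then $S_i\in\eligible{1}{V'}$ and $c_i\in W\cap B^{\forall}_{S_i}$, so $V'$ is served. If $S'\subseteq U$ at termination, this contradicts the stopping condition. The remaining case is when $S'$ meets some removed $S_i$ without containing any of them.

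\textbf{Main obstacle.} This remaining case is where I expect the real difficulty, since non-monotonicity prevents the classical JR shortcut ``some voter approves a winner.'' My first attempt is to strengthen the invariant inductively on the iteration number, as the ``direct inductive covering'' the paper alludes to. The aim is to show that if a witness $S'$ intersects $S_i$, then $S'\cup S_i$, or $S'\cap U_{i}$ where $U_i$ is the pool before step $i$, yields an eligible subset of $V'$ that approves a chosen winner. To get there, I would select $S_i$ to maximise the number of eligible subsets among witnesses in $U$, and then argue by counting: $2^{q}-1$ eligible subsets of $S'$ cannot all avoid subsets of $S'\cap S_i$ unless $|S'\setminus S_i|\geq q$.

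If that counting works, the following holds. Either $S'\setminus(\bigcup_i S_i)$ still contains $2^{q}-1$ eligible subsets, making it a witness inside $U$ and contradicting termination. Or some eligible subset of $S'$ lies inside a removed block and is handled by the witness chosen there. In the latter case I may need to modify the selection rule so that the chosen candidate $c_i$ is approved by every eligible subset of $S_i$ that could serve as a witness's subset, using the common-candidate idea behind Proposition \ref{prop_lifting}.
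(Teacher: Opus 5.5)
\textbf{There is a genuine gap, in the case you flag yourself.} Your skeleton matches the covering induction the paper uses: pairwise disjoint witnesses of size at least $|V|/k$, hence at most $k$ rounds, then padding. Your reduction to the permissive $h$ is also fine. But with the rule ``pick any $c\in B^{\forall}_S$'' the remaining case is not merely unproven; it is false.

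Take voters $a,b,x,d$ with $k=2$ (so $q=2$) and candidates $c_1,y,z,p$. Every ballot contains $p\to p$; ballot $a$ adds $z\to y$; ballots $b$ and $x$ add $y\to c_1$; ballot $d$ puts a self-loop on every candidate. Then $B^{\forall}_{\{a\}}=B^{\forall}_{\{a,b\}}=\{z,c_1\}$ and $B^{\forall}_{\{b\}}=B^{\forall}_{\{x\}}=B^{\forall}_{\{b,x\}}=\{y,z\}$, while every voter set containing $d$ approves nothing.

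Your procedure may take the inclusion-minimal witness $\{a,b\}$, pick $c_1$, find no witness inside $\{x,d\}$, and pad with $p$. Yet $\{b,x\}$ is argumentative $1$-cohesive, even $(1,1,1)$-cohesive, so it counts for every $h$. All of its eligible subsets approve exactly $\{y,z\}$, which misses $W=\{c_1,p\}$.

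In particular, your step ``some eligible subset of $S'$ lies inside a removed block and is handled by the witness chosen there'' fails: $\{b\}$ is eligible and lies in $\{a,b\}$, but does not approve $c_1$. Maximising the number of eligible subsets does not help either: it selects $\{a,b,x\}$, and picking $c_1$ fails the same way. So the problem lies in the choice of $c$, not of $S_i$.

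\textbf{The missing idea} is the one your closing remark gestures at, and it settles the case outright rather than as a contingency. The grounded extension is the least fixpoint of $F$. So for a $1$-eligible $S$ there is a candidate $c$ unattacked in $(C,\bigcup_{v\in S}R_v)$; if there were none, $F(\emptyset)=\emptyset$ and the grounded extension would be empty. Such a $c$ is unattacked in every sub-union, so $c\in B^{\forall}_{\{u\}}$ for every $u\in S$. This is the argument in the proof of Proposition \ref{prop_lifting}, which the paper invokes at exactly this step.

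With $c_i$ chosen this way, suppose the witness $S'$ of $V'$ meets a removed block $S_i$ at a voter $u$. Then $\{u\}\in\eligible{1}{V'}$ and $c_i\in W\cap B^{\forall}_{\{u\}}$, so $V'$ is served by a singleton. Together with your termination case this completes the proof, and neither inclusion-minimality nor counting is needed. In the example, picking $z$ instead of $c_1$ serves $\{b,x\}$ through $\{b\}$. This is precisely the paper's $\mathsf{cover}$ induction: each round makes at least $|V|/k$ previously uncovered voters individually approve a winner.
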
 

\begin{theorem}[Impossibility for {\small $\mathsf{ArgPJR}/\mathsf{ArgEJR}/\mathsf{ArgEJR}\text{-}\mathsf{Spot}$}] \label{thm_impossibility_zero} 
	For any $\rho \in \{\mathsf{ArgPJR}, \mathsf{ArgEJR}, \mathsf{ArgEJR}\text{-}\mathsf{Spot}\}$,
	there is some 
	$(V, C, B, k) \in \mathfrak{M}^{\mathfrak{A}}$  
	and some $h$ 
	such that, with respect to $h$, 
      no size-$k$ subset $W$ of $C$ provides 
	 $\rho$. 
\end{theorem}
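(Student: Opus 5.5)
By Theorem~\ref{thm_set_inclusion_representations}, $ArgEJRS^h(x) \subseteq ArgEJR^h(x) \subseteq ArgPJR^h(x)$ for every $x$ and $h$. It therefore suffices to exhibit one argumentative voting profile $x$ and one $h$ with $ArgPJR^h(x) = \emptyset$. The same witness then settles all three choices of $\rho$. I would take the permissive $h$, i.e.\ $h(l) = (0,0)$ for all $l$. By Proposition~\ref{prop_conservation}, $(l,h_1(l),h_2(l))$-cohesion is then just argumentative $l$-cohesion, and the only thing to control is which sets are argumentative $l$-cohesive and what their unions $\bigcup_{V'' \in \eligible{l}{V'}} B^{\forall}_{V''}$ are.

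The construction has to exploit what separates $\mathfrak{M}^{\mathfrak{A}}$ from $\mathfrak{M}^{\mathfrak{M}}$, namely non-monotonic collective approvals. In $\mathfrak{M}^{\mathfrak{M}}$, $\mathsf{PJR}$ is always satisfiable, so Theorem~\ref{thm_preservation_m_m_representations} rules out any profile of the form $\tau(\cdot)$.

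\begin{enumerate}
\item \textbf{Shape of the obstruction.} For $l=1$, the union over $1$-eligible subsets of a cohesive $V'$ contains every individual approval, so $l=1$ demands are easy to meet. I would therefore make the conflict occur at $l \geq 2$. At $l \geq 2$, singletons (and other small subsets) approving fewer than $l$ candidates drop out of $\eligible{l}{V'}$. The union can then be confined to candidates approved only collectively, via the defence mechanism of Example~\ref{ex_city_planning}.
\item \textbf{The profile.} Concretely, I would aim for several argumentative $2$-cohesive groups, with $k$ small. Their $2$-eligible subsets approve candidate sets that are pairwise disjoint, or overlap in at most one candidate, so that satisfying all of them needs more than $k$ winners. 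Approvals are switched on only for the intended coalitions by self-looping ``blocker'' candidates. Each voter's ballot contains the same self-loops around the blockers. Blocker $b_v$ attacks the target candidate in every ballot. The attack from an approved, unattacked candidate onto $b_v$ appears only in $v$'s ballot. In this way a target is approved exactly by the supersets of a prescribed coalition. Self-loops elsewhere kill all unwanted approvals.
\item \textbf{Counting check.} Let $q=|V|/k$ be the quota. For each intended group $V''$ I need $|\eligible{2}{V''}| \geq 2^{2q}-1$, and groups of the right size can have this at most if all their nonempty subsets approve at least $2$ candidates. So I would let $|V''| = 2q$ and arrange that every nonempty subset of $V''$ approves exactly two designated candidates. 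To prevent monotonicity (Proposition~\ref{prop_monotonicity_argumentative_cohesion}) from letting supersets mix unions in a helpful way, I would ensure that the unions of the overlapping supersets still force winners from each group's private pair. I would then check, by a case analysis over all $\binom{|C|}{k}$ winner sets, that some cohesive group receives fewer than $2$ winners from its union.
\end{enumerate}

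The main obstacle is step 3. The size requirement of Proposition~\ref{prop_implicit_size_requirement} forces cohesive groups to be quota-sized, and supersets of cohesive sets are again cohesive with larger unions. The difficulty is to make the demands genuinely incompatible rather than satisfiable by shared candidates. I would first try the smallest case with $k=2$, searching (by hand, or mechanically, as with the Lean development) over profiles with few voters and candidates built from the gadgets of step 2. Only if that fails would I move to $l=2$ with larger $k$.
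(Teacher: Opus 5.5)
\textbf{There is a genuine gap: you never produce the witness profile, and the one concrete design you commit to cannot yield it.} Your reduction itself is right and matches the paper: by Theorem~\ref{thm_set_inclusion_representations}, a single profile with empty $ArgPJR^h$ under the permissive $h$ handles all three axioms.

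You read Proposition~\ref{prop_implicit_size_requirement} as forcing witnesses to have exactly $2q$ voters, but it is only a lower bound. With $|V''|=2q$, the count $2^{2q}-1$ makes every nonempty subset of $V''$ $2$-eligible, singletons included. The union over $\eligible{2}{V''}$ then contains each member's individual approvals, so the confinement you wanted in step~1 is lost.

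Concretely, suppose each member of a designated group individually approves the group's pair. Take a maximal family of pairwise disjoint designated groups. It has at most $|V|/(2q)=k/2$ members, so all their pairs fit into $W$. Every designated group is then served: it shares some voter $v$ with a chosen group, and $\{v\}$ is one of its $2$-eligible subsets. (With ``exactly two'', overlapping groups even have identical pairs.) So the incompatibility you plan to create in step~3 cannot arise.

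Your $k=2$ pilot is also hopeless. With $k=2$, a $2$-cohesive witness must be $V$ itself with all nonempty subsets $2$-eligible, and no set is $l$-cohesive for $l\ge 3$. Then an unattacked candidate of $(C,\bigcup_{v\in V}R_v)$, which every nonempty voter set approves, together with a second member of $B^{\forall}_V$ provides $\mathsf{ArgPJR}$. If no set is $2$-cohesive, Theorem~\ref{thm_existence_ajr} applies instead.

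\textbf{The missing idea} is to use witnesses strictly larger than $l\cdot|V|/k$ in which only the large subsets are $l$-eligible. The relevant union is then a shared $c_0$ plus candidates private to large coalitions, and one can count.

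The paper takes $|V|=k=12$ (quota $1$). Using Lemmas~\ref{lem_collective_approval_gadgets} and~\ref{lem_non_approval_gadgets}, it realises a profile in which every voter set of size $1,\ldots,12$ approves $c_0$ plus $0,1,1,2,2,2,3,3,3,4,4,4$ candidates private to that set. The forced winners are then:
\begin{itemize}
\item singletons force $c_0$;
\item four disjoint triples ($2$-cohesive) force one private candidate each;
\item two disjoint six-sets ($3$-cohesive) force two each;
\item a nine-set ($4$-cohesive) forces three;
\item $V$ ($5$-cohesive) forces four.
\end{itemize}
These pools are disjoint, so $|W|\ge 16>12$.

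Your step-1 intuition would in fact already succeed at $l=2$ alone. Take $|V|=k=6$, a common $c_0$, and a candidate private to each pair and each triple. Every triple is $2$-cohesive, and singletons force $c_0$. Swapping a chosen triple-candidate for one of its pairs loses nothing, so at most five chosen pairs would have to meet every triple. That means a graph with at least ten edges on six vertices would be triangle-free, contradicting Mantel's theorem.
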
 

Nonetheless, the robust $h$ gives us the following result. 

\begin{theorem}[Existence for all] \label{thm_existence_el} 
	For any $\rho \in \{\mathsf{ArgPJR},\\\mathsf{ArgEJR},\mathsf{ArgEJR}\text{-}\mathsf{Spot}\}$ and 
	for any argumentative voting profile 
	$(V, C, B, k) \in \mathfrak{M}^{\mathfrak{A}}$ 
      there is some size-$k$ subset $W$ of $C$ such that 
	$W$ provides $\rho$ with respect to the robust $h$. 
\end{theorem}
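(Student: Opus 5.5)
The plan is to prove the statement for $\mathsf{ArgEJR}\text{-}\mathsf{Spot}$ only. Theorem \ref{thm_set_inclusion_representations} gives $ArgEJRS^h(x) \subseteq ArgEJR^h(x) \subseteq ArgPJR^h(x)$, so the other two axioms then follow.

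\textbf{Step 1: what robust cohesion gives.} Fix $l$ and suppose $V'$ is $(l,l,l)$-cohesive. By the minimum-approval-count condition, every non-empty $V'' \subseteq V'$ satisfies $|B^{\forall}_{V''}| \geq l$. Hence $\eligible{l}{V'}$ is exactly the set of non-empty subsets of $V'$, and in particular every singleton $\{v\}$ with $v \in V'$ is $l$-eligible. Let $K(V') = \bigcap_{V'' \in \eligible{l}{V'}} B^{\forall}_{V''}$. The common-candidates condition gives $|K(V')| \geq l$, and we have $K(V') \subseteq B^{\forall}_{\{v\}}$ for each $v \in V'$. Argumentative $l$-cohesion yields some $V'' \subseteq V'$ with $|\eligible{l}{V''}| \geq 2^{l|V|/k}-1$. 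Proposition \ref{prop_implicit_size_requirement} then gives $|V'| \geq |V''| \geq l|V|/k$.

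\textbf{Step 2: heredity for large subsets.} Let $U \subseteq V'$ with $|U| \geq l|V|/k$. I claim $U$ is again $(l,l,l)$-cohesive.
\begin{itemize}
\item Every non-empty subset of $U$ approves at least $l$ candidates, so all $2^{|U|}-1 \geq 2^{l|V|/k}-1$ non-empty subsets of $U$ are $l$-eligible. Taking $U$ itself as the witness, $U$ is argumentative $l$-cohesive.
\item $K(U) \supseteq K(V')$, since the intersection for $U$ ranges over fewer sets.
\item The minimum approval count is inherited directly.
\end{itemize}

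\textbf{Step 3: greedy covering construction.} Start with $N = V$ and $W = \emptyset$. While some subset of $N$ is $(l,l,l)$-cohesive for some $l \geq 1$, do the following.
\begin{itemize}
\item Choose such a group $G \subseteq N$ with the largest possible $l$.
\item Add $l$ candidates of $K(G)$ to $W$.
\item Remove from $N$ a subset $G' \subseteq G$ of exactly $\lceil l|V|/k\rceil$ voters. This is possible by Step 1.
\end{itemize}
Each round charges $l$ candidates against at least $l|V|/k$ removed voters. Since at most $|V|$ voters are ever removed, at most $k$ candidates are added in total. Finally, pad $W$ arbitrarily to size $k$; this is possible because $k \leq |C|$. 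By Step 2 one could equally remove $G'$ and record it as the group; either way the accounting is the same.

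\textbf{Step 4: verification.} Let $V'$ be any $(l,l,l)$-cohesive subset of $V$.
\begin{itemize}
\item Some member of $V'$ is removed. Otherwise $V' \subseteq N$ at termination, contradicting the stopping condition.
\item Consider the first round in which a member $v \in V'$ is removed. Just before this round, all of $V'$ was still in $N$. So $V'$ was an available $(l,l,l)$-cohesive group, and by maximality the chosen $G$ has level $l' \geq l$.
\item The $l'$ candidates added in that round lie in $K(G) \subseteq B^{\forall}_{\{v\}}$, since $v \in G$.
\item By Step 1, $\{v\} \in \eligible{l}{V'}$. Thus $V'' = \{v\}$ witnesses $|W \cap B^{\forall}_{V''}| \geq l' \geq l$.
\end{itemize}
So $W$ provides $\mathsf{ArgEJR}\text{-}\mathsf{Spot}$ with respect to the robust $h$.

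The main obstacle is Step 1. The non-monotonicity of argumentative approvals means we cannot in general pass from a group's approvals to an individual's approvals. The robust $h$ repairs this in two ways: the minimum-count condition makes singletons $l$-eligible, and the common-candidates condition makes $K(G)$ individually approved by every member. I would check these two points most carefully, together with the integrality of the quota in the budget count of Step 3.
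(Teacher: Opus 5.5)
Your proof is correct and follows essentially the paper's route. Both arguments are a greedy covering argument: repeatedly serve a maximum-level $(l,l,l)$-cohesive group among the not-yet-served voters with $l$ of its common candidates, charge those $l$ seats against a Hare-quota block of voters, and obtain $\mathsf{ArgEJR}$ and $\mathsf{ArgPJR}$ from Theorem~\ref{thm_set_inclusion_representations}. The paper phrases this as a nested induction on cohesion rank and residual over the set $\mathsf{cover}(C')$; you instead run an explicit loop that removes exactly $\lceil l|V|/k\rceil$ voters and verify via the first round in which a member of $V'$ is removed, which also spells out a point the paper only states as ``some $v$'': that robustness gives $K(G)\subseteq B^{\forall}_{\{v\}}$ for \emph{every} $v\in G$.
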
  

\subsection{4.4 Computational complexity results}  
\newcommand{\ArgEJRSpot}{\mathsf{ArgEJR}$-$\mathsf{Spot}} 
\newcommand{\ArgEJR}{\mathsf{ArgEJR}} 
\newcommand{\ArgPJR}{\mathsf{ArgPJR}} 
\newcommand{\ArgJR}{\mathsf{ArgJR}} 
Verification is $\mathsf{coNP}$-hard  
for $\mathsf{ArgJR}$ and thus for all axioms. 
\begin{theorem}[Verification hardness] \label{thm_verification_hardness} 
	With respect to any $h$, determining if a given 
       size-$k$ subset $W$ of $C$ 
	provides $\mathsf{ArgJR}$, 
	$\mathsf{ArgPJR}$, $\mathsf{ArgEJR}$ 
	or $\mathsf{ArgEJR}$-$\mathsf{Spot}$ is 
	$\mathsf{coNP}$-hard. 
\end{theorem}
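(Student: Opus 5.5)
The plan is a polynomial-time reduction from \textsc{Independent Set} to the \emph{complement} of $\mathsf{ArgJR}$-verification. The construction should make the cohesion side-conditions hold automatically, so that it works for every $h$.

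\textbf{Step 1: unfold what a violation is.} By the definition of $\mathsf{ArgJR}$, $W$ fails $\mathsf{ArgJR}$ iff there is a $(1,h_1(1),h_2(1))$-cohesive $V'$ such that $W\cap B^{\forall}_{V''}=\emptyset$ for every $V''\in\eligible{1}{V'}$. Since $h_1(1),h_2(1)\in\{0,1\}$, I will add a candidate $c_0$ that is unattacked in every ballot. Then every non-empty $V''$ approves $c_0$, so three things hold at once. Every non-empty subset is $1$-eligible. Condition 2 (common candidates) holds with $c_0$. Condition 3 (minimum approval count) holds, since every non-empty subset approves at least one candidate. Argumentative $1$-cohesion of $V'$ then reduces to $2^{|V'|}-1\geq 2^{|V|/k}-1$, i.e.\ $|V'|\geq |V|/k$ (compare Proposition \ref{prop_implicit_size_requirement}). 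So $W$ violates $\mathsf{ArgJR}$ iff some $V'$ with $|V'|\ge |V|/k$ has no non-empty subset approving a member of $W$. I will put $c_0\notin W$.

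\textbf{Step 2: the gadgets.} Given a graph $G=(N,E)$ and a target $t$, make each vertex a voter.
\begin{itemize}
\item \emph{Edge gadget.} For each edge $e=\{u,v\}$, add a candidate $w_e$ and two candidates $x_{e,u},x_{e,v}$. The $x$'s carry self-loops in all ballots and both attack $w_e$. Only voter $u$'s ballot has $c_0\to x_{e,u}$, and only voter $v$'s ballot has $c_0\to x_{e,v}$. In the grounded extension the $x$'s are never accepted. The candidate $w_e$ is accepted exactly when both of its attackers are attacked by the accepted $c_0$, that is, iff $\{u,v\}\subseteq V''$.
\item \emph{Padding voters.} To fix the quota, add $D$ dummy voters and one candidate $w_d$, attacked by a self-looping $y$. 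Every dummy ballot contains $c_0\to y$, so any set containing a dummy approves $w_d$.
\item \emph{Filler candidates.} Add self-looping candidates that no set ever approves.
\end{itemize}
Let $W=\{w_e\}_{e\in E}\cup\{w_d\}\cup\text{fillers}$, with $k=|W|$ chosen as $\max(|E|+1,|N|)$. Set $D=tk-|N|\geq 0$, so that $|V|/k=t$.

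\textbf{Step 3: correctness.} A set $V'$ of dummy-free voters approves no winner in any of its subsets iff it contains no edge, i.e.\ iff $V'$ is an independent set. Any $V'$ containing a dummy is not a witness, because that dummy's singleton approves $w_d$. Hence $W$ fails $\mathsf{ArgJR}$ iff $G$ has an independent set of size $\ge t$. Since \textsc{Independent Set} is $\mathsf{NP}$-complete, $\mathsf{ArgJR}$-verification is $\mathsf{coNP}$-hard. The construction is polynomial, because grounded extensions are polynomial-time computable.

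\textbf{Step 4 and main obstacle: the other three axioms.} The hardest step is transferring hardness to $\mathsf{ArgPJR}$, $\mathsf{ArgEJR}$ and $\mathsf{ArgEJR}\text{-}\mathsf{Spot}$ for arbitrary $h$. For $l=1$ their requirements coincide with that of $\mathsf{ArgJR}$: a single winner in the union over eligible subsets, resp.\ in the union over those containing $v$, resp.\ in one eligible set. So a violation at $l=1$ transfers directly. What remains to check is that in a yes-instance of the complement (no large independent set), $W$ also meets the $l\geq 2$ demands. Otherwise the reduction would only show hardness of a different question.

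My first attempt would be to argue from the counts. Every $l$-cohesive $V'$ has $|V'|\ge l t$. It therefore contains many edges and dummies, whose associated winners $w_e,w_d$ lie in the relevant unions. I would then show that some fixed voter $v$ lies in $l$-eligible subsets covering $l$ distinct winners; this also requires confirming that the relevant subsets approve at least $l$ candidates and so are $l$-eligible.

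If that counting argument is too delicate, the fallback is a modification. I would duplicate each dummy gadget and each edge gadget enough times that any large set trivially collects $l$ winners in a single eligible set. This secures $\mathsf{ArgEJR}\text{-}\mathsf{Spot}$, and by Theorem \ref{thm_set_inclusion_representations} also $\mathsf{ArgEJR}$ and $\mathsf{ArgPJR}$. The number of winners can be kept polynomial by capping $l\le k$.
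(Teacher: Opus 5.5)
Your Steps 1--3 are correct. Up to complementing the graph, they are the paper's reduction: the paper reduces from \textsc{Clique}, with gadgets on non-edges, a universally approved $c_0$, and dummies fixing the quota. The gap is Step 4. For your gadgets, the implication ``no independent set of size $t$ $\Rightarrow$ $W$ provides $\mathsf{ArgPJR}$'' is false, so nothing is established for the three stronger axioms. The cause is that all dummies share the single winner $w_d$.

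Take any $2t$ dummies as $V'$, with the gadget attacks and self-loops present in every ballot as you describe. Every non-empty subset of $V'$ approves exactly $\{c_0,w_d\}$, so $V'$ is $(2,2,2)$-cohesive. It is therefore cohesive for every $h$, even the robust one. Yet $W\cap\bigcup_{V''\in\eligible{2}{V'}}B^{\forall}_{V''}=\{w_d\}$.

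Since $D=tk-|N|\ge 2t$ whenever $t\ge 2$ and $|N|\ge 4$, $W$ fails $\mathsf{ArgPJR}$ on essentially every no-instance. For example, $G=K_3$ with $t=2$ gives $k=4$ and $D=5$, although $K_3$ has no independent set of size $2$. Your counting heuristic misses that arbitrarily many dummies contribute only one winner.

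The fallback does not rescue it. With $r$ copies of the dummy winner, a group of $(r+1)t$ dummies is $(r+1,r+1,r+1)$-cohesive, because $c_0$ adds one approved non-winner, yet it approves only $r$ winners. Excluding such groups needs $D<(r+1)t$, i.e.\ $r>k-1-|N|/t$. Since $|W|=k$ forces $r\le k-|E|$, this fails once $|E|\ge 1+|N|/t$.

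The missing idea is that winners should be shared by all gadgets rather than owned by one, so that approval of $W$ is all-or-nothing. In the paper, each $c_t\in W$ is guarded by $c^{\circ}_t$, and every $c^{\circ}_t$ is attacked by every non-edge gadget $c^{\bullet}_e$ and by every dummy's private candidate. One non-adjacent pair or one dummy therefore defeats all guards at once. The witnessing eligible set (a dummy singleton or a non-adjacent pair) then approves all $k\ge l$ winners at every cohesion level $l$. This gives $\mathsf{ArgEJR}\text{-}\mathsf{Spot}$ directly, and Theorem~\ref{thm_set_inclusion_representations} gives the rest.

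A repair closer to your construction also works:
\begin{itemize}
\item Put $c_0\to x_{e,u}$ for all $e,u$ into every dummy ballot.
\item Replace each filler's self-loop by a self-looping attacker that dummies defeat.
\end{itemize}
Then a dummy singleton approves all of $W$. For a dummy-free cohesive $V'$, which has $|V'|\ge lt$, split $lt$ of its voters into $l$ disjoint blocks of size $t$. Each block contains an edge, so $V'$ itself is $l$-eligible and approves at least $l$ winners.
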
   

\begin{algorithm}[tb]
	\caption{\textsf{GreedyGrounded} algorithm}
\label{alg:algorithm}
	\textbf{Input}: An argumentative voting profile $(V,C,B,k)$\\
%	\textbf{Temporaries}: Two maps $Alloc$ and $D$. 
%	 A subset $\Gamma$ of $\mathcal{P}(V)$. 
	\textbf{Output}: A size-$k$ subset of $C$. 
\begin{algorithmic}[1] %[1] enables line numbers
\STATE $out \leftarrow \emptyset$. $Alloc \leftarrow \{(v, 0) \mid v \in V\}$.
 $D \leftarrow \emptyset$. 
	  $waiting \leftarrow 
		\mathsf{notAttkd}(C,  
		\{V' \subseteq V \mid  
	\text{for some } c \in C, V' = (V \backslash   
	\{v \in V \mid \exists c'.(c',c) \in R_v\}) 
	V' \text{ is a maximal set satisfying: } \forall c' \in C.(c',c) \not\in \bigcup_{v \in V'}R_v
	\text{ and } 
		|V'| \geq |V|/k\}
		)$. {\color{gray}{// initialise.}} 

\WHILE{true}  
	\WHILE{$waiting \not= \emptyset$} 
	\STATE $(c, V') \leftarrow$ a greatest member of $waiting$. $waiting \leftarrow waiting \backslash \{(c, V')\}$. 
		$out \leftarrow out \cup \{c\}$.   for each $v \in V'$, $Alloc[v] \leftarrow Alloc[v]+1$. 
	\IF {$|out| = k$} 
	\STATE \textbf{return} $out$.  
	\ENDIF 
	 \IF {$V'$ is not a key in $D$} 
	\STATE $D \leftarrow D \cup \{(V',\{c\})\}$.  
	\ELSE 
	\STATE $D[V'] \leftarrow D[V'] \cup \{c\}$. 
	\ENDIF  
	\ENDWHILE 
	\STATE  
	$waiting \leftarrow \bigcup_{V' \in keys(D)} \mathsf{notAttkd}(C \backslash (D[V'] \cup 
	\{c \in C \mid \exists c' \in D[V'].(c',c) \in \bigcup_{v\in V'}R_v\}),\{V'\})$. 
	\IF {$waiting = \emptyset$}  
	     \STATE \textbf{return} a size-$k$ subset of $C$ that contains $out$. 
	\ENDIF 
\ENDWHILE
\end{algorithmic}
\end{algorithm}
Nevertheless, a size-$k$ winner set providing $\mathsf{ArgJR}$  
with respect to any $h$ can be constructed in polynomial time. 
$\mathsf{GreedyGrounded}$ in Algorithm 1 takes $(V,C,B,k)$ as the input 
and outputs 
a size-$k$ subset of $C$. 

There are three auxiliary data structures.  
\begin{enumerate} 
	\item $Alloc: V \rightarrow \mathbb{N}$ 
records an allocation score of each voter.  
For $v \in V$, the greater its value $Alloc[v]$ is, 
the more allocated $v$ is considered to be.   
\item  $D$ is a partial map 
	from voter subsets ($\subseteq \mathcal{P}(V)$) to 
		candidate subsets ($\subseteq \mathcal{P}(C)$). 
It records the candidates selected on 
		behalf of each stored subset of voters. 
%set(s) of voters. 
		$keys(D)$ denotes the set of keys in $D$, and 
$D[V']$ denotes the value of $V' \in keys(D)$.  
\item 
The priority set $waiting$ contains  
pairs of $C \times \mathcal{P}(V)$ ranked in  
the following order: $(c_1, V'_1) \geq (c_2, V'_2)$ (with non-empty $V'_1$ and $V'_2$) iff   
there is some integer $0 \leq x$ such that, for every integer 
$0 \leq y < x$,  
the same number of $y$ occurs in the multisets 
$\uplus_{v \in V'_1}\{Alloc[v]\}$ and $\uplus_{v \in V'_2}\{Alloc[v]\}$, 
and the multiplicity of $x$ is greater in 
$\uplus_{v \in V'_1}\{Alloc[v]\}$. Thus, 
	pairs representing more voters with low allocation 
		scores receive higher priority. 
\end{enumerate} 

The $waiting$ set is generated by a function 
$\mathsf{notAttkd}: \mathcal{P}(C) \times \mathcal{P}(\mathcal{P}(V)) 
\rightarrow \mathcal{P}(C \times \mathcal{P}(V))$, 
which is such that $\mathsf{notAttkd}(C', \{V'_1, \ldots, V'_j\}) =
\{(c, V') \mid c \in C' \text{ and } \exists 1 \leq i \leq j.(V' = V'_i \text{ and } 
\forall c' \in C'.(c', c) \not\in ((C' \times C') \cap (\bigcup_{v \in V'_i} R_v)))\}$.  
For example, $\mathsf{notAttkd}(C', \{V'_1, V'_2\})$ 
contains every pair $(c_i, V_i')$ ($i \in \{1,2\}$) 
such that $c_i \in C'$ and $V'_i$ does 
not attack $c_i$ in the graph $(C', (C'\times C') \cap \bigcup_{v \in V'_i} R_v)$. 

With these, 
$\mathsf{GreedyGrounded}$ constructs its output.  

For each $c \in C$, 
if $V' \subseteq V$ does not attack $c$ in $(C, \bigcup_{v \in V'} R_v)$, $|V'| \geq |V|/k$, and is a maximal such set, then 
$(c, V')$ is added into $waiting$ (line 1).  
The algorithm repeatedly removes a highest-priority pair 
$(c_x, V'_x)$ from $waiting$, $c_x$ is added to $out$, 
and $Alloc$ is updated for members of $V'$ (line 4). 
$D$ is also updated (lines 8$\sim$12).  
If the initial $waiting$ set yields fewer than $k$ winners, 
the algorithm tries to favour 
members of $keys(D)$ with more winners.
So, it generates a new $waiting$ set comprising 
all $(c_x, V'_x)$ where $c_x \in (C \backslash D[V_x'])$, 
$V'_x \in keys(D)$ and 
$D[V'_x]$ defends $c_x$ in $(C, \bigcup_{v \in V'_x} R_v)$ (line 14).\footnote{Algorithm 1 achieves this with 
$\mathsf{notAttkd}$.} 
Selection continues (line 4) 
until $k$ distinct winners are selected (lines 5$\sim$7) 
or a new $waiting$ set is empty (line 15) 
in which case the remaining winners are arbitrary (line 16).   

The invariants that $D[V'] \subseteq B^{\forall}_{V'}$ for every $V' \in keys(D)$ and that $keys(D) < k$ hold at every 
execution of line 3. 

\begin{theorem}[Polynomial-time construction]\label{thm_polynomial_time_construction} {\ } For any $(V,C,B,k) 
	\in \mathfrak{M}^{\mathfrak{A}}$ 
	and any $h$, 
	{\small \textsf{GreedyGrounded}} computes in polynomial time 
	a 
	\mbox{size-$k$} subset $W$ of $C$ 
	providing $\mathsf{ArgJR}$ with respect to $h$.    
	With adjacency-matrix representations, 
	its running time is $O(k^2 \cdot |V| \cdot |C|^2)$. 
	
	The running time with an incremental 
	implementation of: the union graphs; 
	grounded-extension computations 
	for each fixed $V' \in keys(D)$; 
	and priority updates, 
	is $O(k\cdot |V|\cdot |C|^2)$. % time and uses $O(k\cdot |C|^2 
%	+ |V| \cdot |C|^2)$ space. 
\end{theorem}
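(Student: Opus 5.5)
The plan splits the statement into a correctness part ($\mathsf{ArgJR}$ is provided) and a running-time part. For correctness, I first reduce to the permissive $h$. By the definition of $(l,m,n)$-cohesion, every $(1,h_1(1),h_2(1))$-cohesive $V'$ is argumentative $1$-cohesive (Propositions \ref{prop_conservation} and \ref{prop_cohesion_monotonicity}). So it suffices to show the following: for every argumentative $1$-cohesive $V'$, some $V''\in\eligible{1}{V'}$ has $W\cap B^{\forall}_{V''}\neq\emptyset$.

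\textbf{Step 1 (setup).} Fix such a $V'$ and take a $1$-eligible $V''\subseteq V'$ with $|\eligible{1}{V''}|\geq 2^{|V|/k}-1$. By Proposition \ref{prop_implicit_size_requirement}, $|V''|\geq |V|/k$. Since $B^{\forall}_{V''}\neq\emptyset$, and the grounded extension is the least fixpoint of $F$ started from $\emptyset$, the framework $(C,\bigcup_{v\in V''}R_v)$ contains a candidate $c$ that is unattacked.

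\textbf{Step 2 (the initial waiting set).} The set of voters not attacking $c$ is exactly $V\setminus\{v\mid \exists c'.(c',c)\in R_v\}$. This is the unique maximal such set $V_c$, it contains $V''$, and $|V_c|\geq|V|/k$. Hence $(c,V_c)$ is in the initial $waiting$.

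\textbf{Step 3 (key observation).} Whenever a candidate $d$ is placed in $out$ through a first-phase pair $(d,V_x)$, $d$ is unattacked in the union graph of $V_x$. So $d$ is also unattacked, hence grounded, for every non-empty $U\subseteq V_x$. In particular, if $U=V_x\cap V'\neq\emptyset$, then $U$ is $1$-eligible and $d\in W\cap B^{\forall}_U$. Likewise, if $c$ itself lands in $W$ by any route, then $V''$ approves $c$ and we are done.

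\textbf{Step 4 (covering argument).} Suppose for contradiction that no first-phase pair intersects $V''$. Then all voters in $V''$ keep $Alloc=0$ throughout. Also $(c,V_c)$ is never removed from $waiting$, since removing it would put $c$ in $W$. Hence the first phase never empties and runs for $k$ selections. At each selection, $(c,V_c)$ is available with at least $|V''|\geq|V|/k$ zero-score voters. The priority order compares the multiplicity of $0$ first, so the chosen pair has at least $|V|/k$ zero-score voters, and these all become allocated. The voters so converted are disjoint across selections and all lie outside $V''$. After $k$ selections this gives $|V\setminus V''|\geq k\cdot|V|/k=|V|$, contradicting $V''\neq\emptyset$.

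Therefore some selected pair meets $V''\subseteq V'$, and Step 3 finishes the proof. The second phase, with its invariant $D[V']\subseteq B^{\forall}_{V'}$ coming from closure of the grounded extension under defence, only adds winners and is not needed for $\mathsf{ArgJR}$. I expect this covering step to be the main obstacle. The delicate points are: whether ``greatest'' ties can pick a pair with fewer than $|V|/k$ fresh voters (they cannot, since the comparison is lexicographic starting at score $0$); and the side case where $c$ enters $out$ via a different voter set, which Step 3 handles.

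\textbf{Running time.} Line 1 costs $O(|V|\cdot|C|^2)$: for each $c$, scan every voter's adjacency column. Each outer iteration adds at least one winner or terminates, so there are at most $k$ outer iterations and at most $k$ selections overall. Each selection scans $waiting$ and compares score multisets. Line 14 ranges over fewer than $k$ keys, and for each it builds the union graph in $O(|V|\cdot|C|^2)$ and evaluates $\mathsf{notAttkd}$ in $O(|C|^2)$. This gives $O(k^2\cdot|V|\cdot|C|^2)$ in total. If union graphs and grounded computations are maintained incrementally per key, and priorities are updated only for affected voters, the per-iteration rebuild disappears and the bound becomes $O(k\cdot|V|\cdot|C|^2)$.
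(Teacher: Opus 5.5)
Your correctness argument is sound. It uses the same mechanism as the paper: the zero-allocation-first priority forces each selection to consume at least $|V|/k$ fresh voters. The paper splits on whether the initial $waiting$ has at most $k$ pairs and treats the second case informally via the cover notion of Theorem~\ref{thm_existence_ajr}. Your route is tighter: the pair $(c,V_c)$ with $V''\subseteq V_c$ cannot leave $waiting$ unless $V''$ is hit, so after $k$ selections you get a clean counting contradiction.

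The gap is in the running-time half. You claim every outer iteration adds a new winner, giving at most $k$ outer iterations and at most $k$ selections overall. This is false. A pair produced at line 14 only needs $c\notin D[V']$, and $c$ may already be in $out$ through another key. Take $k=3$, $R_{v_1}=\{(a,x),(x,b),(z,z),(z,y)\}$ and $R_{v_2}=\{(b,y),(y,a),(z,z),(z,x)\}$. The initial $waiting$ is $\{(a,\{v_1\}),(b,\{v_2\})\}$. The first execution of line 14 yields $\{(b,\{v_1\}),(a,\{v_2\})\}$, and that whole round leaves $out=\{a,b\}$ unchanged. Likewise, the initial $waiting$ can pair one candidate with several voter sets $V_c$, and ties between pairs sharing a voter set may be broken either way, so the first phase can make more than $k$ selections. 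With the true count, your accounting ``each selection scans $waiting$'' (up to $|C|^2$ pairs) no longer stays within $O(k^2|V||C|^2)$.

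You need the paper's route, which counts growth of a single $D[V']$ rather than of $out$, with its implicit step spelled out:
\begin{enumerate}
\item The line-14 contribution of a key depends only on $V'$ and $D[V']$.
\item $D[V']$ changes only when a pair with key $V'$ is selected.
\item Hence a key that contributes nothing in one round never contributes again, so some key is active in every round and its $D[V']$ grows each round.
\item Since $D[V']\subseteq out$ and $|out|\le k$, the number of rounds is at most $k$.
\end{enumerate}
Selections must then be bounded separately. In the first phase, only pairs whose candidate ends up in $out$ are selected, so there are at most $k$ per distinct $V_c$. In later rounds, each selection enlarges some $D[V']$, so there are fewer than $k^2$. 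Priorities should be maintained per voter set rather than per pair. Your one-line incremental $O(k|V||C|^2)$ claim needs the same repair before it is justified.
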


\section{5 Conclusions}      
We introduced multi-winner voting with argumentative ballots, 
where approvals are induced by grounded acceptance. 
The framework conservatively generalises approval voting while 
retaining relational information that ordinary approval ballots
do not capture. 

Individual approvals may be fixed, 
but combining ballots can lead to different collective approvals 
through attack and defence.
This non-monotonicity demands that justified representation be 
reformulated. 

Our argumentative cohesion notions and justified 
representation axioms 
$\mathsf{ArgJR}$, $\mathsf{ArgPJR}$, $\mathsf{ArgEJR}$ and 
$\mathsf{ArgEJR}$-$\mathsf{Spot}$ 
conservatively generalise $\mathsf{JR}$, $\mathsf{PJR}$ and 
$\mathsf{EJR}$, and preserve their inclusion hierarchy. 
We showed that a winner set providing $\mathsf{ArgJR}$ always 
exists, whereas winner sets providing the stronger axioms 
need not exist. The robust $h$ restores existence for all of them. 
Identifying a more precise boundary remains open. 
Computationally, a winner set satisfying $\mathsf{ArgJR}$ is 
polynomial-time constructible, even though verifying whether 
a given winner set provides it is $\mathsf{coNP}$-hard.

Future work includes adapting further representation axioms \cite{Kalayci25,Brill23}, incorporating candidate costs \cite{Peters21}, 
understanding strategic behaviour, and developing practical methods for eliciting argumentative ballots.

\section*{Author Contributions} 
R.A.: Conceptualisation, methodology, formal analysis, investigation, software, validation and writing - original draft. 

\noindent 
H.O.: Formal analysis of the proof of Theorem \ref{thm_verification_hardness}. 

\section*{Acknowledgements}  
This work was supported in part by JSPS KAKENHI Grant Number 25K15245.

\bibliography{references} 

\section*{Appendix: Proofs} 
This appendix contains all proofs of the claims in the 
main text. To prove Proposition \ref{prop_2}, we use 
	Lemma \ref{lem_collective_approval_gadgets}. 
	To prove Theorem \ref{thm_preservation_m_m_representations}, 
	we use Lemma 
	\ref{lem_relationship_between_cohesion_argumentative_cohesion}. To prove Theorem \ref{thm_impossibility_zero}, we use 
	Lemmas \ref{lem_collective_approval_gadgets} and 
	\ref{lem_non_approval_gadgets}. 
	These lemmas appear in the final section. 
	The existence proofs for Theorem \ref{thm_existence_ajr} 
	and in particular Theorem 
	\ref{thm_existence_el},  
	use a direct inductive covering argument, 
	rather than optimisation of a specific voting rule. 
	To the best of our knowledge,
	this proof strategy is novel in the study 
	of justified representation.  
	All proofs are certified in Lean 4.

\subsection*{A1 Proofs of Propositions}   
\noindent 
\textbf{Proof of Proposition \ref{prop_approval_preserving_transformation}}. 
Let $\kappa: \mathfrak{M}^{\mathfrak{M}} \rightarrow 
   \mathfrak{M}^{\mathfrak{A}}$ be a function with the same 
   domain and co-domain as $\tau$.   
   If $\kappa$ satisfies the following condition, 
   $\kappa$ is an instantiation of $\tau$:  
   for any $(V,C,A,k) \in \mathfrak{M}^{\mathfrak{M}}$, 
	let $(V,C,B,k)$ denote $\kappa(V,C,A,k)$, 
    for any $v \in V$ and any $c \in C$,  
	\begin{itemize} 
	   \item if $c \in A_v$, then there is no $c' \in C$ 
		   such that $(c',c)$ is an edge in $B_{\{v\}}$.  
	   \item for any $(C, R_v')$ where 
		   $R_v'$ includes all edges in $B_{\{v\}}$, 
		   if $c \not\in A_v$, then 
		   $c$ is not in the grounded extension of $(C,R_v')$. 
   \end{itemize}  
   One such $\kappa$ is: 
   for any $x \in \mathfrak{M}^{\mathfrak{M}}$,   
   $\kappa(x) \equiv (V,C,B,k)$ comes with $B = (B_{\{v_1\}}, \ldots, B_{\{v_{|V|}\}})$ 
   where 
$B_{\{v_i\}}$ is $(C, \{(c, c) \mid c \not\in A_{v_i}\})$. \hfill$\Box$ \\

\noindent 
\textbf{Proof of Proposition \ref{prop_2}}. 
	Use collective approval gadgets; Lemma \ref{lem_collective_approval_gadgets}. Obvious, since 
	Fact 1 holds for $\tau'(V,C,B,k)$ but does not need hold 
	for $(V,C,B,k)$. \hfill$\Box$\\

\noindent 
\textbf{Proof of Proposition \ref{prop_implicit_size_requirement}}. 
If $|V'| < l \cdot |V|/k$, the number of subsets of $V'$ 
	is strictly smaller than $2^{l \cdot |V|/k}$. Even if 
	every non-empty subset is $l$-eligible, $|\eligible{l}{V'}|$ is strictly 
	smaller than $2^{l \cdot |V|/k} - 1$. 
\hfill$\Box$\\ 

\noindent 
\textbf{Proof of Proposition \ref{prop_monotonicity_argumentative_cohesion}}. 
$V_{super}$ contains all $l$-eligible subsets of $V'$.
\hfill$\Box$\\

\noindent 
\textbf{Proof of Proposition \ref{prop_conservation}}.  
\textbf{If}: Vacuous.  
	\textbf{Only if}: The first condition holds by assumption. The second condition 
	holds because $|\bigcap_{V'' \in \eligible{l}{V'}}B^{\forall}_{V''}|$ is at least 0. 
	The third condition holds because $min_{V'' \subseteq V'} |B^{\forall}_{V''}|$ 
	is at least 0.
\hfill$\Box$\\ 

\noindent 
\textbf{Proof of Proposition \ref{prop_cohesion_monotonicity}}. 
If $V'$ is $(l, m, n)$-cohesive, then: 
	\begin{itemize} 
		\item Every member of $\eligible{l}{V'}$  
	approves at least $m$ same candidates, which means 
	at least $m'$ same candidates are approved.  
		\item Every non-empty subset of $V'$ approves 
			at least $n$ candidates, which means 
			at least $n'$ candidates are approved. 
	\end{itemize} 
\hfill$\Box$\\

\noindent 
\textbf{Proof of Proposition \ref{prop_lifting}}. 
By assumption, $V'$ is $l$-eligible. So, 
	$B^{\forall}_{V'} \not= \emptyset$. 
	Thus, there must exist a candidate $c$ 
	in $\bigcap_{v \in V'} B^{\forall}_{\{v\}}$ 
	which is not attacked in $B_{V'}$. 
	For, otherwise, $B_{V'}$ would contain no unattacked candidate 
	and $B^{\forall}_{V'}$ would be an empty set. 
	Hence, for any subset $V''$ of $V'$, 
	$c \in B^{\forall}_{V''}$.
\hfill$\Box$\\ 

\subsection*{A2 Proofs of Theorems} 
\textbf{Proof of Theorem \ref{thm_preservation_m_m_representations}}. 
Let $(V,C,B,k)$ denote $\tau(x)$. 
	By Proposition \ref{prop_approval_preserving_transformation},  
	for any non-empty subset $V'$ of $V$, 
	$(\bigcap_{v \in V'} A_v) = A^{\forall}_{V'} = B^{\forall}_{V'}$.  

	For the first obligation, \textbf{Only if}:   
	Let $\Gamma(1)$ denote the set of all non-empty subsets $V'$ of 
	$V$ such that $V'$ is $1$-cohesive in $x$ and that 
	there is no strict non-empty subset $V''$ of $V'$ 
	such that $V''$ is $1$-cohesive in $x$.  
	Since $W$ provides $\mathsf{JR}$, 
	for each $V' \in \Gamma(1)$, 
	there is some $v \in V'$ such that $|W \cap A^{\forall}_{\{v\}}| \geq 1$.  
	For any $1$-cohesive subset $V'_o$ in $x$ not in $\Gamma(1)$, 
	there is some $V' \in \Gamma(1)$ such that $V' \subset V'_o$. 
	
	By Lemma 
	\ref{lem_relationship_between_cohesion_argumentative_cohesion}, 
	$V'$ is $(1, 1, 1)$-cohesive in $\tau(x)$ and that 
	there is no strict non-empty subset $V''$ of $V'$ such that 
	$V''$ is argumentative $1$-cohesive in $\tau(x)$. 
	Since $\{v\} \in \eligible{l}{V'}$, 
	$|W \cap B^{\forall}_{\{v\}}| \geq 1$.   
	By Proposition \ref{prop_cohesion_monotonicity}, 
	$V'$ is $(1, h_1(1), h_2(1))$-cohesive in $\tau(x)$.  
	By Proposition \ref{prop_monotonicity_argumentative_cohesion}, 
	for any superset $V_{super}$ of $V'$,  $V_{super}$ is 
	argumentative 
	$1$-cohesive. If $V_{super}$ is $(1, h_1(1), h_2(1))$-cohesive, 
	$\{v\} \in \eligible{1}{V_{super}}$; otherwise, 
	there is nothing to show.

	\textbf{If}: 
	Let $\Delta(h, 1)$ denote the set of 
	all non-empty subsets $V'$ of $V$ such that $V'$ is $(1, h_1(1), h_2(1))$-cohesive 
	in $\tau(x)$ and that there is no strict non-empty subset $V''$ 
	of $V'$ such that $V''$ is $(1, h_1(1), h_2(1))$-cohesive 
	in $\tau(x)$. 
	Since $W$ provides $\mathsf{ArgJR}$ with respect to $h$,  
	for each $V' \in \Delta(h,1)$, there is some 
	$V'' \in 
	\eligible{1}{V'}$ such that $|W \cap B^{\forall}_{V''}| 
	\geq 1$. 
	Since, by Fact 1 and Proposition 1, $V'$ is $(1, 1, 1)$-cohesive in $\tau(x)$, 
	there is in particular some $v \in V''$ such 
	that $|W \cap B^{\forall}_{\{v\}}| 
	\geq 1$. For any $(1, h_1(1), h_2(1))$-cohesive subset 
	$V'_o$ 
	in $\tau(x)$ not in $\Delta(h,1)$, 
	there is some $V' \in \Delta(h,1)$ such that 
	$V' \subset V'_o$ 
	and that $v \in \eligible{1}{V'_o}$ for each $v \in V'$. 
	 
	By Lemma \ref{lem_relationship_between_cohesion_argumentative_cohesion}, 
	$V'$ is $1$-cohesive in $x$ and that there is no strict 
	non-empty 
	subset $V''$ of $V'$ such that $V''$ is $1$-cohesive in $x$. 
	Trivially, $|W \cap A^{\forall}_{\{v\}}| \geq 1$.   
	For any superset $V_{super}$ of $V'$,  
	$v \in V_{super}$. 

	For the second obligation, \textbf{Only if}:  
	For each $1 \leq l \leq k$, 
	let $\Gamma(l)$ denote the set of all non-empty subsets $V'$ of 
	$V$ such that $V'$ is $l$-cohesive in $x$ and that 
	there is no strict non-empty subset $V''$ of $V'$ 
	such that $V''$ is $l$-cohesive in $x$.  
	Since $W$ provides $\mathsf{PJR}$, 
	for each $1 \leq l \leq k$ 
	and each $V' \in \Gamma(l)$, 
	$|W \cap \bigcup_{v \in V'} A^{\forall}_{\{v\}}| \geq l$.  
	For any $l$-cohesive subset $V'_o$ in $x$ not in $\Gamma(l)$, 
	there is some $V' \in \Gamma(l)$ such that $V' \subset V'_o$.

	By Lemma \ref{lem_relationship_between_cohesion_argumentative_cohesion}, 
	$V'$ is $(l, l, l)$-cohesive in $\tau(x)$ and that 
	there is no strict non-empty subset $V''$ of $V'$ such that 
	$V''$ is argumentative $l$-cohesive in $\tau(x)$.   
	Since $\{v\} \in \eligible{l}{V'}$ for each $v \in V'$,  
	$|W \cap \bigcup_{v \in V'} B^{\forall}_{\{v\}}| \geq l$.  
	Then, clearly, $|W \cap (\bigcup_{V'' \in \eligible{l}{V'}}B_{V''}^{\forall})| 
	\geq l$. For any superset $V_{super}$ of $V'$,  
	if $V_{super}$ is $(l, h_1(l), h_2(l))$-cohesive in 
	$\tau(x)$,  
	then for each $v \in V'$, $\{v\} \in \eligible{l}{V_{super}}$; otherwise, 
	there is nothing to show. 

	\textbf{If}: 
	For each $1 \leq l \leq k$, let $\Delta(h, l)$ denote the set of 
	all non-empty subsets $V'$ of $V$ such that $V'$ is $(l, h_1(l), h_2(l))$-cohesive 
	in $\tau(x)$ and that there is no strict non-empty subset $V''$ 
	of $V'$ such that $V''$ is $(l, h_1(l), h_2(l))$-cohesive 
	in $\tau(x)$. 
	Since $W$ provides $\mathsf{ArgPJR}$ with respect to $h$,  
	for each $1 \leq l \leq k$ 
	and each $V' \in \Delta(h,l)$, there is some $V'' \in 
	\eligible{l}{V'}$ such that 
	$|W \cap 
	(\bigcup_{V'' \in \eligible{l}{V'}} B^{\forall}_{V''})| 
	\geq l$.
	Since, by Fact 1 and Proposition 1, $V'$ is $(l, l, l)$-cohesive in $\tau(x)$,  
	$\bigcup_{V'' \in \eligible{l}{V'}} B^{\forall}_{V''} 
	\subseteq \bigcup_{v \in V'}B^{\forall}_{\{v\}}$, so, 
	in particular, we have $|W \cap 
	(\bigcup_{v \in V'} B^{\forall}_{\{v\}})| 
	\geq l$. Hence, $|W \cap \bigcup_{v \in V'} A^{\forall}_{\{v\}}| \geq l$. 
	For any $(l, h_1(l), h_2(l))$-cohesive subset $V'_o$ 
	in $\tau(x)$ not in $\Delta(h,l)$, 
	there is some $V' \in \Delta(h,l)$ such that $V' \subset V'_o$ 
	and that $v \in \eligible{l}{V'_o}$ for each $v \in V'$. % $v \in V'_o$. 

	For the third obligation, 
	firstly for: $W \in ArgEJR^h(\tau(x))$ iff  
	$W \in ArgEJRS^h(\tau(x))$,  
	\textbf{Only if}:  
	Since $W$ provides $\mathsf{ArgEJR}$ with respect to $h$, 
	for any $(l, h_1(l), h_2(l))$-cohesive subset $V'$ of $V$,  
	there is some $v \in V'$ such that 
	$|W \cap (\bigcup_{V'' \in \eligible{l}{V'}, v \in V''} B^{\forall}_{V''})| \geq l$.  Since $\bigcup_{V'' \in \eligible{l}{V'}, v \in V''} B^{\forall}_{V''} 
	= B^{\forall}_{\{v\}}$ by Fact 1 and Proposition 1,   
	it follows that 
	$|W \cap B^{\forall}_{\{v\}}| \geq l$.   
	\textbf{If}: Since $W$ provides $\mathsf{ArgEJR}$-$\mathsf{Spot}$ with 
	respect to $h$, 
	for any $(l, h_1(l), h_2(l))$-cohesive subset $V'$ of $V$, 
	there is some $V'' \in \eligible{l}{V'}$ such that 
	$|W \cap B^{\forall}_{V''}| \geq l$. 
	By Fact 1 and Proposition 1, it follows 
	that $|W \cap B^{\forall}_{\{v\}}| \geq l$ 
	for any $v \in V''$. So, let $v$ be a member of $V''$. 
	By Fact 1 and Proposition 1, it follows that 
	$|W \cap (\bigcup_{V''_o \in \eligible{l}{V'}, v \in V''_o} B^{\forall}_{V''_o})| \geq l$.  

	Now for: $W \in EJR(x)$ iff $W \in ArgEJRS^h(\tau(x))$, 
	\textbf{Only if}: For each $1 \leq l \leq k$, 
	let $\Gamma(l)$ denote the set of all non-empty subsets $V'$ of 
	$V$ such that $V'$ is $l$-cohesive in $x$ and that 
	there is no strict non-empty subset $V''$ of $V'$ 
	such that $V''$ is $l$-cohesive in $x$.  
	Since $W$ provides $\mathsf{EJR}$, 
	for each $1 \leq l \leq k$ 
	and each $V' \in \Gamma(l)$, 
	there is some $v \in V'$ such that $|W \cap A^{\forall}_{\{v\}}| \geq l$.  
	For any $l$-cohesive subset $V'_o$ in $x$ not in $\Gamma(l)$, 
	there is some $V' \in \Gamma(l)$ such that $V' \subset V'_o$. 
	
	By Lemma \ref{lem_relationship_between_cohesion_argumentative_cohesion}, 
	$V'$ is $(l, l, l)$-cohesive in $\tau(x)$ and that 
	there is no strict non-empty subset $V''$ of $V'$ such that 
	$V''$ is argumentative $l$-cohesive in $\tau(x)$. 
	Since $\{v\} \in \eligible{l}{V'}$, 
	$|W \cap B^{\forall}_{\{v\}}| \geq l$.   
	By Proposition \ref{prop_cohesion_monotonicity}, 
	$V'$ is $(l, h_1(l), h_2(l))$-cohesive in $\tau(x)$.  
	By Proposition \ref{prop_monotonicity_argumentative_cohesion}, 
	for any superset $V_{super}$ of $V'$,  $V_{super}$ is argumentative 
	$l$-cohesive. If $V_{super}$ is $(l, h_1(l), h_2(l))$-cohesive, 
	$\{v\} \in \eligible{l}{V_{super}}$; otherwise, there is nothing to show. 

	\textbf{If}: 
	For each $1 \leq l \leq k$, let $\Delta(h, l)$ denote the set of 
	all non-empty subsets $V'$ of $V$ such that $V'$ is $(l, h_1(l), h_2(l))$-cohesive 
	in $\tau(x)$ and that there is no strict non-empty subset $V''$ 
	of $V'$ such that $V''$ is $(l, h_1(l), h_2(l))$-cohesive in $\tau(x)$. 
	Since $W$ provides $\mathsf{ArgEJR}$-$\mathsf{Spot}$ with respect to $h$,  
	for each $1 \leq l \leq k$ 
	and each $V' \in \Delta(h,l)$, there is some $V'' \in 
	\eligible{l}{V'}$ such that $|W \cap B^{\forall}_{V''}| \geq l$. 
	Since, by Fact 1 and Proposition 1, $V'$ is $(l, l, l)$-cohesive in $\tau(x)$, 
	there is in particular some $v \in V''$ such that $|W \cap B^{\forall}_{\{v\}}| 
	\geq l$. For any $(l, h_1(l), h_2(l))$-cohesive subset $V'_o$ 
	in $\tau(x)$ not in $\Delta(h,l)$, 
	there is some $V' \in \Delta(h,l)$ such that $V' \subset V'_o$ 
	and that $v \in \eligible{l}{V'_o}$ for each $v \in V'$. 
	 
	By Lemma \ref{lem_relationship_between_cohesion_argumentative_cohesion}, 
	$V'$ is $l$-cohesive in $x$ and that there is no strict non-empty 
	subset $V''$ of $V'$ such that $V''$ is $l$-cohesive in $x$. 
	Trivially, $|W \cap A^{\forall}_{\{v\}}| \geq l$.   
	For any superset $V_{super}$ of $V'$,  
	$v \in V_{super}$.  
	\hfill$\Box$ \\

\noindent 
\textbf{Proof of Theorem \ref{thm_set_inclusion_representations}}. 
Let $W$ be a member of $ArgEJRS^h(x)$. 
	By assumption, for any $(l, h_1(l), h_2(l))$-cohesive subset $V'$ of $V$, 
	there is some $V'' \in \eligible{l}{V'}$ such that 
	$|W \cap B_{V''}^{\forall}| \geq l$.    
	Implicitly, $|\eligible{l}{V'}| \geq 1$.  
	Let $v$ be a member of $V''$. 
	Since  
	$B_{V''}^{\forall} \subseteq \bigcup_{V_x \in \eligible{l}{V'}, v \in V_x} 
	B_{V_x}^{\forall}$, 
	it holds that $W \in ArgEJR^h(x)$, as required.  
	
	Next, let $W$ be a member of $ArgEJR^h(x)$. It is straightforward to see that 
	$W \in ArgPJR^h(x)$. 

	Next, let $W$ be a member of $ArgPJR^h(x)$. It is straightforward to see that 
	$W \in ArgJR^h(x)$. 
	\hfill$\Box$\\ 

\noindent 
\textbf{Proof of Theorem \ref{thm_separations}}. 
  As we saw in section 4.1, Example 1 is the witness.  
	But also, in general, 
	the second and the third are immediate from Theorem \ref{thm_preservation_m_m_representations} and the known results that $\mathsf{JR}$, $\mathsf{PJR}$ and $\mathsf{EJR}$ 
	do not coincide. This leaves the first obligation still to verify.  
	 
	Let $x$ be $(\{v_1,v_2\}, \{c_1,\ldots,c_8\}, B, 3)$ 
	where $B_{\{v_1\}}$ and $B_{\{v_2\}}$ are as shown below. 
	 
	 \[
\begin{tikzpicture}[
  baseline,
  box/.style={
    draw,
    rounded corners,
    inner sep=8pt
  }
]

\node[box] (B1) {
\begin{tikzcd}[row sep=small,column sep=small]
	c_1 \arrow[d] & c_8\\ 
	c_2 \arrow[loop left] \arrow[r] \arrow[ddr,bend right=10] & c_3   \\
	c_4 \arrow[loop left] \arrow[r] \arrow[dr, bend right=15] & c_5  \\ 
	c_6 \arrow[loop left] \arrow[ur] & c_7 
\end{tikzcd}
};

\node[above left=2pt and 2pt of B1.north west, anchor=south west] 
	{\(\boldsymbol{B_{\{v_1\}}}\)};

\node[box, right=2mm of B1] (B2) {
\begin{tikzcd}[row sep=small,column sep=small] 
	c_1 \arrow[dd,bend right=90] & c_8 \\ 
	c_2 \arrow[loop left] \arrow[r] \arrow[ddr,bend right=10] & c_3   \\
	c_4 \arrow[loop left]\arrow[dr, bend right=15] \arrow[r] & c_5  \\ 
	c_6 \arrow[loop left] \arrow[ruu]  & c_7 
\end{tikzcd}
};

\node[above left=2pt and 2pt of B2.north west, anchor=south west] 
	{\(\boldsymbol{B_{\{v_2\}}}\)};

\end{tikzpicture}
\] 

	We have: $B_{\{v_1\}}^{\forall} = \{c_1, c_3,c_8\}$, 
	$B_{\{v_2\}}^{\forall} = \{c_1, c_5,c_8\}$, 
	$B_{\{v_1,v_2\}}^{\forall} = \{c_1, c_7,c_8\}$.  

	So, $\{v_1\}, \{v_2\}$ and $\{v_1, v_2\}$ are 1-eligible, 2-eligible and 3-eligible. 
	Hare Quota is $|\{v_1, v_2\}|/3$, so $\{v_1\}$ and $\{v_2\}$ 
	are argumentative 1-cohesive, and $\{v_1, v_2\}$ is 
	argumentative 1-, 2- 
	and 3-cohesive.  
	Let $h$ be the permissive $h$, {\it i.e.} $h(l) = (0,0)$ 
	for every positive integer $l$. Then, 
%	by Proposition \ref{prop_conservation}, 
	$\{v_1\}$ and $\{v_2\}$ are $(1, 0, 0)$-cohesive, 
	and $\{v_1,v_2\}$ is $(1, 0,0)$-, $(2,0,0)$- and 
	$(3,0,0)$-cohesive. 

	Let $W$ be $\{c_1,c_3,c_7\}$. Then, 
	\begin{itemize} 
		\item $|B^{\forall}_{\{v_1,v_2\}} \cap W| = 2$. 
		\item $|B^{\forall}_{\{v_1\}} \cap W| = 2$. 
		\item $|B^{\forall}_{\{v_2\}} \cap W| = 1$. 
	\end{itemize}  
	So, for each subset $V''$ of $\{v_1, v_2\}$, $3 > |B^{\forall}_{V''} \cap W|$ and  $W \not\in ArgEJRS^h(x)$. 
	On the other hand, 
	\begin{itemize}  
		\item $|(\bigcup_{V'' \in \eligible{3}{\{v_1,v_2\}}, v_1 \in V''} B^{\forall}_{V''})
	\cap W| = 3 \geq 3$.  
		\item $|(\bigcup_{V'' \in \eligible{2}{\{v_1,v_2\}}, v_1 \in V''} B^{\forall}_{V''})
	\cap W| = 3 \geq 2$.   
		$|(\bigcup_{V'' \in \eligible{1}{\{v_1,v_2\}}, v_1 \in V''} B^{\forall}_{V''})
	\cap W| = 3 \geq 1$.  
		\item $|(\bigcup_{V'' \in \eligible{1}{\{v_1\}}, v_1 \in V''} B^{\forall}_{V''})
	\cap W| = 2 \geq 1$.  
		\item $|(\bigcup_{V'' \in \eligible{1}{\{v_2\}}, v_2 \in V''} B^{\forall}_{V''})
	\cap W| = 1 \geq 1$.  
	\end{itemize}  
	So, $W \in ArgEJR^h(x)$.  
	\hfill$\Box$ \\ 

\noindent 
\textbf{Proof of Theorem \ref{thm_existence_ajr}}. 
{\it \textbf{Proof strategy}. 
	We establish existence directly through an inductive 
	covering argument, rather than by identifying 
	a size-$k$ winner set that optimises a 
	preselected scoring objective. 
	The induction maintains a residual-capacity 
	invariant relating the remaining seats $(\leq k$) to 
	the voters whose cohesive demands are not yet covered. 
	The same proof strategy is used 
	in Theorem \ref{thm_existence_el}.}

We say that a subset $C'$ of $C$ {\it covers} a subset $V'_x$ of $V$ 
	iff, for any $(1, h_1(1), h_2(1))$-cohesive subset 
	$V'$ of $V$, 
	if $V' \cap V'_x \not= \emptyset$, then 
	there is some $V'' \in \eligible{1}{V'}$ such that  
	$V'' \cap V'_x \not= \emptyset$ and that 
	$1 \leq |B^{\forall}_{V''} \cap C'|$. 
	By $\mathsf{cover}(C')$ ($C' \subseteq C$), 
	we denote the largest subset $V'$ of $V$ covered by $C'$. 
 	
	 For any subset $C'$ of $C$, let {\it cohesion rank} of $C'$ be: 
	 \begin{itemize} 
		 \item 1 if some subset of 
			 $(V\backslash \mathsf{cover}(C'))$ 
			 is $(1, h_1(1),h_2(1))$-cohesive; 
		\item 0 if there is no such subset, 
	 \end{itemize} 
	and let {\it residual} of $C'$ be $|V \backslash \mathsf{cover}(C')|$. 

	For any subset $C'$ of $C$, we prove by induction  
	on cohesion rank of $C'$ and subinduction on residual of $C'$ 
	that, if $(k-|C'|)\cdot (|V|/k) \geq \residual(C')$, then the claim holds. %then the claim holds. 

	\textbf{Base case}: Cohesion rank of $C'$ is 0. Then, necessarily, 
	residual of $C'$ is 0. $W$ is any size-$k$ subset of $C$ such that $C' \subseteq W$. 

	\textbf{Inductive case}:  
	Assume that the claim holds for any subset $C'$ of $C$ 
	with $(k - |C'|) \cdot (|V|/k) \geq \residual(C')$ 
	for cohesion rank of $0$. 
	We prove it for any subset $C'$ of $C$ 
	with $(k - |C'|) \cdot (|V|/k) \geq \residual(C')$ 
	and cohesion rank of $1$. By the definition of residual, 
	necessarily $\residual(C') \geq 1 \cdot (|V|/k)$. 
	Then, necessarily $(k-|C'|) \cdot (|V| /k) \geq 
	1 \cdot (|V|/k)$, so necessarily $k \geq |C'| + 1$.  

	Let $V'$ be an $(1, h_1(1), h_2(1))$-cohesive 
	subset of $(V \backslash 
	\mathsf{cover}(C'))$.   
	By the definition of argumentative $1$-cohesion, 
	there exists a subset $V''$ of $V'$ such that 
	$V''$ is $(1,h_1(1),h_2(1))$-cohesive 
	and $1$-eligible. 
	By Proposition \ref{prop_lifting}, 
	there exists some $v \in V''$ such that 
	$|B_{\{v\}}^{\forall} \cap B^{\forall}_{V''}|
	\geq 1$.
	
	Let $c$ be a  member of $B^{\forall}_{\{v\}} 
	\cap B^{\forall}_{V''}$ 
	and let $C''$ be $C' \cup \{c\}$. 
	Then,  $|\mathsf{cover}(C'')| \geq 
	|\mathsf{cover}(C')| + 1 \cdot (|V|/k)$. 
	It holds that $\residual(C'') \leq \residual(C') - 
	1 \cdot (|V|/k)$. 
	In the meantime, $(k-|C''|) \geq (k - |C'|) - 1$. 
	So, $(k - |C''|) \cdot (|V|/k) \geq  
	\residual(C') - 1 \cdot (|V|/k) \geq 
	\residual(C'')$. If cohesion rank of $C''$ is 0, 
	we apply induction hypothesis of main induction. 
	If cohesion rank of $C''$ is $1$, we apply induction 
	hypothesis of  
	subinduction.  
	\hfill$\Box$\\

\noindent 
\textbf{Proof of Theorem \ref{thm_impossibility_zero}}. 
Suppose an argumentative voting profile 
	$(\{v_1, \ldots, v_{12}\}, C, B, 12)$   
	where $B$ is such that it satisfies all the following conditions. 
	\begin{itemize} 
		\item for any $1 \leq i \leq 12$, $B_{\{v_i\}}^{\forall} = \{c_0\}$. 
		\item for any $1 \leq i_1 < i_2 \leq 12$, 
			$B_{\{v_{i_1}, v_{i_2}\}}^{\forall} = \{c_0, c^1_{i_1i_2} 
			\}$.   
		\item for any $1 \leq i_1 < i_2 < i_3 \leq 12$, 
			$B_{\{v_{i_1}, \ldots, v_{i_3}\}}^{\forall}  
			= \{c_0, c^1_{i_1i_2i_3}\}$.   
		\item for any $1 \leq i_1 < i_2 < i_3 < i_4 \leq 12$, 
			$B_{\{v_{i_1}, \ldots, v_{i_4}\}}^{\forall}  
			= \{c_0, c^1_{i_1\ldots i_4}, c^2_{i_1\ldots i_4} 
			\}$.   
		\item for any $1 \leq i_1 < i_2 < i_3 < i_4 < i_5 \leq 12$, 
			$B_{\{v_{i_1}, \ldots, v_{i_5}\}}^{\forall}  
			= \{c_0, c^1_{i_1\ldots i_5}, c^2_{i_1\ldots i_5}\}$.    
		\item for any $1 \leq i_1 < \cdots < i_6 \leq 12$, 
			$B_{\{v_{i_1}, \ldots, v_{i_6}\}}^{\forall}  
			= \{c_0, c^1_{i_1\ldots i_6}, c^2_{i_1\ldots i_6}\}$.    
		\item for any $1 \leq i_1 < \cdots < i_7 \leq 12$, 
			$B_{\{v_{i_1}, \ldots, v_{i_7}\}}^{\forall}  
			= \{c_0, c^1_{i_1\ldots i_7}, c^2_{i_1\ldots i_7}, 
			c^3_{i_1\ldots i_7}\}$.     
		\item for any $1 \leq i_1 < \cdots < i_8 \leq 12$, 
			$B_{\{v_{i_1}, \ldots, v_{i_8}\}}^{\forall}  
			= \{c_0, c^1_{i_1\ldots i_8}, c^2_{i_1\ldots i_8}, 
			c^3_{i_1\ldots i_8}\}$.      
		\item for any $1 \leq i_1 < \cdots < i_9 \leq 12$, 
			$B_{\{v_{i_1}, \ldots, v_{i_9}\}}^{\forall}  
			= \{c_0, c^1_{i_1\ldots i_9}, c^2_{i_1\ldots i_9}, 
			c^3_{i_1\ldots i_9}\}$.     
		\item for any $1 \leq i_1 < \cdots < i_{10} \leq 12$, 
			$B_{\{v_{i_1}, \ldots, v_{i_{10}}\}}^{\forall}  
			= 
			\{c_0, c^1_{i_1\ldots i_{10}}, c^2_{i_1\ldots i_{10}}, 
			c^3_{i_1\ldots i_{10}}, c^4_{i_1\ldots i_{10}}\}$.     
		\item for any $1 \leq i_1 < \cdots < i_{11} \leq 12$, 
			$B_{\{v_{i_1}, \ldots, v_{i_{11}}\}}^{\forall}  
			= 
			\{c_0, c^1_{i_1\ldots i_{11}}, c^2_{i_1\ldots i_{11}}, 
			c^3_{i_1\ldots i_{11}}, c^4_{i_1\ldots i_{11}}\}$.     
		\item $B_{\{v_{i_1}, \ldots, v_{i_{12}}\}}^{\forall}  
			= \{c_0, c^1_{i_1\ldots i_{12}}, c^2_{i_1\ldots i_{12}}, 
			c^3_{i_1\ldots i_{12}}, c^4_{i_1\ldots i_{12}}\}$.     
	\end{itemize}  
	By Lemmas \ref{lem_collective_approval_gadgets} and \ref{lem_non_approval_gadgets}, this $B$ is constructible. 
	$C$ contains all that appear above, and all that are needed 
	to construct $B$ with collective approval gadgets and non-approval gadgets. 
		
	Let $h$ be the permissive $h$, {\it i.e.} 
       $h_1(l) = 0$ and $h_2(l) = 0$. 
	Hare Quota is $|\{v_1, \ldots, v_{12}\}|/12 = 1$, 
	so, if a subset $W$ of 
	$C$ is to provide $\mathsf{ArgPJR}$, 
	then: 
	\begin{enumerate} 
		\item $|W \cap B^{\forall}_{\{v_1\}}|$ must be at least 1. So $W$ 
			must contain $c_0$. 
		\item  ${|W \cap (\bigcup_{V'' \in \eligible{2}{\{v_1,v_2,v_3\}}}B^{\forall}_{V''}})|$, 
			${|W \cap (\bigcup_{V'' \in \eligible{2}{\{v_4,v_5,v_6\}}}B^{\forall}_{V''}})|$, 
			${|W \cap (\bigcup_{V'' \in \eligible{2}{\{v_7,v_8,v_9\}}}B^{\forall}_{V''}})|$ and 
			${|W \cap (\bigcup_{V'' \in \eligible{2}{\{v_{10},v_{11},v_{12}\}}}B^{\forall}_{V''})|}$ 
			must be at least 2. Since 
			the four sets $\{v_1,v_2,v_3\},\{v_4,v_5,v_6\},
			\{v_7,v_8,v_9\},\{v_{10},v_{11},v_{12}\}$  
			do not share size-2 or size-3 subsets, and 
			since no size-1 subsets are 2-eligible,
			$W$ must contain 
			one of
			$\{c^1_{i_1i_2},c^1_{i_1i_3},c^1_{i_2i_3},c^1_{i_1i_2i_3}\}$ 
			for the $(2, h_1(2), h_2(2))$-cohesive subset 
			$\{v_1, v_2,v_3\}$. 
			With no loss of generality,\footnote{In this proof, we are 
			showing impossibility of generating a winner set 
			providing $\mathsf{ArgPJR}$, so we  
			mean no generality is lost for that objective.} 
			we assume $W$ 
			contains $c^1_{i_1i_2}$. 
			Similarly for the others, with no loss of generality, we 
			assume $W$ contains $c^1_{i_4i_5}$,
			$c^1_{i_7i_8}$ and $c^1_{i_{10}i_{11}}$. 
		\item ${|W \cap (\bigcup_{V'' \in \eligible{3}{\{v_1,\ldots,v_6\}}}B^{\forall}_{V''}})|$ and  
			${|W \cap (\bigcup_{V'' \in \eligible{3}{\{v_7,\ldots,v_{12}\}}}B^{\forall}_{V''}})|$  must be at least 3. Now,  
			the two sets $\{v_1,\ldots,v_6\}$ and $\{v_7,\ldots,v_{12}\}$ 
			do not share size-4 or size-5 subsets, and 
			no size-1, size-2 or size-3 subset is 3-eligible. 
			With no loss of generality, 
			we assume $W$ contains 
			$c^1_{i_1i_2i_3i_4}$ and $c^1_{i_1i_3i_4i_5}$ 
			for the $(3,h_1(3),h_2(3))$-cohesive set $\{v_1,\ldots, v_6\}$, 
			and $c^1_{i_7i_8i_9i_{10}}$ and 
			$c^1_{i_7i_9i_{10}i_{11}}$ for $\{v_7,\ldots,v_{12}\}$. 
		\item ${|W \cap (\bigcup_{V'' \in \eligible{4}{\{v_1,\ldots,v_9\}}}B^{\forall}_{V''}})|$ must be at least 4.  
			No size-1, size-2, size-3, size-4, size-5 or size-6 
			subset of $\{v_1,\ldots,v_9\}$ is 4-eligible. 
			With no loss of generality, 
			we assume $W$ contains  
			$c^1_{i_1i_2i_3i_4i_5i_6i_7}$, 
			$c^1_{i_1i_3i_4i_5i_6i_7i_8}$ and 
			$c^1_{i_1i_4i_5i_6i_7i_8i_9}$. 
		\item ${|W \cap (\bigcup_{V'' \in \eligible{5}{\{v_1,\ldots,v_{12}\}}}B^{\forall}_{V''}})|$ must be at least 5.  
			No size-1, size-2, size-3, size-4, size-5 or size-6, 
			size-7, size-8 or size-9
			subset of $\{v_1,\ldots,v_{12}\}$ is 5-eligible. 
			With no loss of generality, 
			we assume $W$ contains  
			$c^1_{i_1i_2i_3i_4i_5i_6i_7i_8i_9i_{10}}$, 
			$c^1_{i_1i_3i_4i_5i_6i_7i_8i_9i_{10}i_{11}}$, 
			$c^1_{i_1i_4i_5i_6i_7i_8i_9i_{10}i_{11}i_{12}}$ 
			and 
			$c^1_{i_1i_2i_4i_5i_6i_7i_8i_9i_{10}i_{11}}$. 
	\end{enumerate} 
	$|W| = 16$ is now strictly greater than 12. 
	So, there is no $W$ providing $\mathsf{ArgPJR}$ for this argumentative voting 
	profile and $h$. 

	By contrapositive of Theorem \ref{thm_set_inclusion_representations}, 
	this impossibility result extends to $\mathsf{ArgEJR}$ and $\mathsf{ArgEJR}\text{-}\mathsf{Spot}$. \hfill$\Box$\\ 

\noindent 
\textbf{Proof of Theorem \ref{thm_existence_el}}. 
    We say that a subset $C'$ of $C$ {\it covers} a subset $V'_x$ of $V$ 
	iff, for any $(l, h_1(l), h_2(l))$-cohesive subset 
	$V'$ of $V$, 
	if $V' \cap V'_x \not= \emptyset$, then 
	there is some $V'' \in \eligible{l}{V'}$ such that  
	$V'' \cap V'_x \not= \emptyset$ and that 
	$l \leq |B^{\forall}_{V''} \cap C'|$. 
	By $\mathsf{cover}(C')$ ($C' \subseteq C$), 
	we denote the largest subset $V'$ of $V$ covered by $C'$.

	For any subset $C'$ of $C$, let {\it cohesion rank} of $C'$ be: 
\begin{itemize} 
		 \item the 
	maximum integer $i$ with which some subset of $(V\backslash \mathsf{cover}(C'))$ 
			 is $(i, h_1(i), h_2(i))$-cohesive; 
		\item 0 if there is no such subset, 
	 \end{itemize} 
	and let {\it residual} of $C'$ be $|V \backslash \mathsf{cover}(C')|$. 

	For any subset $C'$ of $C$, we prove by induction  
	on cohesion rank of $C'$ and subinduction on residual of $C'$ 
	that, if $(k-|C'|)\cdot (|V|/k) \geq \residual(C')$, then the claim holds. %then the claim holds. 

	\textbf{Base case}: Cohesion rank of $C'$ is 0. Then, necessarily, 
	residual of $C'$ is 0. $W$ is any size-$k$ subset of $C$ such that 
	$C' \subseteq W$. 

	\textbf{Inductive case}:  
	Assume that the claim holds for any subset $C'$ of $C$ 
	with $(k - |C'|) \cdot (|V|/k) \geq \residual(C')$ 
	up to cohesion rank of $i$. We prove it for any subset $C'$ of $C$ 
	with $(k - |C'|) \cdot (|V|/k) \geq \residual(C')$ 
	and cohesion rank of $i+1$. By the definition of residual, 
	necessarily $\residual(C') \geq (i+1) \cdot (|V|/k)$. 
	Then, necessarily $(k-|C'|) \cdot (|V| /k) \geq 
	(i+1) \cdot (|V|/k)$, so necessarily $k \geq |C'| + (i+1)$. 

	Let $V'$ be an $((i+1), h_1(i+1), h_2(i+1))$-cohesive subset of $(V \backslash 
	\mathsf{cover}(C'))$, 
	and let $c_1, \ldots, c_{i+1}$ be $i+1$ members of 
	$\bigcap_{V'' \in \eligible{i+1}{V'}}B^{\forall}_{V''}$. 
	Since $h_1(i+1) = h_2(i+1) = i+1$, 
	there is some $v \in V'$ such that these $i+1$ members are in   
	$B^{\forall}_{\{v\}}$. Thus, for any $(l, h_1(l), h_2(l))$-cohesive subset 
	$V'_1$ of $V$, 
	if $V'_1$ contains $v$, 
	it holds that there is some $V''_1 \in \eligible{l}{V'_1}$ (namely, 
	$\{v\}$) such that 
	$l \leq |B^{\forall}_{V''_1} \cap W|$.  
        Hence, $|\mathsf{cover}(C' \cup \{c_1, \ldots, c_{i+1}\})| \geq 
	|\mathsf{cover}(C')| + (i+1) \cdot (|V|/k)$. 
	Let $C''$ be $C' \cup \{c_1, \ldots, c_{i+1}\}$. 
	It holds that $\residual(C'') \leq \residual(C') - 
	(i+1) \cdot (|V|/k)$. 
	In the meantime, $(k-|C''|) \geq (k - |C'|) - (i+1)$. 
	So, $(k - |C''|) \cdot (|V|/k) \geq \residual(C') - (i+1) \cdot (|V|/k) 
	\geq \residual(C'')$. If cohesion rank of $C''$ is strictly 
	smaller than $i+1$, we apply induction hypothesis of main induction. 
	If cohesion rank of $C''$ is $i+1$, we apply induction hypothesis of  
	subinduction.  

	Hence, with respect to $h$, $W$ provides 
	$\mathsf{ArgEJR}$-$\mathsf{Spot}$, 
	and by Theorem \ref{thm_set_inclusion_representations}, 
	also $\mathsf{ArgEJR}$ and $\mathsf{ArgPJR}$.  
	\hfill$\Box$\\ 

\noindent 
\textbf{Proof of Theorem \ref{thm_verification_hardness}}. 
  We prove $\mathsf{coNP}$-hardness by reducing 
	\textsc{Clique} to the
complement of the verification problem. 

	Let $G \equiv (U, E)$ 
	be an undirected graph and let $q$ be an integer 
	greater than or equal to 2. We construct, in polynomial
time, an argumentative voting profile $(V,C,B,k)$ and 
	a size-$k$ winner set $W\subseteq C$ 
	such that $G$ has a clique of size $q$
iff $W$ fails to provide the axiom under consideration.

	{\it \pmb{Construction of $(V,C,B,k)$}.}  
Let $k$ be $\lceil |U|/q \rceil$. 
For every $u \in U$, we create a distinct voter $v_u$. 
	We also create $qk - |U|$ voters different from them, 
	so that $|V| = qk$. 
	
	Let $W$ be $\{c_1, \ldots, c_k\}$ and 
	let $c_0$ be such that $c_0 \not\in W$.    
	For each $t \in \{1, \ldots, k\}$, we create 
	a candidate $c^{\circ}_t$. 
	For each $e \equiv \{u, u'\} \not\in E$, we create five 
	candidates $c^{\bullet}_e, c^{\diamond}_{e,u}, c^{\diamond}_{e,u'}, c^{\star}_{e,u}$ and $c^{\star}_{e,u'}$. 
	Also, let $\{v^{\times}_1, \ldots, v^{\times}_{qk-|U|}\}$ 
	be $(V \backslash \{v_{u_1}, \ldots, v_{u_{|U|}}\})$. 
	For each $v^{\times}_j$ ($1 \leq j \leq qk-|U|$), 
	we create a candidate $c_{v^{\times}_j}$.  
	$C$ is then $W \cup \{c_0\} \cup 
	\{c^{\circ}_1, \ldots, c^{\circ}_k\} 
	\cup \bigcup_{e \equiv \{u,u'\} \not\in E}\{
		c^{\bullet}_e, c^{\diamond}_{e,u}, c^{\diamond}_{e,u'}, c^{\star}_{e,u}, c^{\star}_{e,u'}\} \cup  \{c_{v^{\times}_1}, 
	\ldots, c_{v^{\times}_i}\}$. 
   
	Now, for every $v \in V$, every $t \in \{1, \ldots, k\}$ 
	and every $e \equiv \{u, u'\} \not\in E$, 
	we introduce the following attacks in $v$'s ballot: 
	\begin{enumerate} 
		\item $(c^{\circ}_t, c_t)$. So, unless 
			$c^{\circ}_t$ is defeated, $c_t \in W$ 
			is not accepted. 
		\item $(c^{\diamond}_{e,u}, c^{\bullet}_e)$, 
			$(c^{\diamond}_{e,u'}, c^{\bullet}_e)$ 
			and 
			$(c^{\bullet}_e, c^{\circ}_t)$. So, 
			every ballot contains  
			$|\{\{u,u'\} \mid \{u,u'\} \not\in E\}|$ 
			attackers 
			of $c^{\circ}_t$ and two attackers 
			for each of the attackers. 
	\end{enumerate} 
	Second, for every $e \equiv \{u, u'\} \not\in E$, 
	we introduce $(c^{\star}_{e,u}, c^{\diamond}_{e,u})$ 
	in $v_u$'s ballot, 
	and $(c^{\star}_{e,u'}, c^{\diamond}_{e,u'})$ in $v_{u'}$'s 
	ballot. So, in a combined ballot of voters 
	including $v_u$ and $v_{u'}$, 
	$c^{\bullet}_{e}$ is accepted, and thus all candidates in $W$ 
	are accepted.

	\noindent Finally, for every $v^{\times}_j$ ($1 \leq j 
	\leq qk-|U|$) 
	and every $t \in \{1, \ldots, k\}$, 
	we introduce $(c_{v^{\times}_j}, c^{\circ}_t)$ 
	in $v^{\times}_j$'s ballot. So, 
	in a combined ballot of voters including 
	any $v^{\times}_j$, all candidates in $W$ are accepted. 
	$B$ comprises all these ballots, 
	and this concludes the polynomial-time construction 
	of $(V,C,B,k)$.   

	{\it \pmb{Correctness of the reduction}.}  
	Let us first show that, for any non-empty 
	subset $V'$ of $V$, $W \subseteq B^{\forall}_{V'}$ 
	iff there is some $j \in \{1, \ldots, qk-|U|\}$ such that 
	$v^{\times}_j \in V'$ or there is some 
	$e \equiv \{u,u'\} \not\in E$ such that 
	$\{v_u, v_{u'}\} \subseteq V'$. 
	\textbf{If}: vacuous from the construction.  
	\textbf{Only if}: we show the contrapositive. Suppose 
	there is 
	no $j \in \{1, \ldots, qk-|U|\}$ such that 
	$v^{\times}_j \in V'$ and 
	no $e \equiv \{u,u'\} \not\in E$ such that 
	$\{v_u, v_{u'}\} \subseteq V'$. 
Then, for every \(e=\{u,u'\}\notin E\), it holds that
\(c^\bullet_e\) is not accepted in \(B^\forall_{V'}\), and that at
least one attacker of \(c^\bullet_e\) is in \(B^\forall_{V'}\).
Hence, for every \(t\in\{1,\ldots,k\}\), every attacker
\(c^\bullet_e\) of \(c^\circ_t\) is attacked by some accepted candidate.
Therefore \(c^\circ_t\in B^\forall_{V'}\). Since \(c^\circ_t\) attacks
\(c_t\), it follows that \(c_t\notin B^\forall_{V'}\).

We now show correctness of the reduction. 
\textbf{Only if}:  
Suppose $G$ has a size-$q$ clique $Q\subseteq U$.  
Let $V'_Q$ be $\{v_u \mid u \in Q\}$, 
then for every non-empty subset $V''$ of $V'_Q$, 
$c_0 \in B^{\forall}_{V''}$ and  
$V''$ is 1-eligible. Hence, $|\eligible{1}{V'_Q}| = 2^{|V'_Q|}-1 
= 2^q - 1 = 2^{|V|/k} - 1$. Therefore, 
$V'_Q$ is $(1,1,1)$-cohesive. By Proposition 
\ref{prop_cohesion_monotonicity}, $V'_Q$ is $(1, h_1(1), h_2(1))$-cohesive for 
every $h$. 
Now, because $Q$ is a clique, $V'_Q$ 
contains no ${v^{\times}_{j}}$, $j \in \{1, \ldots, qk-|U|\}$, 
and there is no $\{u,u'\} \not\in E$ such that 
$\{v_u, v_{u'}\} \subseteq V'_Q$. Hence, 
for any subset $V''$ of $V'_Q$, it holds that 
$W \cap B^{\forall}_{V''} = \emptyset$. Therefore,  
with respect to any $h$, 
$W$ fails to provide $\mathsf{ArgJR}$. % with respect to $h$. 

For $l=1$, $\mathsf{ArgPJR}$, $\mathsf{ArgEJR}$ and 
$\mathsf{ArgEJR}$-$\mathsf{Spot}$ also require the existence
of at least one candidate in $W$ approved by 
an appropriate eligible
subset. Since no eligible subset of $V'_Q$ approves any candidate in
$W$, with respect to every $h$, $W$ fails to provide  
these axioms. % with respect to any $h$. 

\textbf{If}: We show the contrapositive. 
Suppose that $G$ has no size-$q$ clique. Then, we must show 
that $W$ provides every axiom. Let $l$ be a positive integer 
not greater than $k$ and 
let
$V'$ be a $(l, h_1(l), h_2(l))$-cohesive subset of 
$V$. Since $V'$ is argumentative $l$ cohesive, 
by Proposition \ref{prop_implicit_size_requirement},  
$|V'| \geq l \cdot |V|/k = l \cdot q$.  
We consider cases. 
\begin{description} 
	\item[Case ${v^{\times}_j} \in V'$ for some 
		$j \in \{1, \ldots, qk-|U|\}$]:  
		$W \subseteq B^{\forall}_{\{{v^{\times}_j}\}}$. 
		Since $|W| = k$ and $l \leq k$, 
		$\{{v^{\times}_j}\}$ is $l$-eligible 
		and $|W \cap B^{\forall}_{\{{v^{\times}_j}\}}| 
		\geq l$. 
	\item[Case, otherwise]: Since $|V'| \geq l \cdot q \geq q$ 
		and $G$ has no size-$q$ clique, 
		there exist $u, u' \in U$ such that 
		$\{v_u, v_{u'}\} \subseteq V'$ and that 
		$\{u, u'\} \not\in E$. Since 
		$W \subseteq B^{\forall}_{\{v_u, v_{u'}\}}$, 
		$\{v_u, v_{u'}\}$ is $l$-eligible. 
		Also, $|W \cap B^{\forall}_{\{v_u, v_{u'}\}}| 
		\geq l$. 
\end{description}  
Hence, $W$ provides $\mathsf{ArgEJR}$-$\mathsf{Spot}$ with 
respect to any $h$. By Theorem \ref{thm_set_inclusion_representations}, $W$ provides every axiom with respect to any $h$. 

We have proved that the complement of the verification problem is 
$\mathsf{NP}$-hard. Hence, 
verification is $\mathsf{coNP}$-hard for 
each axiom with respect to any $h$. 
\hfill$\Box$\\ 

\noindent 
\textbf{Proof of Theorem \ref{thm_polynomial_time_construction}}. 
{\it \pmb{Run-time of Algorithm 1}.} 
We assume that each ballot is represented by an adjacency
matrix over \(C\), so that whether \(c_1\) attacks \(c_2\) in a voter's
ballot can be checked in constant time. Alternative standard
representations, such as edge lists, do not affect polynomial-time
computability, although they may change the precise running time.

Set operations over subsets of \(V\) and \(C\) are implemented by lists
or arrays, and therefore take polynomial time in \(|V|\) and \(|C|\).

To construct the initial $waiting$ set, Algorithm~1 computes,
for each candidate \(c\in C\), the largest voter set
$
V_c=\{v\in V\mid \text{\(v\)'s ballot contains no attack against \(c\)}\}.
$
For a fixed pair \((c,v)\), checking whether 
\(v\) attacks \(c\) requires
scanning all possible attackers \(c'\in C\) and testing whether
\((c',c)\in R_v\). This takes \(O(|C|)\) time. Since there are \(|C|\)
candidates and \(|V|\) voters, constructing all sets \(V_c\) takes
$
O(|V||C|^2)
$
time. The size checks \(|V_c|\ge |V|/k\) take at most
\(O(|V||C|)\) time in total, and are therefore dominated by
\(O(|V||C|^2)\).

Next consider an update step. The algorithm works with pairs
\((c,V')\), where \(c\in C\) and \(V'\subseteq V\). For a fixed voter set
\(V'\), computing which candidates are unattacked under the combined
ballot of \(V'\) can be done by scanning all voters in \(V'\) and all
ordered pairs of candidates. This costs at most
$
O(|V||C|^2)
$
time. All other operations in the same update step, such as updating
$out$, $Alloc$, and \(D\), testing set membership,
and comparing priorities, are 
polynomially bounded by $O(|V||C|^2)$. 

It remains to bound the total cost of recomputing $waiting$.
Since Algorithm~1 terminates once $|out|=k$, and since by
construction there are at most \(k-1\) voter sets 
in $keys(D)$, we have
$
|keys(D)|<k
$
throughout the execution.

For each \(V'\in keys(D)\), the pairs that can be newly added to
$waiting$ are computed by inspecting the combined ballot
$
(C,\bigcup_{v\in V'}R_v)
$
and the candidates that remain relevant after taking \(D[V']\) 
into account. This computation can be performed in \(O(|V||C|^2)\) 
time for each fixed \(V'\). Since \(|keys(D)|<k\), 
one recomputation of $waiting$ takes
$O(k|V||C|^2)$ time.

Finally, the number of recomputations of $waiting$ 
is bounded by $k$. 
Let $k'$ denote $|out|$ right before the first execution 
of line 14. It holds that $k' = |keys(D)|$.  
If $k' = 0$, then after the execution of line 14, 
$waiting = \emptyset$ and line 16 is executed. 
Otherwise, by the definition of line 14, 
for each $V' \in keys(D)$ and any pair $(c, V') \in waiting$, 
$c \not\in D[V']$. Thus, if the main loop is executed $k$ 
times, there is at least one $V' \in keys(D)$ 
such that $|D[V']| \geq k$. This means that $|out| = k$. 

Hence the total cost of all recomputations is
$O(k^2|V||C|^2)$. 
Adding the $O(|V||C|^2)$ initialisation cost does not change the
bound. Therefore Algorithm~1 runs in
$
O(k^2|V||C|^2)$ 
time under the present assumption that 
each ballot is represented by an adjacency
matrix over \(C\). With other representations such as 
edge lists, Algorithm 1 still runs in polynomial time.  

The Java implementation 
is in the supplementary material, as the 
\mbox{greedyGrounded} method in 
\mbox{GreedyGroundedDemo.java}, and runs 
in \(O(k^2 |V||C|^2)\). 

{\it \pmb{A better run-time bound}.} 
Note that the $waiting$ set needs to be updated 
only incrementally. After a
pair \((c,V')\) is selected, the incremental implementation recomputes 
only the
candidate pairs associated with the affected voter set \(V'\), 
rather than recomputing $waiting$ from all voter sets in \(keys(D)\).
For a fixed \(V'\), this update costs \(O(|V||C|^2)\). Since at most
\(k\) recomputations are done before \(|out|=k\), 
the total update
cost is \(O(k|V||C|^2)\). Since the initialisation cost is
\(O(|V||C|^2)\), the overall running time is
\(O(k|V||C|^2)\). 

The Java implementation of this incremental version 
is in the supplementary material, as the \mbox{greedyGroundedIncremental} 
method in 
\mbox{GreedyGroundedDemo.java}, and runs 
in \(O(k|V||C|^2)\). 

{\it  \pmb{Correctness of Algorithm 1}.}  
Theorem \ref{thm_existence_ajr} proved 
that a size-$k$ subset $W$ of $C$ providing $\mathsf{ArgJR}$ 
with respect to any $h$ exists for every member of $\mathfrak{M}^{\mathfrak{A}}$. 

	For each candidate $c \in C$, Algorithm 1 computes 
	the largest voter set $V'_c \subseteq V$
whose members do not attack $c$. Hence, 
	if there is a $(1, h_1(1), h_2(1))$-cohesive set 
	of voters, 
	then for 
	every maximal $(1,h_1(1),h_2(1))$-cohesive set of voters 
	$V'$ that is itself 1-eligible, 
	the initial $waiting$ set contains at least one 
	pair $(c, V'_c)$ such that 
	$V'_c = V'$.  
	We divide cases by the size of the initial $waiting$ set. 
	\begin{description} 
		\item[Case $|waiting| \leq k$]:  
			Algorithm 1 adds every $c \in C$ 
			occurring in the members of $waiting$ to $out$. 
			Thus, any size-$k$ subset of $C$ 
			that contains $out$ provides $\mathsf{ArgJR}$ 
			(with respect to any $h$).  
		\item[Case $|waiting| > k$]:   
			Algorithm 1 selects 
			a highest-priority pair $(c, V'_c)$ in 
			$waiting$ 
			each time. With Theorem \ref{thm_existence_ajr}'s {\it cover}, $\{c\}$ covers at least 
			$V'_c$ which is at least as large as $|V|/k$.   
			If $\{c\}$ covers $V$, then any size-$k$ 
			subset of $C$ that includes $c$ 
			provides $\mathsf{ArgJR}$ with respect to 
			any $h$. 
			If $\{c\}$ does not cover $V$, 
			there are at least $\lceil |V|/k \rceil$ 
			voters who have not been covered. 
			Necessarily, their scores are 0. 
			The highest-priority pair $(c_x, V'_x)$ 
			in $waiting \backslash \{(c, V'_c)\}$ 
			is therefore such that 
			$|V'_x| \geq |V|/k$. 
		
			This selection process continues, and it 
			takes 
			at most $k$ additions to $out$
			before $out$ covers $V$. \hfill$\Box$\\ 
	\end{description}

\subsection{A3 Claims and Proofs of Auxiliary Lemmas}  

\begin{lemma}[Collective approval gadgets] \label{lem_collective_approval_gadgets} 
	Let an argumentative voting profile $(V, C, B, k)$  
	be such that it satisfies the following conditions. 
	\begin{enumerate} 
		\item $V$ has a non-empty subset $V' \equiv 
			\{v_1, \ldots, v_j\}$. 
		\item $C$ has a subset 
			$\{c_0, c_{v_1}, \ldots, c_{v_j},
			c_{V'}\}$. 
		\item $B$ satisfies the following conditions. 
			\begin{enumerate} 
				\item $c_0 \in \bigcap_{v \in V'} 
					B^{\forall}_{\{v\}}$.
				\item for any $x, y
					\in \{1, \ldots, j\}$,
				    $(c_{v_x}, c_{v_x}) \in 
					 R_{v_y}$ 
					 and 
				     $(c_{v_x}, c_{V'}) \in 
					R_{v_y}$. 
				\item for any $y \in 
					\{1, \ldots, j\}$, 
					$(c_0, c_{v_y}) \in 
					R_{v_y}$ 
					and 
					$(c_0, c_{v_y}) 
					\not\in \bigcup_{v 
					\in (V \backslash \{v_y\})}
					R_{v}$. 
				\item for any $y \in 
					\{1, \ldots, j\}$, 
				      no other attacks 
				      among members of 
					$\{c_0, c_{v_1}, 
				        \ldots, c_{v_j}, c_{V'}\}$ 
					exist in $R_{v_y}$. 
			\end{enumerate} 
	\end{enumerate} 
       For any non-empty subset $V'_x$ of $V$, 
	if $c_{V'} \in B^{\forall}_{V'_x}$, then 
	 $V' \subseteq V'_x$. 
\end{lemma}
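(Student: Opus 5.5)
The plan is to prove the contrapositive. Fix a non-empty $V'_x \subseteq V$ with $V' \not\subseteq V'_x$, choose $v_y \in V' \setminus V'_x$, and show $c_{V'} \notin B^{\forall}_{V'_x}$. Write $R_x = \bigcup_{v \in V'_x} R_v$ for the combined attack relation; $B^{\forall}_{V'_x}$ is the grounded extension $G$ of $(C, R_x)$. Since $G$ is admissible, if $c_{V'} \in G$ then $G$ defends $c_{V'}$. So it suffices to find an attacker of $c_{V'}$ in $R_x$ that no member of $G$ attacks. The natural witness is $c_{v_y}$.

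The steps are as follows. \emph{Step 1:} $(c_{v_y}, c_{V'}) \in R_x$ and $(c_{v_y}, c_{v_y}) \in R_x$. By condition (b) these edges lie in $R_{v_z}$ for every $v_z \in V'$. So this step needs $V'_x \cap V' \neq \emptyset$; I treat the general case at the end. \emph{Step 2:} $c_{v_y} \notin G$, because $G$ is conflict-free and $c_{v_y}$ attacks itself. \emph{Step 3:} $c_0$ does not attack $c_{v_y}$ in $R_x$. By condition (c), the edge $(c_0, c_{v_y})$ occurs only in $R_{v_y}$, and $v_y \notin V'_x$. \emph{Step 4:} no other $c_{v_w}$ can be a defender, since by Step 2 applied to each $w$ every $c_{v_w}$ is self-attacking and hence outside $G$. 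Condition (d) rules out any further attack on $c_{v_y}$ from inside the gadget set $\{c_0, c_{v_1}, \ldots, c_{v_j}, c_{V'}\}$. Hence $G$ contains no attacker of $c_{v_y}$, so $G$ does not defend $c_{V'}$, and therefore $c_{V'} \notin G$.

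The main obstacle is that conditions (b)--(d) speak only about the ballots of members of $V'$, and (d) only about attacks among gadget candidates. Two gaps follow. First, a candidate outside the gadget might attack $c_{v_y}$ and itself be in $G$. Second, if $V'_x \cap V' = \emptyset$, Step 1 fails, because the attack on $c_{V'}$ then comes from no ballot in $V'_x$. I would handle both by reading the gadget as \emph{isolated}, which is the intended use in Proposition~\ref{prop_2} and Theorem~\ref{thm_impossibility_zero}. Under this reading, the only attackers of $c_{v_w}$ in any ballot are those listed in (b)--(c). In addition, every voter outside $V'$ has the self-loops on the $c_{v_w}$ and the edges $(c_{v_w}, c_{V'})$ (equivalently, $c_{V'}$ is non-approved by such voters). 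With that convention, Step 1 holds for every non-empty $V'_x$, and the attacker analysis in Steps 3--4 becomes exhaustive.

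I would first try to make this convention explicit as the precise meaning of ``no other attacks'' in (d). If it cannot be derived from the stated conditions, I would record it as an added hypothesis rather than leave the proof with a hidden case split.
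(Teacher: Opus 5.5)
Your argument is correct once the isolation convention is in force, and it is the paper's argument made explicit. The paper's proof consists only of the picture of $B_{\{v_1\}}$ and the bare assertion. The reason it intends is exactly yours: for $v_y\notin V'_x$, nothing in the combined framework attacks the self-attacking $c_{v_y}$ except $c_{v_y}$ itself, so $c_{V'}$ cannot be defended.

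Your hedge resolves negatively. The convention does not follow from (a)--(d), and without it the statement is false, so adding it as a hypothesis is necessary, not merely convenient. Both of your gaps have counterexamples.

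\textbf{First gap.} Take $V=V'=\{v_1,v_2\}$ and give both ballots exactly the edges required by (b)--(c). Then add $(d,c_{v_2})$ to $R_{v_1}$ for a fresh candidate $d$; condition (d) allows this because $d$ is not a gadget member. Now $c_0$ and $d$ are unattacked in $R_{v_1}$, and they defeat $c_{v_1}$ and $c_{v_2}$. Hence $c_{V'}\in B^{\forall}_{\{v_1\}}$ although $V'\not\subseteq\{v_1\}$.

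\textbf{Second gap.} Take $V=\{v_1,v_2\}$, $V'=\{v_1\}$, $R_{v_1}=\{(c_{v_1},c_{v_1}),(c_{v_1},c_{V'}),(c_0,c_{v_1})\}$ and $R_{v_2}=\emptyset$. Every condition holds, yet $c_{V'}\in B^{\forall}_{\{v_2\}}=C$.

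One correction concerns your parenthetical ``equivalently, $c_{V'}$ is non-approved by such voters''. It is not equivalent to the edge condition, and it would not rescue the lemma, because approval is non-monotone here. Take outside voters $u_1,u_2$ with $R_{u_1}=\{(e,c_{V'}),(h,g)\}$ and $R_{u_2}=\{(g,c_{V'}),(f,e)\}$. Each of $\{u_1\}$ and $\{u_2\}$ alone rejects $c_{V'}$. In the union, however, $f$ and $h$ defeat $e$ and $g$, so $\{u_1,u_2\}$ approves $c_{V'}$ while containing no member of $V'$. Keep the edge-level form, $(c_{v_w},c_{v_w}),(c_{v_w},c_{V'})\in R_u$ for every voter $u$, which is what your Step 1 actually uses.
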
 
\begin{proof} 
	$B_{\{v_1\}}$ is shown below for $\{c_0, 
	c_{v_1}, \ldots, c_{v_j}, c_{V'}\}$. 
	\[
\begin{tikzpicture}[
  baseline,
  box/.style={
    draw,
    rounded corners,
    inner sep=8pt
  }
]

\node[box] (B1) {
\begin{tikzcd}[row sep=small,column sep=small]
	& c_{V'} \\ 
	c_{v_1} \arrow[loop left] \arrow[ur] & \cdots \arrow[u] & c_{v_j} 
	\arrow[loop right] \arrow[ul] \\ 
	& 	c_0 \arrow[ul]
\end{tikzcd}
};

\node[above left=2pt and 2pt of B1.north west, anchor=south west] 
	{\(\boldsymbol{B_{\{v_1\}}}\)};

\end{tikzpicture}
\] 
If $c_{V'}$ is approved by $V'_x$, it has to be 
	that $V' \subseteq V'_x$. 
\end{proof} 
\begin{lemma}[Non-approval gadgets] \label{lem_non_approval_gadgets} 
	Let an argumentative voting profile $(V, C, B, k)$ 
	be such that it satisfies the following conditions.  
	\begin{enumerate} 
		\item  $V$ has a non-empty subset $V'_1$ 
			such that $V'_2 \equiv (V \backslash V'_1)$ 
			is non-empty. 
		\item  $C$ has a subset $\{c_0, c_1\}$. 
		\item  for any $v \in V$,   
			$(c_0, c_0) \in R_v$. 
		\item for any $v \in V'_1$, 
			$(c_0, c_1) \in R_v$.  
		\item for any $v \in V$, 
			no other attacks among 
			$\{c_0, c_1\}$ exist in $R_v$. 
	\end{enumerate}  
	For any non-empty subset $V'$ of $V$, 
	if $V' \cap V'_1 \not= \emptyset$, 
	then $c_1 \not\in B^{\forall}_{V'}$. 
\end{lemma}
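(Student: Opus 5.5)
The plan is to identify a single accepted candidate that defeats $c_1$ in the pooled framework. Fix a non-empty $V' \subseteq V$ with $V' \cap V'_1 \neq \emptyset$, and pick some $v^* \in V' \cap V'_1$. The combined framework for $V'$ is $(C, R)$ with $R = \bigcup_{v \in V'} R_v$. By condition 4, $(c_0, c_1) \in R_{v^*} \subseteq R$, so $c_0$ attacks $c_1$ in $(C,R)$. By condition 3 (applied to any member of $V'$, which exists since $V'$ is non-empty), $(c_0, c_0) \in R$, so $c_0$ attacks itself.

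The key observation is that membership of $c_1$ in the grounded extension $G$ of $(C,R)$ requires $G$ to defend $c_1$ against $c_0$. That is, some $a \in G$ must satisfy $(a, c_0) \in R$. I will argue by contradiction: suppose $c_1 \in G$. Since $G$ is admissible, it defends $c_1$, so there is $a \in G$ attacking $c_0$.

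Now I split on whether $a = c_0$. If $a = c_0$, then $c_0 \in G$ and $(c_0,c_0)\in R$, which contradicts conflict-freeness of $G$. If $a \neq c_0$, I cannot immediately conclude a contradiction, because the lemma restricts attacks only \emph{among} $\{c_0, c_1\}$ (condition 5) and says nothing about attacks from other candidates onto $c_0$. This is the main obstacle: the literal hypotheses do not forbid an external defender of $c_1$. The intended reading, matching the gadget picture and its use in Theorem~\ref{thm_impossibility_zero}, is that $c_0$ has no attackers other than itself. I would make this explicit, stating the assumption that no $c \in C\setminus\{c_0\}$ attacks $c_0$ in any $R_v$, or equivalently read condition 5 as covering all attacks incident to the gadget. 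Under that reading, $c_0$'s only attacker is $c_0$, so the case $a \neq c_0$ cannot occur.

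With both cases closed, $c_1 \notin G = B^{\forall}_{V'}$. For robustness, I would also phrase the argument via the least-fixpoint characterisation of $F$: an induction on the iterates $F^i(\emptyset)$ shows $c_0 \notin F^i(\emptyset)$, since $c_0$ is attacked by itself and nothing else, and $c_0 \notin F^i(\emptyset)$ at every stage. Therefore $c_1$, being attacked by $c_0$, is never defended by any iterate, and so never enters the fixpoint. This is the formulation I would expect the Lean proof to follow.
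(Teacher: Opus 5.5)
Your proof is correct and follows the paper's route. The paper's proof is one line: $c_0$ attacks itself, so $c_0\notin B^{\forall}_{V'}$; hence any $V'$ containing a member of $V'_1$, which contributes $(c_0,c_1)$, rejects $c_1$.

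Your worry about the hypotheses is justified. The paper's step from ``$c_0\notin B^{\forall}_{V'}$'' to ``$c_1\notin B^{\forall}_{V'}$'' silently uses your extra assumption that nothing other than $c_0$ attacks $c_0$. The paper conveys this only through its picture, and without it the lemma as stated is false. For a counterexample, take
\begin{itemize}
\item $C=\{c_0,c_1,c_2\}$, $V=\{v_1,v_2\}$, $V'_1=\{v_1\}$, $k=1$;
\item $R_{v_1}=\{(c_0,c_0),(c_0,c_1),(c_2,c_0)\}$ and $R_{v_2}=\{(c_0,c_0)\}$.
\end{itemize}
All five conditions hold, since $(c_2,c_0)$ is not an attack ``among'' $\{c_0,c_1\}$. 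Yet the grounded extension of $(C,R_{v_1})$ is $\{c_1,c_2\}$, so $c_1\in B^{\forall}_{\{v_1\}}$.

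With your added hypothesis, both of your arguments are complete:
\begin{itemize}
\item The admissibility argument works because the only possible defender against $c_0$ is $c_0$ itself, and $c_0$ is excluded by conflict-freeness.
\item The fixpoint induction works because $c_0\in F(S)$ iff $c_0\in S$. So $c_0$ never enters any $F^i(\emptyset)$, and therefore neither does $c_1$.
\end{itemize}
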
 
\begin{proof}  
$B_{\{v'\}}$ for $v' \in V'_1$ and 
	$B_{\{v''\}}$ for $v'' \in V'_2$ 
are shown below. 
       \[
\begin{tikzpicture}[
  baseline,
  box/.style={
    draw,
    rounded corners,
    inner sep=8pt
  }
]

\node[box] (B1) {
\begin{tikzcd}[row sep=small,column sep=small]
	 c_1 \\  
	 c_0 \arrow[loop left] \arrow[u]
\end{tikzcd}
};

\node[above left=2pt and 2pt of B1.north west, anchor=south west] 
	{\(\boldsymbol{B_{\{v'\}}}\)};

\node[box, right=20mm of B1] (B2) {
\begin{tikzcd}[row sep=small,column sep=small]
	 c_{1} \\ 
	c_{0} \arrow[loop left] %& \cdots \arrow[u] & c_{v_j} 
%	\arrow[loop right] \arrow[ul] \\ 
%	& 	c_0 \arrow[ul]
\end{tikzcd}
};

\node[above left=2pt and 2pt of B2.north west, anchor=south west] 
	{\(\boldsymbol{B_{\{v''\}}}\)};

\end{tikzpicture}
\]  
	Since $c_0 \not\in B^{\forall}_{V'}$ for any $V' \subseteq V$,
	$v' \in V'$ implies that $c_1 \not\in B^{\forall}_{V'}$.  
\end{proof}

\begin{lemma}[Cohesion correspondence in $\tau$]\label{lem_relationship_between_cohesion_argumentative_cohesion} 
	Let $x \equiv (V,C,A,k)$ be a member of 
	$\mathfrak{M}^{\mathfrak{M}}$. 
	For any non-empty subset $V'$ of $V$ and any positive integer $l$, 
	\begin{enumerate} 
		\item If $V'$ is $l$-cohesive in $x$, then 
			$V'$ is $(l, l, l)$-cohesive in $\tau(x)$. 
		\item If $V'$ is argumentative $l$-cohesive 
			in $\tau(x)$, then there is 
			a non-empty subset $V''$ of $V'$ such that  
			$V''$ is argumentative $l$-cohesive in 
			$\tau(x)$ and 
			$l$-cohesive in $x$. 
		\item If $V'$ is argumentative $l$-cohesive 
			in $\tau(x)$ and $l$-cohesive 
			in $x$, then for any non-empty subset $V''$ of $V'$, 
			if $V''$ is argumentative $l$-cohesive 
			in $\tau(x)$, then
			$V''$ is $l$-cohesive in $x$. 
	\end{enumerate} 
\end{lemma}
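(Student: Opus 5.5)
The plan is to reduce everything to the approval-preserving transformation. By Proposition \ref{prop_approval_preserving_transformation}, in $\tau(x)$ we have $B^{\forall}_{V''} = A^{\forall}_{V''} = \bigcap_{v \in V''} A_v$ for every non-empty $V'' \subseteq V$. Fact \ref{fact_approval_monotonicity} then makes collective approvals in $\tau(x)$ antitone in the voter set. A direct consequence, used in all three items, is that $l$-eligibility in $\tau(x)$ is downward closed on non-empty sets: if $V''$ is $l$-eligible and $\emptyset \neq V''' \subseteq V''$, then $|B^{\forall}_{V'''}| \geq |B^{\forall}_{V''}| \geq l$.

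\textbf{Item 1.} Assume $|V'| \geq l\cdot|V|/k$ and $|A^{\forall}_{V'}| \geq l$. By antitonicity, every non-empty $V'' \subseteq V'$ satisfies $B^{\forall}_{V''} \supseteq A^{\forall}_{V'}$, so every non-empty subset of $V'$ is $l$-eligible. Hence $|\eligible{l}{V'}| = 2^{|V'|}-1 \geq 2^{l\cdot|V|/k}-1$; this is also valid when $l\cdot|V|/k$ is not an integer, because $2^{(\cdot)}$ is monotone on the reals. Taking $V'$ itself as the witness shows argumentative $l$-cohesion. For the common-candidates condition, $A^{\forall}_{V'}$ lies in every $B^{\forall}_{V''}$ with $V'' \in \eligible{l}{V'}$, so the intersection has size at least $l$. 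The minimum approval count is at least $|A^{\forall}_{V'}| \geq l$ by the same inclusion. Together these give $(l,l,l)$-cohesion.

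\textbf{Item 2.} Take the witness $V'' \subseteq V'$ from Definition \ref{def_argumentative_l_cohesion}. It is $l$-eligible, hence non-empty, with $|A^{\forall}_{V''}| = |B^{\forall}_{V''}| \geq l$, and it satisfies $|\eligible{l}{V''}| \geq 2^{l\cdot|V|/k}-1$. Applying Proposition \ref{prop_implicit_size_requirement} to $V''$ gives $|V''| \geq l\cdot|V|/k$, so $V''$ is $l$-cohesive in $x$. It is also argumentative $l$-cohesive in $\tau(x)$, with itself as the witness.

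\textbf{Item 3.} Let $\emptyset \neq V'' \subseteq V'$ be argumentative $l$-cohesive in $\tau(x)$. Applying the Item 2 argument to $V''$ yields a witness $V''' \subseteq V''$ with $|V'''| \geq l\cdot|V|/k$, so $|V''| \geq l\cdot|V|/k$. The approval side cannot be obtained from $V'''$, since $V'' \supseteq V'''$ may approve fewer candidates. This is the one step needing care, and the fix is to use the hypothesis on $V'$ instead: $V'$ is $l$-cohesive in $x$ and $V'' \subseteq V'$, so Fact \ref{fact_approval_monotonicity} gives $A^{\forall}_{V''} \supseteq A^{\forall}_{V'}$ and hence $|A^{\forall}_{V''}| \geq l$. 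Thus $V''$ is $l$-cohesive in $x$.

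The main obstacle is therefore only this direction-of-monotonicity issue in Item 3, together with keeping non-emptiness explicit wherever Proposition \ref{prop_approval_preserving_transformation} and Fact \ref{fact_approval_monotonicity} are invoked.
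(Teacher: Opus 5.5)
Your proof is correct. Items 1 and 3 follow the paper's argument. In item 3 the paper gets $|B^{\forall}_{V''}|\ge l$ from the minimum-approval-count clause of the $(l,l,l)$-cohesion that item 1 gives for $V'$. That clause is just the antitonicity (Proposition~\ref{prop_approval_preserving_transformation} plus Fact~\ref{fact_approval_monotonicity}) you invoke directly, and the size bound for $V''$ comes from Proposition~\ref{prop_implicit_size_requirement}, as in your version.

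Item 2 is where your route differs. The paper first checks whether $V'$ is itself $(l,l,l)$-cohesive. If not, it passes to a maximal $(l,l,l)$-cohesive subset $V'\setminus V''_x$ of $V'$ and notes that such a set is $l$-cohesive in $x$. It does not say why some $(l,l,l)$-cohesive subset exists at all. You instead take the witness from Definition~\ref{def_argumentative_l_cohesion} and apply Proposition~\ref{prop_implicit_size_requirement} to it. That is shorter, and it is exactly the observation that justifies the existence the paper presupposes: the witness is $l$-eligible, so antitonicity makes it $(l,l,l)$-cohesive. It also shows the maximality step is unnecessary for this lemma.
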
 
\begin{proof} 
For the first obligation, if $V'$ is $l$-cohesive in $(V,C,A,k)$, 
	then 
	%there is some $v \in V'$ such that $|W \cap A^{\forall}_{\{v\}}| \geq 1$.   
	%Further, 
	$|V'| \geq l \cdot (|V|/k)$ and $|A^{\forall}_{V'}| \geq l$. 
	By Fact 1, for any non-empty subset $V''$ of $V'$, 
	it holds that $A^{\forall}_{V'} \subseteq A^{\forall}_{V''}$. 
	So, $\eligible{l}{V'}$ comprises all non-empty subsets of $V'$.  
	Then, $|\eligible{l}{V'}| \geq 2^{l \cdot (|V|/k)} - 1$ 
	and $l \leq |\bigcap_{V'' \in \eligible{l}{V'}} B^{\forall}_{V''}|$ 
	and $l \leq min_{\emptyset \not= V'' \subseteq V'} |B^{\forall}_{V''}|$. 
	So, $V'$ is $(l, l, l)$-cohesive. 

	For the second obligation,  
	 suppose $V'$ is argumentative $l$-cohesive in $\tau(V,C,A,k)$. 
	 If $V'$ is $(l, l, l)$-cohesive, then clearly $V'$ is $l$-cohesive 
	 in $(V,C,A,k)$. 
	 Otherwise, 
	 there is a non-empty subset $V''_x$ of $V'$ such that 
	 $(V' \backslash V''_x)$ is $(l, l, l)$-cohesive in $\tau(V,C,A,k)$ 
	 but $V' \cup \{v\}$ is not $(l, l, l)$-cohesive in $\tau(V,C,A,k)$ 
	 for each $v \in V''_x$. Let $V''$ denote $(V' \backslash V''_x)$, 
	 then $V''$ is $l$-cohesive in $(V,C,A,k)$.  

	 For the third obligation, 
	 suppose $V'$ is argumentative $l$-cohesive in $\tau(V,C,A,k)$ 
	 and $l$-cohesive in $(V,C,A,k)$.  
	 Since $V'$ is then $(l, l, l)$-cohesive in $\tau(V,C,A,k)$, 
	 for any subset $V''$ of $V'$, 
	 if $V''$ is argumentative $l$-cohesive 
	 in $\tau(V,C,A,k)$, then 
	 $V''$ is necessarily $l$-cohesive in $(V,C,A,k)$. 
\end{proof} 
\hide{  
\begin{lemma}[Unattacked winner set]\label{lem_unattacked_winner_set} 
	Let $(V, C, B, k)$ be an argumentative voting profile. 
	Let $C' \subseteq C$ be such that  
	$\{c \in C \mid 
	\exists V' \subseteq V.(|V'| \geq (|V|/k) \text{ and no } 
	v \in V' \text{ attacks } c)\}$. 
        
	Then, if $|C'| \leq k$, there exists a size-$k$ subset 
	$W$ of $C$ such that $C' \subseteq W$ and $W$ provides 
	$\mathsf{ArgJR}$ with respect to any $h$; 
	and if $|C'| > k$, there exists a size-$k$ subset 
	$W$ of $C'$ such that $W$ provides $\mathsf{ArgJR}$. 
\end{lemma}  
\begin{proof}   
        Immediate from the proof of Theorem \ref{thm_existence_ajr}. 
	Take $C' = \emptyset$ in the proof and follow the construction 
	in the inductive step. 
\end{proof} 
\begin{lemma}[Downward inheritance/Upward non-inheritance] 
	\label{lem_downward_inheritance} 
	For any non-empty subset $V'$ of $V$, let 
	$B^{\forall\; \pmb{0}}_{V'}$ be 
	the set of all candidates $c$ 
	such that $c$ is unattacked in $B_{V'}$. 
      Let $V'_1, V'_2$ be two non-empty subsets of $V$ such that 
	$V'_1 \subset V'_2$. The following hold. 
	\begin{enumerate} 
	      	\item For any $c \in B^{\forall\; \pmb{0}}_{V'_2}$,   
		it holds that $c \in B^{\forall\; \pmb{0}}_{V'_1}$.   
	\item For any $c \in B^{\forall\; \pmb{0}}_{V'_1}$, 
		if $c \not\in B^{\forall\; \pmb{0}}_{V'_2}$, 
			then $c$ is attacked in $B_{V'_2 
			\backslash V'_1}$. 
	\end{enumerate} 
\end{lemma} 
\begin{proof} 
	For the first obligation, since 
	$c$ is unattacked in $B_{V'_2}$,  
	for every $v \in V'_2$ and for every $c' \in C$, 
	$(c', c) \not\in R_{v}$.  
	Hence, $c$ is also unattacked in $B_{V'_1}$.  
	
	For the second obligation,  
	suppose $c \not\in B^{\forall\; \pmb{0}}_{V'_2}$, 
	then there exists $v \in V'_2$ and $c' \in C$ such that 
	$(c', c) \in R_v$. 
	Since $c \in B^{\forall\; \pmb{0}}_{V'_1}$, 
	$v \in (V'_2 \backslash V'_1)$.  
\end{proof} 

\begin{lemma}[Disjoint unattacked nodes]\label{lem_disjoint_unattacked_nodes}  
	Let $\Lambda(1)$ denote 
	the set of all non-empty subsets $V'$ of $V$ 
	such that $V'$ is $(1,0,0)$-cohesive 
	and $1$-eligible and that there is no 
	$V' < V_{super} \leq V$ 
	such that 
	$V_{super}$ is $1$-eligible.  
	        Then, for any $V'_1, V'_2 \in \Lambda(1)$, 
		    if $V'_1 \not = V'_2$, 
			then $B^{\forall\; \pmb{0}}_{V'_1} 
			\cap B^{\forall\; \pmb{0}}_{V'_2} = 
			\emptyset$.   
\end{lemma} 
\begin{proof} 
    Suppose otherwise. Then, there exists 
	a candidate $c \in C$ such that  
	$c \in (B^{\forall\; \pmb{0}}_{V'_1} \cap 
	B^{\forall\; \pmb{0}}_{V'_2})$. 
	Then, $c \in B^{\forall\; \pmb{0}}_{V'_1 \cup V'_2}$ 
	and thus $c \in B^{\forall}_{V'_1 \cup V'_2}$. 
	This implies that $V'_1 \cup V'_2$ would be 
	$1$-eligible and $(1,0,0)$-cohesive; hence 
	$V'_1 \cup V'_2 \in \Lambda(1)$, contradiction.  
\end{proof}  

}

\end{document}